\newcommand{\dist}[3]{d(#1, #2)}
\newcommand{\bl}{\textrm{blocked}}
\newcommand{\un}{\textrm{unassigned}}
\newcommand{\remove}[1]{}
\newcommand{\commento}[1]{}
\newtheorem{definition}{Definition}
\newtheorem{remark}{Remark}
\newtheorem{lemma}{Lemma}
\newtheorem{theorem}{Theorem}
\begin{document}

\title{On the Complexity of Searching in Trees: Average-case Minimization}
\date{}

\author{Tobias Jacobs\\ Albert-Ludwigs-University Freiburg, Germany\\
{\tt jacobs@informatik.uni-freiburg.de} \and  Ferdinando Cicalese\\  University of Salerno, Italy\\
{\tt cicalese@dia.unisa.it} \and Eduardo Laber \\PUC-Rio, Brazil\\ {\tt laber@inf.puc-rio.br} \and Marco Molinaro\\ 
Carnegie Mellon, USA\\ {\tt molinaro@cmu.edu}}

\maketitle
\thispagestyle{empty}

\begin{abstract}
We study the following {\em tree search problem}: in a given  tree $T = (V, E)$ a node has been marked and we want to identify it.
In order to locate the marked node, we can use  edge queries. An edge query $e$ asks  
 in which of the two connected components of $T\setminus e$ the marked node lies.  
The worst-case scenario where one is  interested in minimizing the maximum number of queries is well understood, 
and linear time  algorithms are known for finding an optimal search strategy [Onak {\em et al.\ }FOCS'06, Mozes {\em et al.\ }SODA'08] . 
Here we study the more involved average-case analysis: 
A function  $w : V \rightarrow \mathbb{Z}^+$ is given which defines the likelihood for a node to be the one marked,
and we want the strategy that minimizes the expected number of queries. Prior to this paper, 
very little was known about this natural question and  the complexity of the problem had remained so far an open question.

We close this question  and  prove that the above {\em tree search problem} is ${\cal NP}$-complete even  for the class of trees
with  diameter at most 4.  This results in a complete characterization of the 
complexity of the problem with respect to the diameter size. In fact, for diameter not larger than $3$ 
the problem can be shown to be polynomially solvable using a 
 dynamic programming approach.

In addition we prove 
that  the problem is ${\cal NP}$-complete even for the class of trees of maximum degree at most $16$. 
To the best of our knowledge, the only known result in this direction is that 
 the tree search problem is solvable  in $O(|V| \log |V|)$ time for trees with degree at most 2 (paths).

We match  the above complexity results with a tight algorithmic analysis. We first show that a  natural greedy algorithm 
attains a $2$-approximation. Furthermore, for the bounded degree instances, 
we show that any optimal  strategy (i.e., one that minimizes the expected number
of queries) performs at most $ O( \Delta(T) ( \log |V| + \log w(T))) $ queries in the worst case, where
$w(T)$ is the  sum of the likelihoods  of the nodes of $T$ and $\Delta(T)$ is the maximum degree of $T$.
We combine this result with a non-trivial exponential time algorithm to provide
an FPTAS for trees with bounded degree. 
\end{abstract}

\pagebreak

\setcounter{page}{1}

\section{Introduction}
Searching is one of the  fundamental problems in Computer Science and 
Discrete Mathematics. In his classical book \cite{Knuth73}, D.\ Knuth
discusses many variants of the searching problem, most of them  dealing with  totally ordered sets.
There has been some effort to extend the available techniques
for searching and for other fundamental problems (e.g. sorting and selection)  
to handle more complex structures such as
partially ordered sets \cite{LinSak85,FaigleEtAl86,conf/focs/OnakP06,Mozes_SODAO8,conf/soda/DaskalakisKMRV09}. 	
Here,
we focus on searching in structures that lay between totally ordered sets and the most general posets.
We wish  to efficiently locate a particular node in a tree. 

More formally, as input we are given a  tree $T = (V, E)$ which has a `hidden' \emph{marked} node
and a function $w : V \rightarrow \mathbb{Z}^+$ that gives the likelihood of a node being the one marked.
In order to discover which node of $T$ is marked, we can perform \emph{edge queries}: after querying the edge $e \in E$ we receive an answer stating in which of the two  connected components of $T \setminus e$ the marked node lies. 
To simplify our notation let us assume that our input tree $T$ is  rooted at a node $r$ so that
we can specify a query to an edge $e=uv$, with $u$ being the parent of $v$, by referring to  $v$.

A search strategy is a procedure that decides the next query to be posed based on the outcome of the previous queries. 
Every search strategy for a tree $T=(V,E)$ (or for a forest) can be represented by a binary search (decision) tree $D$ such that a path
from the root of $D$ to a leaf $\ell$ indicates which queries should be made at each step to discover that $\ell$ is the marked node. More precisely, 
a search  tree for  $T$ is a triple $D = (N, E', A)$, where $N$ and $E'$ are the nodes and edges of a binary tree and the assignment $A :  N \rightarrow V$ satisfies the following properties: 
(a) for every node $v$ of $V$ there is exactly one leaf $\ell$ in $D$ such that $A(\ell) = v$;
(b)[search property] if $v$ is in the right (left) subtree of $u$ in $D$ then $A(v)$ is (not) in  the subtree of $T$ rooted
at $A(u)$. For an example we refer to Figure \ref{fig:prob-defi}.

Given  a search tree $D$ for $T$,  let $\dist{u}{v}{D}$ be the length (in number of edges) of the path from $u$ to $v$ in $D$. Then the cost of $D$, or alternatively the expected number of queries of $D$ is given by
	\begin{eqnarray*}
		cost(D) = \sum_{v \in leaves(D)} \dist{root(D)}{v}{D} w(A(v)) \;.
	\end{eqnarray*}
	
\vspace{-0.2cm}	
Therefore, our problem can be stated as follows: given a rooted tree $T=(V,E)$ with $|V|=n$  and a function $w: V \rightarrow \mathbb{Z}^+$, the goal is to compute a minimum cost search tree for $T$. This 
	 is a natural generalization of the problem of searching an element
in a sorted list with non-uniform access probabilities.


\remove{We proceed with these operations until the marked node is found. 
The expected cost  of  a  search strategy $\mathcal{S}$ is defined as  $\sum_{v \in V} s_v w(v)$,
where $s_v$ is the number of queries needed by $\mathcal{S}$ to find the marked node when $v$ is the marked node.
Therefore, our problem consists of designing a search  strategy (decision tree)  with minimum expected cost.
}

\remove{Explain the problem in POSET in  terms of edges
The problem of minimizing the worst case cost of searching for an element in a POSET
is known to be NP-COMPLETE.}

\medskip

\noindent
{\bf The State of the Art.} The variant of the problem in which the goal is to minimize the number of edge queries in the worst case, 
rather than minimizing the expected number of queries, has been studied 
in several recent papers \cite{BenFarNew99,conf/focs/OnakP06,Mozes_SODAO8}. 
It turns out that an 
optimal (worst-case) strategy can be found in linear time \cite{Mozes_SODAO8}. 
This is in great contrast with the state of the art (prior to this paper) about the 
average-case minimization we consider here. The known results amount  to
the $O( \log n)$-approximation obtained by  
Kosaraju {\em et al.\ }\cite{KosPrzBor99}, and Adler and Heeringa \cite{conf/approx/AdlerH08} 
for the much more general binary identification problem, and the constant factor  
approximation algorithm  that two of the authors  gave in  \cite{conf/icalp/LaberM08}.
However, the complexity of the average-case minimization of the tree search problem has so far remained unknown.
\remove{
In \cite{conf/icalp/LaberM08}, it is presented a constant factor approximation algorithm  
for searching in trees that relies on  heavy-path decompositions and entropy arguments.
These few and far from complete results are in great contrast with the state of the art in the case of the worst case analysis of the 
tree search problem, which is know to be optimally solvable in linear time. Notwithstanding its  
fundamental character, for the more involved average case analysis  of the question investigate here, dramatically state of the art is in great contrasts with the Apart from the above results, the complexity of searching in trees was unknown prior to our work.  
}
\medskip

\noindent {\bf Our Results.} 
We 
significantly narrow 
the gap of knowledge in the complexity landscape of the tree search problem under two different points of view. 
We 
prove that this  problem is ${\cal NP}$-Complete
even for the class of trees with diameter at most 4.
This  results in a complete characterization of the problem's complexity with respect to the parametrization in terms of the diameter. 
In fact, the problem can be shown to be polynomially solvable for the class of trees of diameter at most $3.$ 
We also show that the tree search problem under average minimization is ${\cal NP}$-Complete for trees of degree at most $16$
\enlargethispage{0.05in}
(note that in any infinite class of trees either the diameter or the degree is non-constant).
This substantially improves upon the state of the art,   the only known result in this direction being an 
$O(n \log n)$ time  solution \cite{HuTuc71,GarWac77} for the class of trees with maximum degree  2.
The hardness  results are obtained by  fairly involved reductions from the Exact 3-Set Cover (X3C) with multiplicity 3 \cite{Gar79a}.

In addition to the complexity results, we also significantly improve the previous known 
results from the algorithmic perspective. We first show that we can attain $2$-approximation by a
simple greedy approach that always seeks to divide the remaining tree as evenly as possible.
For bounded-degree  trees,  we match the new hardness results with 
 an  FPTAS. In order to obtain the  FPTAS,
 we first devise a non-trivial Dynamic Programming based 
algorithm that, roughly speaking, computes the best possible
search tree, among the search trees with height at most $H$, in
 $O(n^2 2^H)$ time. Then, we show that every tree $T$ admits 
a minimum cost search tree  whose height  is  
$O( \Delta \cdot ( \log n + \log w(T)))$, where $\Delta$ is the maximum degree of $T$ and $w(T)$ is the 
total weight of the nodes in $T$.  This bound is of independent interest because the height of any  search tree for a complete tree  of degree $\Delta$   is $\Omega(\frac{\Delta}{\log \Delta} \log n).$ Furthermore, 
it  allows us to execute  the DP algorithm with $H=c \cdot \Delta \cdot (\log n + \log w(T))$, for a suitable constant $c$, obtaining
a pseudo-polynomial time algorithm for trees with bounded degree. 
By scaling the weights $w$ in a fairly standard way we   obtain the FPTAS.


The worst-case scenario has also been studied for the case where a question is posed to some node $u$ 
and the answer is either that $u$ is the marked node or in which 
connected component of the forest $T \setminus \{u\}$ the marked node lies \cite{Schaffer89,conf/focs/OnakP06}.
We remark that it is possible to adapt our  techniques to prove that for the average-case minimization,  
this ``node query''-variant of the tree search problem 
is also ${\cal NP}$-Complete; furthermore, we can provide for it  a (degree independent) FPTAS . Due to the space constraints we have 
to defer these results to the full version of the paper.

\remove{
In Section \ref{}, we prove that this  problem is ${\cal NP}$-Complete
even for the class  of trees in which the  maximum degree is at least xx. 
In addition, in Section \ref{sec:search}, we   present an  FPTAS  for trees of bounded degree.
The hardness of this problem contrasts with
the case in which the maximum degree is 2
and also with the case  where the goal
is to minimize the number of queries in the worst case --
the former  can be solved in $O(n \log n)$ time  \cite{HuTuc71,GarWac77} and the latter
in linear time \cite{Lam98optimaledge,Mozes_SODAO8}.

The hardness  result is obtained by a fairly involved reduction from the 3-Exact Cover(3XC) with multiplicity 3 \cite{Gar79a}.
In order to obtain the  FPTAS,
 we first devise a non-trivial Dynamic Programming based 
algorithm that, roughly speaking, computes the best possible
search tree among the search trees with height at most $H$, in
 $O(n^2 2^H)$ time. Then, we show that every tree $T$ admits 
a minimum cost search tree  whose height  is 
$O( \Delta \cdot ( \log n + \log W))$, where $\Delta$ is the maximum degree of $T$ and $W$
is the  maximum weight assigned by function $w$. 
This allows us to execute  the Dynamic Programming with $H=c \cdot \Delta \cdot (\log n + \log W)$, for a suitable constant $c$, obtaining
a pseudo-polynomial time algorithm for trees with bounded degree. 
Then we scale the weights $w$ in a fairly standard way to obtain the FPTAS.

We shall mention that, although not presented here due to space constraints, it is possible to adapt our techniques to prove that 
the variation of our  problem where the queries are associated with  nodes
rather than with edges is ${\cal NP}$-Complete and admits an FPTAS (degree independent). 
In this variant, the answer of a  query to a node $u$  determines
whether $u$ is the marked node or, in the negative case,  in which connected component of the forest
$T \setminus \{u\}$ the marked node lies  \cite{Schaffer89,conf/focs/OnakP06}.
}

\medskip

\noindent {\bf Other Related Work.}
Besides the above mentioned papers, the worst-case version  of searching in trees
had already been studied and solved  under a different name,  one decade ago, 		
as pointed out by Dereniowski \cite{journals/dam/Dereniowski08}.
That is because the problem of searching a node in a tree
is equivalent to the problem of ranking the edges 
of a tree \cite{IyeRatVij91,laTGreSch95,Lam98optimaledge}.

The problem studied here  can also be seen
 as a particular case of the binary identification problem (BIP) \cite{Garey:1972:OBI}.
Suppose we are given a set of elements $ U=\{u_1,\ldots,u_n\}$,  a 
 set of tests $\{t_1,\ldots,t_m\}$, with $t_i \subseteq U$,
  a `hidden' marked element and
 a likelihood function $w: U \mapsto \mathbb{R}^+$. 
 A test $ t$ allows to determine whether the marked element is in the set $t$ or in $U \setminus t$.
 The BIP  consists of defining a strategy (decision tree) that minimizes
 the (expected) number of tests to find the marked element.
Both the average-case  and the worst-case minimization are
${\cal NP}$-Complete \cite{HR76}, and none of them  admits an $o(\log n)$-approximation unless ${\cal P}={\cal NP}$ \cite{journals/dam/LaberN04,DTentity}.
For both versions, simple greedy algorithms attain $O(\log n)$-approximation \cite{KosPrzBor99,Arkin:1998:DTG,conf/approx/AdlerH08}.
When we impose some structure  in the set of tests  we have interesting particular cases.
If the set of tests consists of all the subsets of $U$ (i.e.,  $2^{U}$), then the strategy that minimizes the average cost 
is a Huffman tree.
Let $G$ be a DAG with vertex set $U$.
If the set of tests is $\{t_1,\ldots,t_n\}$, where $t_i=\{u_j | u_i \leadsto u_j \mbox{ in } G\}$, then
we have the problem of searching in a poset \cite{journals/tit/LipmanA95,KosPrzBor99,searchRandom}. 
When $G$ is a directed path we have the alphabetic coding problem
\cite{HuTuc71}. The problem we study here corresponds to the particular case where $G$ is a directed tree.

\smallskip

\noindent {\bf Applications.} The problem of searching in posets (and in particular in trees) has practical applications in file system synchronization and software testing 
according to \cite{BenFarNew99,Mozes_SODAO8}. 


Strategies for searching in trees have also potential application to asymmetric communication protocols \cite{AdlerEtAl06a,AdlMag01,saberi,LabHol02,watkinson}.
In this scenario, a client has to send a binary string $x \in \{0,1\}^t$ to the server.
 $x$ is  drawn from a probability distribution ${\cal D}$  only
available to the server. The asymmetry comes from the client having much larger bandwidth for downloading than for uploading. In order to benefit
from this discrepancy, both parties agree on a protocol to exchange bits until the
server learns the string $x$, trying to 
minimize the number of bits sent 
by the client (though other factors, e.g.,  the number of rounds should also be taken into account).
In one of the first protocols \cite{AdlMag01,LabHol02},
at each round the server sends 
a binary string $y$ and the client
replies with a 0 or 1 depending on  whether $y$ is a prefix of $x$ or not.
Based on the client's answer, the server updates his knowledge about $x$ and sends another string if he has
not learned $x$ yet. 
This protocol corresponds to a strategy
for searching a marked leaf in a complete binary tree of height $t$, 
where only the leaves  have non-zero probability. 
In fact, the  binary strings in $\{0,1\}^t$ can be represented by a complete binary tree of height $t$ 
where every edge that connects a node
to its left (right) child  is labeled with 0 (1). This gives a 1-1 correspondence between binary strings of length at most $t$
 and  edges of the tree, and
 the message $y$ sent by the
server naturally corresponds to an edge query.

\remove{
Apart from the above  mentioned applications,
strategies for searching in trees have a potential application in the 
context of 
asymmetric communication protocols \cite{AdlMag01,AdlerEtAl06a}.
In this scenario,
a client has to send a binary string $x \in \{0,1\}^t$ to the server,
where $x$ is  drawn from a probability distribution $D$ that is  only 
available to the server.
The asymmetry is because  the server has
much more  bandwidth than the client so that,  to benefit
from this discrepancy,  
both parties agree on a protocol to exchange bits until the 
server learns the string $x$.
The challenge is to design protocols that provide a suitable trade-off between
the following parameters: the number of bits sent by the server, the number of bits
sent by the client, the number of communication rounds and the amount of computation performed at the server. 
In one of the  first   proposed protocols  \cite{AdlMag01,LabHol02},
the server sends to the client, at each round, 
a binary string $y$ and the client
replies with a single bit depending on  whether $y$
is a prefix of $x$ (bit 1) or not (bit 0). 
Based on the client's answer, the 
server updates his knowledge about $x$ and send another string if he has not learned $x$ yet.
The choice of string $y$ is based on 
the current knowledge of $x$ and on a simple greedy rule.
It is not difficult to realize that this protocol corresponds to a strategy
for searching a marked leaf in a complete binary tree of height $t$, where only the 
the leaves of the tree have non-zero probability. 
In fact, the  binary strings in $\{0,1\}^t$ can be represented by 
a complete binary tree of height $t$ where every edge that connects a node to its left(right) child  is labeled
with 0(1). This generates a 1-1 correspondence between binary strings with length at most $t$ and the 
edges of the tree and, as a consequence,  the message $y$ sent by the server naturally corresponds to  
a edge query.
}

\section{Hardness}
\label{sec:hardness}
In this section we shall prove that  
the tree search problem defined above 
is ${\cal NP}$-Complete.
We shall use a reduction from the Exact 3-Set Cover problem with multiplicity bounded by 3, i.e., each element of the ground set can appear in 
at most 3 sets. 

An instance of the $3$-bounded Exact 3-Set Cover problem (X3C)  is defined by: (a)  a set $U = \{u_1, \dots, u_n\},$ with 
$n=3k$ for some $k \geq 1;$ 
(b) a family  ${\cal X} = \{X_1, \dots, X_m\}$ of subsets of $U,$ such that $|X_i| = 3$ for each $i=1, \dots m$ and for each $j = 1, \dots n,$
we have that 
$u_j$ appears in at most $3$ sets of ${\cal X}.$ 
Given an instance ${\mathbb I}=(U, {\cal X})$ the X3C problem is to decide whether ${\cal X}$ contains a partition of $U,$ i.e., whether 
there exists a family  ${\cal C} \subseteq {\cal X}$ such that $|{\cal C}| = k$ and $\bigcup_{X \in {\cal C}} X = U.$
This problem is well known to be ${\cal NP}$-Complete \cite{Gar79a}.

For our reduction it will be crucial to define an order among the sets of the family  $\cal X.$ 
Any   total  order $<$ on $U,$ say $u_1 < u_2 < \dots < u_n,$ can be extended   to a total 
order $\prec$ on ${\cal X} \cup  U$ by stipulating that:
(a) for any $X = \{x_1, x_2, x_3\}, Y = \{y_1, y_2, y_3\} \in {\cal X}$ (with 
$x_1 < x_2 < x_3$ and $y_1 < y_2 < y_3,$) the relation $X \prec Y$ holds if and only if 
the sequence $x_3\, x_2 \, x_1$  is lexicographically smaller than  $y_3\, y_2 \, y_1;$ (b)
for every  $j=1, \dots, n,$ the relation $u_j \prec X$ holds if and only if
the sequence $ u_j \, u_1 \, u_1 $ is lexicographically smaller than $x_3\, x_2 \, x_1.$ 

\smallskip

Assume an order  $<$ on $U$ has been fixed and   $\prec$ is its extension to  $U \cup {\cal X},$ as defined above.    
We denote by  $\Pi = (\pi_1, \dots, \pi_{n+m})$ the sequence of elements of $U \cup {\cal X}$ sorted 
in increasing order according to $\prec.$
From now on, w.l.o.g., we assume  that according to $<$ and $\prec$ ,  it holds that
 $u_1 < \dots < u_n$ and    $X_1 \prec  \dots \prec X_m.$ 
For each $i=1, \dots, m,$ we shall denote the elements of $X_i$ by $u_{i \, 1}, u_{i \, 2}, u_{i\, 3} $ so that 
$u_{i\, 1} < u_{i\, 2} < u_{i\, 3}.$

\smallskip

\noindent
{\bf Example 1.}
 Let  $U = \{a, b, c, d, e, f\},$ and ${\cal X} = \{\{a,b,c\}, \{b,c,d\},\{d,e,f\}, \{b, e, f\} \}.$ 
Then, fixing the standard alphabetical order among the elements of
$U,$ we have that the sets of ${\cal X}$ are ordered  as follows: $X_1 = \{a,b,c\}, X_2 = \{b,c,d\}, X_3 = \{b, e, f\}, X_4= \{d,e,f\}.$
Then, we have  ${\Pi} = (\pi_1, \dots, \pi_{10}) = (a, b, c, X_1, d, X_2, e, f, X_3, X_4).$ 
\smallskip



Because of  the orders we fixed and  the 
fact that each element of $U$ appears in at most 3 sets of ${\cal X},$ it follows that 
that we cannot have more  than three sets of $ {\cal X}$ 
appearing consecutively in $\Pi$. This will be important to prove the hardness for  bounded degree instances.


We shall first show a polynomial time reduction that maps
any instance ${\mathbb I} = (U, {\cal X})$ of 3-bounded X3C to an instance $\mathbb{I}' = (T, w)$ of the tree search problem, such that 
$T$ has diameter 4 but unbounded degree. We will then modify such reduction and show  hardness for the bounded case too. 

 \smallskip

\noindent
{\bf The structure of the tree $T$}.  The root of $T$ is denoted by $r.$ 
For each $i = 1, \dots, m$ the set $X_i \in {\cal X}$ is mapped to a tree $T_i$ of height 1,  with root $r_i$ and leaves  $t_i, s_{i\,1}, s_{i\,2}, s_{i\,3}.$
In particular, for $j=1,2,3,$ we  say that $s_{i \, j}$ is associated with the element $u_{i\, j}.$
We make each $r_i$ a child of $r.$
For $i=1, \dots, m,$ we also create  four leaves $a_{i1},a_{i2},a_{i3}, a_{i4}$ and make them children of  the root $r.$ 
We also define $\tilde{X_i} = \{t_i, s_{i \, 1}, s_{i \, 2}, s_{i \, 3}, a_{i1},\dots, a_{i4}\}$ to be the set of leaves of $T$ associated with 
$X_i.$ 
For the example given above, the corresponding tree is given in Figure \ref{fig:tree}.

 \smallskip

\noindent
{\bf The weights of the nodes of $T$}. Only the leaves of $T$ will have non-zero weight, i.e., we set $w(r) = w(r_1) = \cdots = w(r_m) = 0.$ 
While defining the weight of the leaves of $T$  it will be useful to assign weight also to each  $u \in U.$ In particular, our weight assignment will be 
such that each leaf in $T$ which is associated with an element $u$ will be assigned the same weight we assign to $u.$ Also, when we fix the 
weight of $u$ we shall understand that we are fixing the weight of all leaves in $T$ associated with $u.$
 We extend the function $w()$ to sets, 
 so the weight of a set is the total weight of its elements. Also we define the weight of a tree as the total weight of  its nodes.

The  weights will be set in order to force any optimal search tree for $(T,w)$ to have a well-defined structure. The following 
notions of Configuration and Realization will be useful to describe such a structure of an optimal search tree. 
In describing the search tree we shall use $q_{\nu}$ to denote the node in the search tree under consideration that represents the 
question about the node $\nu$ of the input tree $T.$
Moreover, we shall in general only be concerned with the part of the search tree meant to identify the nodes of $T$ of non-zero weight.
It should be clear that  the search tree can be easily completed by appending the remaining queries at the bottom.

\begin{definition} \label{defi:seq}
Given leaves
$\ell_1, \dots, \ell_h$  of $T,$ a {\em sequential search tree for $\ell_1, \dots, \ell_h$} is a search tree of height $h$ whose 
left path is $q_{\ell_{1}}, \dots, q_{\ell_{h}}$. 
This is  the strategy that asks about one leaf after another until they have all been considered. 
See Figure \ref{fig:Config-LinearStrat} (a) for an example.
\end{definition}

\noindent
{\bf Configurations, and Realizations of $\Pi$.}
For each $i=1,\dots,m,$ 
let $D^{A}_i$ be the search tree with root $q_{r_i}$, with right subtree being the sequential search tree for 
$t_i, s_{i\,3}, s_{i\,2}, s_{i\,1},$ and left subtree  being a  sequential search tree for (some permutation of)
$a_{i1}, \dots a_{i4}.$ We also refer to   $D^A_i$ as the $A$-configuration for $\tilde{X}_i.$

Moreover, let $D^{B}_i$ be the search tree with root  $q_{t_i}$  and left subtree being 
a  sequential search tree for (some permutation of)
$a_{i1}, \dots a_{i4}.$
We say that   $D^B_i$ is  the $B$-configuration for $\tilde{X}_i.$
See Figure \ref{fig:Config-LinearStrat} (b)-(c).

\begin{definition} \label{defi: extension}
Given two search trees $T_1, T_2,$ the extension of $T_1$ with $T_2$ is the search tree obtained by appending the root of $T_2$ to the leftmost
leaf of $T_1.$ 
The  extension of $T_1$ with $T_2$ is a new search tree that ``acts'' like
 $T_1$ and in case of all NO answers continues following the strategy represented by $T_2.$
\end{definition}

\begin{definition} \label{defi:realization}
A realization (of $\Pi$) with respect to ${\cal Y} \subseteq {\cal X}$ is a search tree for $(T, w)$ defined recursively as follows:
\footnote{For sake of definiteness we set $\pi_{m+n+1} = \emptyset$ and the realization of $\pi_{n+m+1}$ w.r.t. $\cal Y$ to be the empty tree.}
For each $i=1, \dots, n+m, $ a realization of $\pi_i \, \pi_{i+1} \dots \pi_{n+m}$ is an extension of 
the realization of   $\pi_{i+1} \dots \pi_{n+m}$ with another tree $T'$ chosen according to the following two cases:

\noindent
{\em Case 1.} If $\pi_i=u_j,$ for some $j = 1, \dots, n,$ then $T'$ is  
a (possibly empty) sequential search tree for the  leaves of $T$ that are associated with $u_{j}$ and are
 not queried in the realization of $\pi_{i+1} \dots, \pi_{n+m}.$

\noindent
{\em Case 2.} If $\pi_i= X_j,$ for some $j=1,\dots,m,$ then $T'$ is either $D_j^B$ or $D_j^A$ according as $X_j \in {\cal Y}$ or not.
\end{definition}

We denote by $D^A$ the realization of $\Pi$ w.r.t. the empty family, i.e.,  ${\cal Y} = \emptyset.$ 
Figure \ref{fig:realizations} shows some of the realizations for 
the Example 1 above.

We are going to  set the weights in such a way that every optimal solution is a realization of $\Pi$ w.r.t.
some ${\cal Y} \subseteq {\cal X}$ (our Lemma \ref{lemma:optimal_structure}). Moreover, such  weights will  
allow to discriminate between the cost of solutions that are realizations w.r.t. to an exact cover for the X3C instance 
and the cost of any  other realization of $\Pi$. 
Let $D^*$ be an optimal search tree and ${\cal Y}$ be such that $D^*$ is a realization of $\Pi$ w.r.t. ${\cal Y}.$\footnote{The existence 
of such a ${\cal Y}$ will be guaranteed by Lemma \ref{lemma:optimal_structure}.}
In addition, for each $u \in U$ 
 define $W_u = \sum_{\ell : X_{\ell} \prec u}  w(\tilde{X}_{\ell})$.
 It is not hard to see that  the difference between the cost of $D^A$ 
 and $D^*$ can be expressed as follows:
\begin{equation} \label{eq:diff_cost}
cost(D^A)-cost(D^*) = \sum_{ X_i \in {\cal Y}} \left( w(t_i)  
- (W_{u_{i \, 1}}+ W_{u_{i \, 2}} + W_{u_{i \, 3}})
-\sum_{j=1}^3 d^{A}_{B}(q_{s_{i\, j}})  w(u_{i\,j}) \right), 
\end{equation}
where $d^{A}_{B}(q_{s_{i\,j}})$ is the difference between the level of the node $q_{s_{i\,j}}$ in  $D^*$ and the level $q_{s_{i\,j}}$ in a
 realization of $\Pi$ w.r.t. ${\cal Y} \setminus \{X_i\}.$  To see this, imagine to turn 
$D^A$ into $D^*$ one step at a time. Each step being the changing of configuration from $A$ to $B$ for a set of leaves $\tilde{X}_i$  such that
$X_i \in {\cal Y}.$ Such a step implies: 
(a) moving the question $q_{s_{ij}}$ exactly $d^A_B(q_{s_{ij}})$ levels down, so increasing the cost by 
$ d^{A}_{B}(q_{s_{i\, j}})  w(u_{i\,j});$
(b) because of (a) all the questions that were below the level where  $q_{s_{ij}}$ is moved, are also moved down one level.
This additional increase in cost  is accounted for by the $W_{u_{i\, j}}$'s;
(c)  moving one level up the question about $t_i,$ so gaining cost $w(t_i).$
 
We will define the weight of $t_i$ in order to:  compensate the increase in cost (a)-(b) due to the relocation of $q_{s_{i\, j}};$ 
and  to provide some additional 
gain only when ${\cal Y}$ is an exact cover. 
In general, the value of $d^A_B(q_{s_{i\,j}})$ depends on the structure of  the realization for ${\cal Y}\setminus 
\{X_i\};$ in particular, on  the length of the  sequential search trees for the leaves associated to $u_{\kappa}$'s, that appear in $\Pi$ between 
$X_i$ and $u_{i\, j}.$ However, when 
${\cal Y}$ is an exact cover, each such sequential search tree has length one. 
A moment's reflection shows that in this case $d^A_B( q_{s_{i\,j}}) = \gamma(i, j),$ where, for each $i = 1, \dots, m$ and $j=1,2,3,$ we define 
$$\gamma(i, j) = j - 5 + | \{ u_{\kappa} : u_{i \, j} \prec u_{\kappa} \prec X_i \} | + 5 \cdot | \{ X_{\kappa} : u_{i \, j} \prec X_{\kappa} \preceq X_i \} |$$

\remove{
To see this, assuming that ${\cal Y}$ is an exact cover,  let us compute  $d^A_B( q_{s_{i\,,j}})$  as follows: 
 $q_{s_{i\,j}}$ is exactly  $3 - j$ levels below  the deepest node of the $B$-configuration for $\tilde{X}_i$. 
Then the remaining distance can be computed by traversing $\Pi$ from $X_{i}$   to $u_{i\,j}$ and 
adding 1 for each $u_{\kappa}$ encountered (accounting for the sequential search tree for the only leaf associated to $u_{\kappa}$) 
and adding $3$ for each $X_{\kappa}$ encountered (accounting for the left path of the $A$ or $B$ configuration taking care of $\tilde{X}_{\kappa}$).  
}

To see this,  assume that ${\cal Y}$ is an exact cover.
Let $D'$ be the realization for ${\cal Y} \setminus X_i,$ and  $\ell$
be the level of the root of the $A$-configuration  for $\tilde{X}_i$ in $D'$.
The node $q_{s_{i\,j}}$ is at level $\ell+(5-j)$ in $D'$.
In $D^*,$ the root of the $B$-configuration  for $\tilde{X}_i$ is
also at level $\ell.$ Also, in $D^*,$  between level $\ell$ and the level of $q_{s_{i\,j}},$ there are only nodes
associated with elements of some $\pi_k$ s.t. $u_{ij} \prec \pi_{\kappa} \preceq X_j.$
Precisely, there is $1$ level per each $u_{\kappa}$ s.t. $u_{i j} \prec u_{\kappa} \prec 
X_{i}$ (corresponding to the sequential search tree for the only leaf associated 
with $u_{\kappa}$); and $5$ levels per each  ${X}_{\kappa}$ s.t. 
$u_{i j} \prec X_{\kappa} \preceq X_{i}$ (corresponding to the left path of the $A$ or $B$-configuration 
for $\tilde{X}_{\kappa}$). In total, the difference between the levels of $q_{s_{ij}}$ in 
$D'$ and $D^*$ is exactly $\gamma(i,j).$
\remove{start at level $\ell$ and  traverse $\Pi$ from $X_{i}$ to $u_{i\,j}$
adding 1 for each $u_{\kappa}$ encountered (accounting for the sequential
search tree for the only leaf associated to $u_{\kappa}$) and adding $2$
for each $X_{\kappa}$ encountered (accounting for the left path of the $A$
or $B$ configuration taking care of $\tilde{X}_{\kappa}$). By taking the
difference
between the level of $q_{s_{i\,j}}$ in $D^*$ and $D$ we obtain
$\gamma(i,j)$. }

Note that $\gamma(i,j)$ is still well defined
even if there is not an exact cover ${\cal Y} \subseteq {\cal X}$.
This quantity will be used  to define $w(t_i).$

We are now  ready to provide the precise  definition of the weight function $w.$ 
We start with $w(u_1) = 1.$ Then, we fix the remaining weights inductively, using the sequence $\Pi$ in the following way:  
let $i > 1$ and  assume that for each $i' < i$ the weights of all leaves associated with $\pi_{i'}$ have been fixed\footnote{By
the leaves associated with $\pi_{i'}$ we mean the leaves in $\tilde{X}_{j}$, if $\pi_i = X_j$ for some $X_j \in {\cal X}$, 
or the leaves associated with $u$ if $\pi_{i'} = u$ for some $u \in U.$}.
We now  proceed according to the following two cases:

\noindent
{\em Case 1.} $\pi_i = u_j,$ for some $j \in \{1, \dots, n\}.$  Then,  we set $w(u_j) = 1+ 6\max\{|T|^3 w(u_{j-1}), W_{u_j}\},$ 
where $|T|$ denotes the number of nodes of $T.$

\noindent
{\em Case 2.} $\pi_i = X_j,$ for some $j \in \{1,\dots, m\}.$ Note that in this case the weights of the leaves 
$s_{j \, 1}, s_{j \, 2}, s_{j \, 3}$ have  already been fixed, respectively to $w(u_{j \, 1}), w(u_{j \, 2}),$ and $w(u_{j \, 3}).$
This is because we fix the weights following the sequence $\Pi$ and  we have 
$u_{j\, 1} \prec u_{j\, 2} \prec u_{j\, 3} \prec X_j.$
 In order to define the weights of the 
remaining elements in $\tilde{X}_j$ we set  
$w(a_{j1})  = \dots = w(a_{j4})=  W_{u_{j\, 1}}+ W_{u_{j\, 2}} + W_{u_{j\, 3}} +  \sum_{\kappa = 1}^3  \gamma(j, \kappa) w(u_{j\, \kappa})$. Finally, we set 
$w(t_j) = w(a_{j1}) + w(X_j)/2.$

\begin{remark} \label{remark:weight_encoding}
For each $i=1, \dots, n+m,$ let  $w(\pi_i)$ denote the total weight of the leaves associated with $\pi_i.$ 
It is not hard to see that  
$w(\pi_i) = O(|T|^{3 i}).$ 
Therefore we have that the maximum weight is not larger than $w(\pi_{m+n})= O(|T|^{3(m+n)}).$
It follows that we can encode all the weights using $O(3|T|(n+m) \log |T|)$ bits, hence the size of the instance 
$(T,w)$ is polynomial in the size of the X3C instance $\mathbb{I} = (U, {\cal X}).$
\end{remark}


Since $t_m$ is the heaviest leaf, one can show that in an optimal search tree $D^*$ the root can only be $q_{t_m}$
or $q_{r_m}.$ For otherwise moving one of these questions closer to the root of $D^*$ results in a tree with 
smaller cost, violating the optimality of $D^*.$ Moreover, by a similar ``exchange'' argument it follows that 
if $q_{r_m}$ is the root of $D^*$ then the right subtree must coincide with a sequential  search tree for 
$t_m, s_{m1}, s_{m2}, s_{m3}$ and the left subtree of  $q_{r_m}$ must be a sequential tree for 
$a_{m1},\dots, a_{m4}.$
Therefore the top levels of $D^*$ coincide either with $D^A_{m}$ or with $D^B_{m},$ or equivalently they are a realization 
of $\pi_{m+n}.$ Repeating  the same argument on the remaining part of $D^*$ we have the following (the complete proof is in 
appendix):




\begin{lemma} \label{lemma:optimal_structure}
Any optimal search tree for the instance $(T, w)$ is a realization of $\Pi$ w.r.t. some ${\cal Y} \subseteq {\cal X}.$
\end{lemma}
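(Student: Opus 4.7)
The plan is to proceed by downward induction on $i$, proving for every $i \in \{0, 1, \ldots, n+m\}$ that for any optimal $D^*$ there exists a node on the leftmost path of $D^*$ whose subtree $D^*_i$ is a realization of $\pi_1 \ldots \pi_i$ with respect to some ${\cal Y}_i \subseteq {\cal X}$. The base case $i = 0$ is the empty tree (vacuously a realization of the empty sequence), and the conclusion at $i = n+m$ is exactly the statement of Lemma~\ref{lemma:optimal_structure}. For the inductive step, given that $D^*_{i-1}$ is a realization of $\pi_1 \ldots \pi_{i-1}$, one must show that the top portion of $D^*_i$---namely $D^*_i$ after removing $D^*_{i-1}$ from its leftmost leaf---coincides with exactly one of the admissible choices in Case~1 or Case~2 of Definition~\ref{defi:realization} applied to $\pi_i$.

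The engine of the inductive step is a weight-dominance exchange argument driven by the extremely skewed weight schedule. By Remark~\ref{remark:weight_encoding} the weight of the leaves associated with $\pi_i$ is $\Theta(|T|^{3i})$, while the combined weight of all leaves associated with $\pi_1, \ldots, \pi_{i-1}$ is $O(|T|^{3(i-1)})$, so the heaviest unplaced leaf (namely $t_j$ if $\pi_i = X_j$, or one of the $s$-leaves associated with $u_j$ if $\pi_i = u_j$) outweighs the rest of $D^*_i$ by a factor of $\Omega(|T|^3)$. First I would show that the root of $D^*_i$ must be a query that ``reaches'' this heaviest leaf quickly: if some other query $q$ were at the root, swapping $q$ with the desired query would lift the heavy leaf up by at least one level, gaining at least $w(t_j)$, while displacing the remaining leaves of $D^*_i$ downward by an amount bounded by the total weight of $D^*_i$ minus $w(t_j)$, which is $\ll w(t_j)$. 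This contradicts the optimality of $D^*$ and forces the root of $D^*_i$ to be $q_{r_j}$ or $q_{t_j}$ in Case~2, or a query to a leaf associated with $u_j$ in Case~1.

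Having pinned the root, the rest of the top portion is forced by iterating the same argument. In Case~2 with root $q_{r_j}$, the right subtree handles exactly the four leaves in the subtree of $r_j$, and the strict ordering $w(t_j) > w(s_{j3}) > w(s_{j2}) > w(s_{j1})$ forces this right subtree to be the sequential tree $q_{t_j}, q_{s_{j3}}, q_{s_{j2}}, q_{s_{j1}}$; the next queries along the left spine are likewise forced to be the four $a_{jk}$'s in some order (being the next heaviest unplaced leaves, all of equal weight), giving exactly $D^A_j$. In Case~2 with root $q_{t_j}$, the right subtree is just the leaf $t_j$, and the same reasoning forces the $a_{jk}$'s to form a sequential prefix on the left, yielding $D^B_j$; the $s_{jk}$'s are postponed until the induction reaches $\pi_\kappa = u_{jk}$. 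Case~1 is strictly easier: the at most three $s$-leaves associated with $u_j$ that have not already been placed are the next heaviest items, so they must form a sequential prefix in some order, exactly matching Case~1 of Definition~\ref{defi:realization}. The main technical obstacle will be making the exchange argument rigorous when the ``correct'' query is not an immediate child of $q$ but rather a deeper ancestor of the heavy leaf: one must exhibit a concrete legal swap (typically obtained by a sequence of adjacent rotations consistent with the search property) and control the total weight of leaves displaced by the rearrangement, for which the $|T|^3$ gap in the recurrence defining $w$ is precisely calibrated.
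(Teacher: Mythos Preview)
Your exchange arguments are essentially the paper's: the case split $\pi_i = X_j$ versus $\pi_i = u_j$, the subcases root $= q_{r_j}$ versus root $= q_{t_j}$, the ordering $w(t_j) > w(s_{j3}) > w(s_{j2}) > w(s_{j1})$ forcing the right subtree, and the swaps placing the $a_{jk}$'s next on the left spine all match the appendix proof closely.

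The gap is in the inductive wrapper. You carry the hypothesis ``$D^*_{i-1}$ is a realization of $\pi_1\ldots\pi_{i-1}$'' and then want to analyze ``the root of $D^*_i$,'' but $D^*_i$ is not yet a well-defined subtree. Your exchange (``the heaviest leaf in $D^*_i$ outweighs the rest, so the root must reach it quickly'') presupposes that $D^*_i$ contains \emph{exactly} the leaves for $\pi_1,\ldots,\pi_i$ and nothing else. The bottom-up hypothesis only pins the leaves for $\pi_1,\ldots,\pi_{i-1}$ inside $D^*_{v_{i-1}}$; it does not rule out that, on the leftmost path just above $v_{i-1}$, queries for some much heavier $\pi_k$ with $k>i$ are interleaved with the $\pi_i$ queries. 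Until you know which node is the root of $D^*_i$, you cannot run the exchange on that root. (Relatedly, you write ``downward induction'' but describe the step $i-1\to i$ with base case $i=0$, which is upward.)

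The paper sidesteps this by going in the other direction: it fixes the deepest $\ell$ on the left path such that the \emph{top} part $D - D_\ell$ is already a realization of the suffix $\pi_{i+1}\ldots\pi_{n+m}$. By complement, $D_\ell$ then contains exactly the leaves for $\pi_1,\ldots,\pi_i$, and now your exchange argument applies verbatim to the root of $D_\ell$, producing a deeper $\ell'$ and the contradiction with ``deepest.'' The one-line fix to your outline is to reverse the direction---carry the hypothesis about $D - D_\ell$ (the realized suffix) rather than about $D^*_{i-1}$---after which everything you wrote about the two cases goes through essentially as in the paper.
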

\remove{
The proof consists to show  that the upper part of the optimal tree is a realization for 
$\pi_{i+1} \ldots \pi_{m+n}$ for some $i$ and then to argue
through a case analysis that the upper part of the optimal tree is also a realization for  $\pi_{i} \ldots \pi_{m+n}$.
For the complete proof can be found in the appendix.
}

Recall now the definition of the search tree $D^A.$ Let $D^*$ be an optimal  search tree for $(T, w).$
Let ${\cal Y} \subseteq {\cal X}$ be such that $D^*$ is a realization of $\Pi$ w.r.t. $\cal Y.$ 
Equation (\ref{eq:diff_cost}) and the definition of $w(t_i)$ yield

\begin{equation} \label{eq:Delta_cost}
cost(D^A)-cost(D^*) = \sum_{ X_i \in {\cal Y}} \left ( \frac{w(X_i)}{2} +   
\sum_{j=1}^3 \left( \gamma(i,j) - d^{A}_{B}(q_{s_{i\,j}}) \right) w(u_{i\,j})   \right)
= \sum_{j=1}^n \mathop{\sum_{X_i \in {\cal Y}}}_{u_j \in X_i} \left( \frac{w(u_j)}{2} + 
\Gamma(i,j) w(u_j) \right),
\end{equation}
where $\Gamma(i,j) = \gamma(i,\kappa) - d^A_B(q_{s_{i \, \kappa}}),$ and $\kappa  \in \{1,2,3\}$ is such that $s_{i\, \kappa} = u_j.$

\remove{
By the definition of $\gamma(\cdot, \cdot)$ and $d(\cdot)$,  
if each $u \in U$ appears  exactly in one set of ${\cal Y}$ then (\ref{eq:Delta_cost}) is equal  to $\sum_{j=1}^n \frac{w(u_j)}{2}.$ 
Conversely, let $1 \leq j \leq n$ be such that $u_{j'}$ appears exactly once in $\cal Y,$ for each $j' > j$ and  $u_j$ does not appear in any set in $\cal Y.$ a
   
Then  ((\ref{eq:Delta_cost}) is not larger than  
$\sum_{j=1}^n \frac{w(u_j)}{2} - w(u_j)/2 +  \sum_{j=1}^j-1 \mathop{\sum_{X_i \in {\cal Y}}}_{u_j \in X_i} \frac{w(u_j)}{2} + 
\Gamma(i,j) w(u_j),$ which is smaller than $\sum_{j=1}^n \frac{w(u_j)}{2}$ even if all the $Delta(\cdot,\cdot)$'s where equal to $|T|,$ since 
$w(u_{j'}) \leq |T|^3 w(u_{j}),$ for each $j' < j.$
On the other hand if there is more than one set in $\cal Y$ which includes $u_j,$ then at most one of the $\Gamma(i,j)$'s is zero while the others are 
negative. Therefore since, by the previous argument,  the remaining $u_{j'}$ with $j' < j$ can only provide  total weight at most $w(u_{j})$ we have again that 
$(\ref{eq:Delta_cost})$ becomes smaller than $\sum_{j=1}^n \frac{w(u_j)}{2}.$
}

By definition, if for each $j = 1, \dots, n,$ there exists exactly one $X_i \in {\cal Y}$ such that $u_j \in X_i,$
then we have $\Gamma(i,j)=0.$ Therefore,  equation (\ref{eq:Delta_cost}) evaluates 
exactly to $\sum_{j=1}^n \frac{w(u_j)}{2}.$
Conversely, we can prove that this never happens when for some $1\leq j\leq n,$ 
$u_j$ appears in none or in more than one of the sets in ${\cal Y}.$
For this we use the exponential (in $|T|$) growth of  the weights $w(u_j)$ and the fact that in such case the inner sum of the
last expression in (\ref{eq:Delta_cost}) is non-positive.
\remove{Conversely, suppose $j^*$ is the maximum index such that $u_{j^*}$ appears in none or more than one of the sets in 
${\cal Y}.$ Therefore, there exists at most one $X_i$ for which $\Gamma(i,j^*)=0.$
For the others (if any)  the $\Gamma(\cdot,j^*)$ is negative. Therefore, the total contribution of the inner sum of the right-hand side of
 (\ref{eq:Delta_cost}) for $j=j^*$ becomes non positive.
 Moreover, because of the difference among the weights, the sum of the contributions for $j < j^*$ 
is less than $w(u_{j^*})/2.$}
 In conclusion we have the following result, whose 
complete proof is in appendix.

\begin{lemma} \label{lemma:key_2}
Let $D^*$ be an optimal search tree for $(T, w).$
Let ${\cal Y} \subseteq {\cal X}$ be such that $D^*$ is a realization of $\Pi$ w.r.t. $\cal Y.$
We have that $cost(D^*) \leq  cost(D^A) - \frac{1}{2}\sum_{u \in  U} w(u)$  if and only if  ${\cal Y}$ is a solution for the 
X3C instance $\mathbb{I} = (U, {\cal X}).$
\end{lemma}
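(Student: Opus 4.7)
The plan is to evaluate the last expression of (\ref{eq:Delta_cost}) by first generalising the level-counting computation that gave $\gamma(i,j)$ in the exact-cover case to an arbitrary $\cal Y$. Letting $c(u_\kappa)=|\{X_\iota \in {\cal Y} : u_\kappa \in X_\iota\}|$ and $p_{i,j}$ denote the position of $q_{s_{ij}}$ inside the sequential sub-tree for $u_j$ in $D^*$, the same bottom-up traversal of $\Pi$ used in the text now yields
\begin{equation*}
\Gamma(i,j)=\gamma(i,j)-d^A_B(q_{s_{ij}})=\sum_{u_\kappa:\,u_{ij}\prec u_\kappa\prec X_i}\bigl(1-c(u_\kappa)\bigr)-p_{i,j},
\end{equation*}
since every sequential sub-tree for a $u_\kappa$ in between contributes $c(u_\kappa)$ levels rather than one, and $q_{s_{ij}}$ may sit at depth $p_{i,j}>0$ inside the sequential sub-tree for $u_j$ when $u_j$ is over-covered.

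For the ``if'' direction, when $\cal Y$ is an exact cover every $c(u_\kappa)=1$ and every sequential sub-tree is a single query, so $p_{i,j}=0$ and hence $\Gamma(i,j)=0$. The right-hand side of (\ref{eq:Delta_cost}) therefore collapses to $\sum_{j=1}^n w(u_j)/2=\tfrac12\sum_{u\in U}w(u)$, giving the stated bound with equality.

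For the ``only if'' direction I would argue contrapositively. Assume $\cal Y$ is not an exact cover and let $j^*$ be the largest index with $c(u_{j^*})\neq 1$. For every $j>j^*$ all intervening $u_\kappa$ still satisfy $c(u_\kappa)=1$ and the sequential tree for $u_j$ is a singleton, so the inner sum equals exactly $w(u_j)/2$. At $j=j^*$ the inner sum is non-positive: either $c(u_{j^*})=0$ and the sum is empty, or $c(u_{j^*})\geq 2$, in which case the positions $\{p_{i,j^*}\}_{X_i\ni u_{j^*},\,X_i\in{\cal Y}}$ form a permutation of $\{0,1,\dots,c(u_{j^*})-1\}$, giving total
\begin{equation*}
\Bigl(\tfrac{c(u_{j^*})}{2}-\tbinom{c(u_{j^*})}{2}\Bigr)\,w(u_{j^*})=\tfrac{c(u_{j^*})(2-c(u_{j^*}))}{2}\,w(u_{j^*})\leq 0.
\end{equation*}
Thus at $j^*$ we fall short of the exact-cover value by at least $w(u_{j^*})/2$. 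For $j<j^*$ a crude bound $|\Gamma(i,j)|\le O(|T|)$ together with $c(u_j)\le 3$ gives an absolute contribution at most $O(|T|)\sum_{j<j^*} w(u_j)$, which the inductive weight rule $w(u_j)\geq 6|T|^3 w(u_{j-1})$ renders much smaller than $w(u_{j^*})/2$. Summing the three ranges yields $cost(D^A)-cost(D^*)<\tfrac12\sum_{u}w(u)$, contradicting the hypothesis; this closes the contrapositive.

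The main obstacle I anticipate is establishing the generalised formula for $\Gamma(i,j)$ rigorously, and in particular the over-covered case at $j^*$: one has to verify that as $X_i$ varies over the sets of $\cal Y$ containing $u_{j^*}$, the positions $p_{i,j^*}$ really are pairwise distinct and exhaust $\{0,\dots,c(u_{j^*})-1\}$, regardless of which permutation is used inside the sequential sub-tree. Once this is in place the weight inequality does the rest almost mechanically.
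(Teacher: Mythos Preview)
Your proposal is correct and follows essentially the same route as the paper: both argue the ``if'' direction directly from $\Gamma(i,j)=0$ and the ``only if'' direction by taking the largest mis-covered index $j^*$, showing the inner sum at $j^*$ is non-positive, and absorbing the $j<j^*$ terms via the geometric weight growth. Your explicit formula $\Gamma(i,j)=\sum_{u_\kappa}(1-c(u_\kappa))-p_{i,j}$ is a cleaner articulation of what the paper leaves implicit (the paper merely asserts that one $\Gamma(i,j^*)$ vanishes and the rest are negative), and the obstacle you flag---that the $p_{i,j^*}$ exhaust $\{0,\dots,c(u_{j^*})-1\}$---is immediate, since the sequential sub-tree for $u_{j^*}$ in $D^*$ consists precisely of the $c(u_{j^*})$ queries $q_{s_{i\kappa}}$ with $X_i\in\mathcal{Y}$ and $u_{i\kappa}=u_{j^*}$.
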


The ${\cal NP}$-Completeness of 3-bounded X3C \cite{Gar79a},  Remark \ref{remark:weight_encoding}, 
and Lemma \ref{lemma:key_2} imply the following. 

\begin{theorem} \label{thm:main}
The search tree problem  is ${\cal NP}$-Complete in the class of trees of diameter at most $4.$ \commento{I changed the statement .}
\end{theorem}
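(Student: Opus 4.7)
The plan is to present Theorem~1 as a short wrap-up that combines three ingredients already developed in the excerpt: membership in NP, polynomial-time computability of the reduction, and the correctness of the reduction provided by Lemmas~1 and~2. Since the construction and the two lemmas have absorbed essentially all of the creative work, the theorem itself should only require routine verifications.

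First I would settle membership in NP. The decision version of the problem asks, given $(T,w,K)$, whether there is a search tree $D$ with $cost(D)\le K$. A search tree is a binary tree on $|V|$ nodes together with the assignment $A$; its cost is computable in linear time from the depths in $D$. Hence $D$ itself is a polynomial-size certificate.

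Next I would verify the structural properties of the tree $T$ produced by the reduction. By construction the root $r$ has as children the $m$ internal nodes $r_1,\ldots,r_m$ and the $4m$ leaves $a_{ij}$, while each $r_i$ has as children the four leaves $t_i,s_{i1},s_{i2},s_{i3}$. Consequently every node lies within distance $2$ of $r$, so the diameter of $T$ is at most $4$ (and is exactly $4$, attained e.g.\ between $s_{11}$ and $s_{21}$). The polynomial-time computability of the mapping $\mathbb{I}\mapsto (T,w)$ then follows from Remark~1: the structure of $T$ has size $O(|U|+|\mathcal{X}|)$ and the weights admit a binary encoding of size $O(|T|(n+m)\log|T|)$ by the bound $w(\pi_i)=O(|T|^{3i})$.

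Finally I would prove correctness by choosing the threshold
\[
K \;=\; cost(D^A)\;-\;\tfrac{1}{2}\sum_{u\in U} w(u),
\]
and arguing as follows. Let $D^*$ be any optimal search tree for $(T,w)$. Lemma~1 guarantees that $D^*$ is a realization of $\Pi$ with respect to some family $\mathcal{Y}\subseteq \mathcal{X}$, and Lemma~2 then states that $cost(D^*)\le K$ if and only if $\mathcal{Y}$ is an exact cover of $U$. Hence $(T,w)$ admits a search tree of cost at most $K$ if and only if $\mathbb{I}$ is a YES-instance of $3$-bounded X3C. Since $3$-bounded X3C is $\mathcal{NP}$-complete, the tree search problem is $\mathcal{NP}$-complete even when restricted to trees of diameter at most $4$. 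The real obstacle in the whole reduction is, of course, Lemmas~1 and~2 rather than Theorem~1 itself; once those are in place, the theorem is pure bookkeeping.
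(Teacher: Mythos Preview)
Your proposal is correct and follows essentially the same route as the paper, which simply notes that the theorem follows from the ${\cal NP}$-completeness of $3$-bounded X3C, Remark~\ref{remark:weight_encoding}, and Lemma~\ref{lemma:key_2}. You have merely made explicit the routine pieces the paper leaves implicit (membership in ${\cal NP}$, the diameter check, the explicit threshold $K$, and the need for Lemma~\ref{lemma:optimal_structure} to guarantee the ${\cal Y}$ in Lemma~\ref{lemma:key_2} exists).
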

Note that this result is tight. In fact,  for trees of diameter at most $3$ the problem is 
polynomially solvable, e.g.,  via dynamic programming  (see Appendix).

\medskip

\noindent
{\bf ${\cal NP}$-Completeness for bounded-degree instances.}
We can adapt our proof to show that the search tree problem is ${\cal NP}$-Complete also
for bounded-degree trees. For that, we modify the input tree as follows.
We partition the subsets of ${\cal X}$ so that sets that are adjacent in $\Pi$ are put together. 
For the instance in the Example 1 the corresponding partition would be $\{\{X_1\}, \{X_2\}, \{X_3, X_4\}\}.$ 

Let $\mathbb{Z}=\{{\cal Z}_1,\ldots, {\cal Z}_p\}$ be the partition obtained from the input instance $(U, {\cal X}).$ 
Recall the definitions of the subtrees $T_j$ and the leaves $a_{j1},\dots, a_{j4}$ ($j=1, \dots, m$)
given for the construction of the tree $T.$ We now create a new tree $T^b$ as follows.
For each $i=1, \dots, p,$ in $T^b$ there is a  subtree $H_i$  that corresponds to the  element ${\cal Z}_i \in \mathbb{Z}.$ 
$H_i$ has root $h_i.$ For each  $j$ such that $X_j \in {\cal Z}_i$ we make the root of $T_j,$ i.e., $r_j,$ and the leaves 
$a_{j1}, \dots, a_{j4}$ children of  $h_i.$ Finally, we create nodes $z_1, \dots, z_p$ and 
make $h_1$ a child of $z_1$ and for $i=2, \dots, p$ we make $z_{i-1}$ and $h_i$ children of $z_i.$ See Fig. \ref{fig:tree_b} for
 the tree $T^b$ corresponding 
to the instance in Example 1.

The fact that in $\Pi$ there are no more than three elements of ${\cal X}$ which appear consecutively, implies that any ${\cal Z}_i$
contains at most three elements. This  gives that the maximum degree in $T^b$ is at most
$16$.

Regarding the weight function, we extend to $T^b$  the weight function defined for the tree $T$ by setting  $w(h_i) = w(z_i) = 0,$ for each $i=1, \dots, p$ 
and leaving the other weights as before. 

It turns out  that  Lemma \ref{lemma:optimal_structure} still holds for the new  instance $(T^b, w).$ 
In fact, in each subtree $H_i$ the structure of the instance is 
exactly the same as in the tree $T,$ so one can prove that any optimal solution for such subinstance is a realization of the corresponding subsequence of 
$\Pi.$ Moreover, because of the way we partitioned $\cal X,$ and the weight function $w,$ it follows that the smallest weight of an  
$a_{jk}$ in ${\cal Z}_i$ is bigger than  the total weight of the leaves in ${\cal Z}_1, \dots, {\cal Z}_{i-1}.$ This is enough to enforce the order 
of a realization of $\Pi,$ i.e., that the leaves $t_j, a_{j1},\dots, a_{j4}$ are queried before the leaves in ${\cal Z}_1, \dots, {\cal Z}_{i-1}.$
We have proved the following (a formal proof is in the appendix).
\begin{lemma} \label{lemma:bounded_optimal_structure}
Any optimal search tree for the instance $(T^b, w)$ is a realization of $\Pi$ w.r.t. some ${\cal Y} \subseteq {\cal X}.$
\end{lemma}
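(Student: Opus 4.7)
}
The plan is to mimic the inductive exchange argument that establishes Lemma \ref{lemma:optimal_structure}, but carried out with respect to the coarser grouping $\mathbb{Z}=\{\mathcal{Z}_1,\dots,\mathcal{Z}_p\}$. The key point is that the exponential gap between consecutive weights in $\Pi$ that drove the original proof still applies, and the only new ingredient is that the internal nodes $h_i,z_i$ of $T^b$ have weight $0$ and therefore do not affect the cost; they only constrain which queries are \emph{feasible} along a search path (through the search property).

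First I would show a ``heaviest block first'' statement: letting $\mathcal{Z}_p=\{X_{j_1},\dots,X_{j_s}\}$ with $s\le 3$, the top levels of any optimal search tree $D^*$ must query precisely the leaves of $\bigcup_{X_{j}\in\mathcal{Z}_p}\tilde{X}_j$, in the order dictated by $\prec$ restricted to $\mathcal{Z}_p$, and that for each such $X_{j}$ the corresponding six non-zero-weight leaves of $\tilde{X}_{j}$ form either a $D^{A}_{j}$ or a $D^{B}_{j}$ configuration. The argument is the same exchange reasoning used for Lemma \ref{lemma:optimal_structure}: since $w(\pi_i)=O(|T|^{3i})$ by Remark \ref{remark:weight_encoding}, the heaviest leaf of $T^b$ (which lies in $\tilde{X}_{j_s}$) has weight strictly larger than the sum of all leaves below it in $\Pi$, so any deviation from the prescribed order permits a swap that strictly decreases the cost. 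One needs to verify that such swaps are still consistent with the search property on $T^b$: because $h_i$ and $z_i$ have weight $0$, the queries $q_{h_i},q_{z_i}$ can always be slid to a point where they do not contribute to the cost and do not obstruct any rearrangement among the weighted leaves; this is the one piece of bookkeeping that is genuinely new compared to the diameter-$4$ case.

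Next I would peel off $\mathcal{Z}_p$ and induct on $p$. Because the weight of every $a_{jk}$ with $X_j\in\mathcal{Z}_i$ strictly exceeds $\sum_{i'<i}w(\mathcal{Z}_{i'})$ (a fact explicitly noted before the lemma statement and following directly from the inductive definition of $w$), the same exchange step forces, after the $\mathcal{Z}_p$-block has been handled, that the next queries in $D^*$ handle $\mathcal{Z}_{p-1}$, again in the $\prec$-order and again with each $X_j\in\mathcal{Z}_{p-1}$ appearing in an $A$- or $B$-configuration; between the two blocks, the queries $q_{h_p},q_{z_p}$ and any queries for weight-$0$ or single-leaf weighted nodes $u_k$ not in any set of $\mathcal{Z}_p$ are inserted as sequential search trees (which matches the Case 1 of Definition \ref{defi:realization}). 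Iterating this peeling shows that $D^*$ is a realization of $\Pi$ with respect to the set ${\cal Y}=\{X_j : \tilde{X}_j$ is handled by a $B$-configuration in $D^*\}$.

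The main obstacle I anticipate is the bookkeeping to verify that swapping two subtrees in $D^*$ does not violate the search property on $T^b$, since the path from the root of $T^b$ to a leaf in some $\tilde{X}_j$ now passes through $z_1,\dots,z_i,h_i,r_j$ rather than just $r,r_j$. Concretely, one must check that any ancestor-descendant relation among queries preserved by the search property is either automatically preserved by the swap (because all swapped nodes lie in the same subtree of $T^b$), or can be restored by inserting/moving the weight-$0$ queries $q_{z_k},q_{h_k}$ along the leftmost path at no cost. Once this is done, the cost comparison is identical to the one in the proof of Lemma \ref{lemma:optimal_structure} and the conclusion follows.
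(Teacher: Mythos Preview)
Your overall strategy---an exchange argument driven by the exponential weight gaps of Remark~\ref{remark:weight_encoding}, induction down the left path---is the paper's strategy too. But the step you label ``bookkeeping'' is exactly where the new work is, and the mechanism you propose for it does not go through.

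You suggest that the weight-$0$ queries $q_{z_k},q_{h_k}$ can be \emph{slid} along the left path so as not to obstruct the swaps used in Lemma~\ref{lemma:optimal_structure}. This is not generally possible: $q_{h_\kappa}$ has two subtrees, one containing all remaining queries about $H_\kappa$ and the other containing all remaining queries about $T^b\setminus H_\kappa$; moving $q_{h_\kappa}$ downward past a weighted query about a leaf of $H_\kappa$ would require merging those two subtrees in a way that preserves the search property, and there is no reason this can be done at zero cost. Relatedly, your claim that $q_{h_p},q_{z_p}$ appear ``between the two blocks \dots as sequential search trees (which matches Case~1 of Definition~\ref{defi:realization})'' is not right: Case~1 concerns leaves associated with elements $u_j$, not the internal nodes $h_k,z_k$; in a realization those queries are appended at the very bottom, not interleaved between blocks.

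What the paper actually does is keep the element-by-element induction along $\Pi$ from Lemma~\ref{lemma:optimal_structure} (there is no separate block-by-block layer) and add one new subcase to the exchange step. When one wants to bring $q_{r_j}$ (or $q_{t_j}$, or $q_{a_{j1}}$) up past its parent $q_\nu$, the new possibility in $T^b$ is that $\nu$ is an \emph{ancestor} of $r_j$ (some $z_k$ or $h_\kappa$), so $q_{r_j}$ is the \emph{right} child of $q_\nu$ and the plain swap of Lemma~\ref{lemma:optimal_structure} would violate the search property. The fix is a \emph{left rotation} at $q_\nu$: this is a legal search-tree operation, and the resulting cost change is $-w(t_j)+w(\alpha)$ where $\alpha$ is the former left subtree of $q_\nu$. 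Because every leaf queried in $\alpha$ lies outside $H_\kappa$, one gets $w(\alpha)\le 2.5\,w(u_{j3})+W_{u_{j3}}<w(t_j)$ (and analogously $<w(a_{j1})$), yielding the contradiction. The remaining subcases (when $q_{r_j}$ is a left child) are handled by the same swaps and the same inequalities~(\ref{eq:weights}) and~(\ref{eq:weights2}) as before. So: drop the sliding idea, keep the $\Pi$-induction, and add the rotation subcase.
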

By using this  lemma together with Lemma \ref{lemma:key_2} we have that Theorem \ref{thm:main}
holds also for bounded-degree instances of the tree search problem.


\section{Approximation Algorithms}
We need to introduce some notation. For any forest $F$ of rooted trees and node $j \in F$, we denote by $F_j$ the subtree of $F$ composed by $j$ and all of its descendants.  We  denote the root of a tree $T$ by $r(T)$,
$\delta(u)$ denotes the number of children of $u$ and $c_i(u)$ is used to denote the $i$th  child of $u$
according to some arbitrarily fixed order. 
The following operation will be  useful  for modifying search trees:
%
	Given a search tree $D$ and a node $u \in D$, a \emph{left deletion} of $u$ is the operation that transforms $D$ into a new search tree by removing both $u$ and its left subtree from $D$ and, then, by
connecting the right subtree of $u$ to the parent of $u$ (if it exists). A \emph{right deletion} is analogously defined.

Given a search tree $D$ for $T$, we use $l_u$ to denote the leaf of $D$ assigned to node $u$ of $T$.
\subsection{The natural greedy algorithm attains $2$-approximation}
		Consider a search tree $D$ for $T$. Notice that when we follow a path from the root of $D$ to one of its leaves, we reduce the search space (eliminate part of $T$) whenever we visit a new node. Therefore, we can associate with each node of $D$ the subtree of $T$ which may still contain the node we search for. Notice that the tree $T'$ associated with node $v \in D$ is exactly the one induced by the nodes of $T$ that correspond to the leaves of $D_v$, hence $w(T') = w(D_v)$.
	E.g., in Fig.\ \ref{fig:prob-defi} the node $<f>$ in $D$ is associated with $T_{d}.$ 

	We can transform a search tree $D$ for $T$ into a search tree $D'$ for an arbitrary subtree $T'$ of $T$. This search tree $D'$ is computed by taking each node $v \in D$ assigned to a node $A(v)$ in $T - T'$ and applying a left deletion if $A(v)$ is an ancestor of $r(T')$ or a right deletion otherwise. The important property of this construction is that the path $r(D') \leadsto l_x$,
	for every $x \in T'$, is exactly the subpath obtained by removing all queries to nodes in $T - T'$ from $r(D) \leadsto l_x$. The next lemma formalizes this discussion:
	
	\begin{lemma} \label{subSS}
		Consider a tree $T$ and a search tree $D$ for it. Let $T'$ be a subtree of $T$. Then there is a search tree $D'$ for $T'$ such that $\dist{r(D')}{l_x}{D'} = \dist{r(D)}{l_x}{D} - n_x$, where $n_x$ is the number of nodes in the path $r(D) \leadsto l_x$ assigned to nodes in $T - T'$.
	\end{lemma}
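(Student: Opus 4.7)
The plan is to define $D'$ by iteratively applying the deletion recipe stated just above the lemma: while some surviving node $v$ has $A(v) \in T - T'$, apply a left deletion of $v$ if $A(v)$ is an ancestor of $r(T')$ in $T$, and a right deletion otherwise. Each step shrinks the binary tree and preserves the assignment $A$ on the surviving nodes, so the process terminates with a binary tree $D'$ whose nodes are all assigned to nodes of $T'$. I would carry out the argument by processing these deletions one at a time in an arbitrary order (e.g.\ top-down), and by tracking, after each deletion, both the set of leaves $l_x$ still present and the distance from the current root to each such $l_x$.

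First I would check that at every step the deletion does not discard any $l_x$ with $x \in T'$. By the search property of $D$, the right subtree of a node $v$ contains exactly those $l_y$ with $A(y)$ a descendant of $A(v)$ in $T$. Hence if $A(v)$ is an ancestor of $r(T')$, then $A(v)$ is an ancestor of every $x \in T'$, so every such $l_x$ sits in the right subtree of $v$ and survives the left deletion. Conversely, if $A(v) \in T - T'$ is not an ancestor of $r(T')$, then—since $T'$ is a connected subtree containing $r(T')$—$A(v)$ cannot be an ancestor of any $x \in T'$ either, so every $l_x$ sits in the left subtree of $v$ and survives the right deletion.

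Next I would establish the distance formula. A left or right deletion of $v$ splices the kept subtree into $v$'s former parent, so the distance to the current root decreases by exactly one for every $l_x$ that lies in the kept subtree of $v$, and is unchanged for every $l_x$ not passing through $v$. By the previous paragraph, for every $x \in T'$ the leaf $l_x$ always lies in the kept subtree. Therefore, summing over the sequence of deletions, the distance from $l_x$ to the current root decreases by one for each deleted $v$ on the original path $r(D) \leadsto l_x$; those deleted nodes on that path are precisely the ones with $A(v) \in T - T'$, numbering $n_x$. This yields $\dist{r(D')}{l_x}{D'} = \dist{r(D)}{l_x}{D} - n_x$.

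Finally I would verify that $D'$ is a legitimate search tree for $T'$. It has exactly one leaf per node of $T'$, since only nodes with assignment in $T - T'$ were removed. For the search property with respect to $T'$, I would observe that for any two surviving nodes $u,v$, whether $v$ lies in the right (resp.\ left) subtree of $u$ in $D'$ matches whether it did in $D$, because the splicing operations preserve all right-of/left-of relations among surviving nodes. Combined with the fact that ancestor relations in $T'$ are exactly those of $T$ restricted to $T'$, the search property of $D$ transfers verbatim. I expect the main obstacle to be carefully arguing this last invariance of right/left relationships across a sequence of deletions, but a short induction on the number of deletions, together with a case split on whether the deleted node is on the path between $u$ and $v$, handles it cleanly.
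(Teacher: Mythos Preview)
Your proposal is correct and follows exactly the construction the paper sketches in the paragraph immediately preceding the lemma: delete every node $v$ with $A(v)\in T-T'$ via a left deletion if $A(v)$ is an ancestor of $r(T')$ and a right deletion otherwise. The paper does not spell out the verifications you supply (survival of each $l_x$ with $x\in T'$, the per-deletion distance bookkeeping, and the preservation of the search property), so your write-up is simply a more rigorous elaboration of the same argument.
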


We show that the natural greedy algorithm guarantees an approximation factor of $2$. The algorithm 
can be formulated in two sentences. (1) Let $x$ be a node such that $| w(T_x) - w(T \setminus T_x)|$ is  minimized. Set $A(r(D)) = x.$ (2) Construct the right and left subtree of $D$ by recursively applying the algorithm to $T_x$ and  $T \setminus T_x,$ respectively.  

In order to prove that this algorithm results in a $2$-approximation, we show that any search tree $D^*$ can be turned into the greedy
search tree $D$ while the cost increases by at most $cost(D^*).$

The proof is by induction on the number of nodes $n$ of the input tree $T.$ For the basic case  $n=1$ there is nothing to show. 
Assume that the claim holds for any tree with at most $n-1$ nodes. In order to prove it true for $T$ we proceed in two steps. 

Let $x$ be the node queried at the root of $D.$ Also let $D^*_0$ (resp. $D_0$) and $D^*_1$ (resp. $D_1$) 
be the search tree for $T_x$ and $T \setminus T_x$ obtained from $D^*$ (resp. $D$)  via Lemma \ref{subSS}. (a) Construct a search tree $D'$ with $A(r(D')) = x$ and the left and right subtree being $D^*_1$ and $D^*_0$ 
respectively. It is not hard to see that $D'$ is a legal search tree. (b) Use the induction hypothesis for turning $D^*_0$ and $D^*_1$ into 
$D_0$ and $D_1$ respectively. It is straightforward to see that the transformation results in the tree $D.$

\begin{lemma} \label{lemma:1_ESA}
We have $cost(D') \leq cost(D^*) + w(T)/2.$
\end{lemma}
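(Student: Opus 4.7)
The plan is to decompose $\C{D'}$ via Lemma \ref{subSS}, reducing the target to a single weighted inequality, and then to establish that inequality by a bottom-up induction on $D^*$ in which the balance property of the greedy $x$ supplies the decisive extra weight. Since $D'$ has root $x$ with subtrees $D^*_1$ (left) and $D^*_0$ (right), each $v \in V$ has depth $1 + \dist{r(D^*_j)}{l_v}{D^*_j}$ in $D'$ for the appropriate $j \in \{0, 1\}$, so $\C{D'} = w(T) + \C{D^*_0} + \C{D^*_1}$. By Lemma \ref{subSS}, $\dist{r(D^*_j)}{l_v}{D^*_j} = \dist{r(D^*)}{l_v}{D^*} - n_v$, where $n_v$ is the number of nodes on the root-to-$l_v$ path in $D^*$ whose assignment lies on the opposite side of $T_x$ from $v$. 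Summing, $\C{D^*_0} + \C{D^*_1} = \C{D^*} - \sum_v n_v w(v)$, so the target inequality reduces to $\sum_v n_v w(v) \geq w(T)/2$.

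Reorganizing by internal node rather than by $v$, I get $\sum_v n_v w(v) = \sum_u \phi(u)$ where $\phi(u) := w(S_u \cap X^c_{A(u)})$, with $S_u$ the set of candidate marked nodes at $u$ and $X_y \in \{T_x, T \setminus T_x\}$ the side of the bipartition containing $y$. I will prove by bottom-up induction on $D^*$ the invariant
\[
\Phi(u) := \sum_{u' \in D^*_u \text{ internal}} \phi(u') \;\geq\; \min\bigl(w(S_u \cap T_x),\, w(S_u \cap (T \setminus T_x))\bigr),
\]
splitting into three cases on the location of $A(u)$: (i) $A(u) \in T_x$; (ii) $A(u) \in T \setminus T_x$ is a strict ancestor of $x$ in $T$; (iii) $A(u) \in T \setminus T_x$ lies in a different branch of $T$ from $x$. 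In each case the search property of $D^*$ forces one of the children's $S$-sets to lie entirely on a single side of the bipartition ($R_u \subseteq T_{A(u)}$ is contained in one side, and in case (ii) even $L_u \subseteq T \setminus T_x$), so the inductive step reduces to a short arithmetic verification of the form ``$\alpha + \min(\alpha,\beta) \geq \min(\alpha+\gamma,\beta)$''.

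Applied at the root, this invariant yields $\sum_v n_v w(v) \geq \min(w(T_x), w(T \setminus T_x))$, which equals $w(T)/2$ exactly in the balanced case and closes the proof there. \textbf{The main obstacle} is the unbalanced case, where this bound falls short of $w(T)/2$ by $\tfrac12|w(T_x) - w(T \setminus T_x)|$, and here the greedy choice of $x$ must enter the argument. Greedy guarantees that for every $y \in V$ we have $w(T_y) \leq \min(w(T_x), w(T \setminus T_x))$ or $w(T_y) \geq \max(w(T_x), w(T \setminus T_x))$: no node of $T$ induces a split of intermediate weight. I will strengthen the invariant to carry an additional deficit term tracking $\tfrac12(w(T) - 2\min(w(T_x), w(T \setminus T_x)))$ and argue through the same three-case induction that whenever $A(u)$ queries a majority-side node, the greedy constraint either trivializes one side of the child $S$-set (so the stronger form of the base case absorbs the deficit) or forces extra wrong-side mass to appear in descendant queries, thereby closing the gap. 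Carefully propagating this deficit through the recursive argument is the technically intricate step.
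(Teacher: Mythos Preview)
Your reduction to the inequality $\sum_v n_v w(v) \ge w(T)/2$ is correct and matches the paper's starting point, and your basic invariant $\Phi(u)\ge \min\bigl(w(S_u\cap T_x),\,w(S_u\cap(T\setminus T_x))\bigr)$ is both correctly stated and correctly proved by your three-case analysis. That yields the balanced case cleanly.

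The genuine gap is the unbalanced case. You do not give a concrete strengthened invariant---only the promise that some deficit term will be ``carefully propagated''---and there is a structural reason to doubt this can be made to work as sketched. Your $\Phi(u)$ depends only on $S_u$ and on the assignments inside $D^*_u$, whereas the greedy property you want to exploit (no $y$ with $w(T_y)$ strictly between $w(T\setminus T_x)$ and $w(T_x)$) is a global statement about \emph{all} of $T$. In particular, for any subtree $D^*_u$ with $S_u\subseteq T_x$ we have $\Phi(u)=0$, so a strengthened invariant cannot be a function of $w(S_u\cap T_x)$ and $w(S_u\cap(T\setminus T_x))$ alone; it would have to encode which nodes of $T$ are queried inside $D^*_u$ and how their $T$-subtree weights compare to $w(T\setminus T_x)$. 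You have not proposed such an encoding, and the phrase ``forces extra wrong-side mass to appear in descendant queries'' does not correspond to any provable step.

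The paper's route for the unbalanced case is entirely different and not inductive over $D^*$. Assuming w.l.o.g.\ $x^*\in T_x$ and $w(T_x)>w(T\setminus T_x)$, it looks at the path $x_1=x^*,x_2,\dots,x_n=x$ in $D^*$ from the root to the node querying $x$, and lets $k$ be the length of the initial segment with $x_i\in T_x\setminus\{x\}$. The greedy choice of $x$ is used \emph{pointwise along this path}: for each $i\le k$ one shows $|w(T_{x_i})-w(T\setminus T_{x_i})|\ge |w(T_x)-w(T\setminus T_x)|$, and from this deduces $w(T_{x_i})\le w(T\setminus T_x)$. Then $T_x$ is split into $T'=\bigcup_{i\le k}T_{x_i}$ and $T''=T_x\setminus T'$; every leaf in $T\setminus T_x$ drops at least $k$ levels in $D^*_1$, every leaf in $T''$ drops at least one level in $D^*_0$, giving
\[
\textrm{cost}(D')\ \le\ \textrm{cost}(D^*)+w(T')-(k-1)\,w(T\setminus T_x),
\]
and the bound $w(T')\le\sum_{i\le k}w(T_{x_i})\le k\,w(T\setminus T_x)$ finishes. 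This path-based accounting is exactly where the greedy property enters, and it is not a bottom-up invariant on $D^*$. If you want to rescue your approach you would need to formulate an invariant that tracks, for each $u$, information about the $T$-subtree weights of the queries on the path from $r(D^*)$ down to $u$; at that point you are essentially rediscovering the paper's argument.
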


\begin{proof}[Proof sketch]
Let $x$ and $x^*$ be the nodes queried at the root of $D^\prime$ and $D^*$, respectively. W.l.o.g. we assume $x \neq x^*$, as otherwise the lemma trivially holds. We can also assume that $x^*$ is a node from $T_x$, because the opposite case is analyzed analogously.

We shall first analyze the case $w(T_x) \leq w(T - T_x),$ i.e.,  $w(T_x) \leq w(T)/2$. As any path from $r(D^*)$ to a leaf in $D^*$ contains $r(D^*)$ and $T - T_x$ does not contain $x^*$, Lemma~\ref{subSS} states that the depth of any leaf in $D^*_1$ is at least by one smaller than it is in $D^*$. The lemma also implies that the depth of any leaf in $D^*_0$ is not greater than it is in $D^*$. So we have
$$cost(D^\prime) = 
w(T) + cost(D_0^*) + cost(D_1^*) 
	 \leq 
	w(T) + cost(D^*) - w(T - T_x) \leq cost(D^*) + w(T)/{2}.
$$
The case  $w(T_x) > w(T - T_x)$ requires a  more involved analysis and we defer it to the appendix due to the space limitations.
\end{proof}
\remove{

\emph{Case 2:} $w(T_x) > w(T - T_x)$. Let $x^1,\ldots,x^n$ be the nodes successively queried when the path $r(D^*) \leadsto r(D^\prime)$ is traversed in $D^*$. In particular, $x^1 = x^*$ and $x^n = x$. Let $k < n$ be such that $x^i$ is a node from $T^x - \{x\}$ for $i = 1,\ldots,k$ and $x^{j+1} \notin T^x - \{x\}$. 
In this extended abstract we assume that $w(T_{x^i}) \neq w(T - T_x))$ for $i = 1,\ldots,k$. The equality case can only occur when there is tie regarding the choice of node $x$ in step (1) of the algorithm, and then the above scenario can be avoided by employing a suitable tie breaking rule. In the full paper we will show by a more intricate case analysis that the approximation factor holds regardless of the tie breaking rule. From the assumption follows that $w(T_x - T_{x^i}) > 0$.

For $i = 1,\ldots,k$ we know that $w(T_{x^i}) < w(T - T_{x^i})$, because otherwise $w(T_x) - w(T - T_x) = w(T_{x^i}) + w(T_x - T_{x^i}) - w(T - T_x) > w(T_{x^i}) - w(T_x - T_{x^i}) - w(T - T_x) = w(T_{x^i}) - w(T - T_{x^i}) \geq 0$, so $x^i$ would have been chosen instead of $x$ in step (1) of the algorithm.

From this fact, it follows that $w(T_{x_i}) \leq w(T - T_x)$ for $i = 1,\ldots,k$. This is because otherwise $w(T_x) - w(T - T_x) = w(T_{x^i}) + w(T_x - T_{x^i}) - w(T - T_x) >  w(T - T_x) +  w(T_x - T_{x^i}) - w(T_{x^i}) = w(T - T_{x^i}) - w(T_{x^i}) > 0$, so $x^i$ would have been chosen instead of $x$ in step (1).

Let $T^\prime := \bigcap_{i = 1}^k T_{x^i}$ and let $T^{\prime\prime} := T_x - T^\prime$. Note that $T^\prime$ is a forest in general and $T^\prime \cup T^{\prime\prime} = T_x$. We are going to reason about the search tree depths of the nodes in $T - T_x$, $T^\prime$, and $T^{\prime\prime}$ separately.

$D^*_0$ queries all nodes from $T^\prime$, and Lemma~\ref{subSS} states that the depth of those nodes is not greater in $D^*_0$ than it is in $D^*$. 

The nodes from $T^{\prime\prime}$ are as well all queried in $D^*_0$. For these nodes we know that in $D^*$ the node $x^n = x$ is queried before them. As $x$ is not queried by $D^*_0$, the depth of each node from $T^{\prime\prime}$ in $D^*_0$ is by at least one shorter than it is in $D^*$.

Finally, the leaves in $D^*$ corresponding to the nodes from $T - T_x$ are descendants of the nodes in $D^*$ querying $x^1,\ldots,x^k$. These $k$ nodes are not contained in $D^*_1$, so the depth of each node query in $D^*_1$ is at least by $k$ smaller than it is in $D^*$. Combining the findings, we obtain
\begin{align*}
cost(D^\prime) = & \ w(T) + \sum_{v \in T^\prime} w(v) d(r(D_0^*),l_v) + \sum_{v \in T^{\prime\prime}} w(v) d(r(D_0^*),l_v)
+ \sum_{v \in T - T_x} w(v) d(r(D^*_1),l_v)
\\ 
\leq & \ w(T) + \sum_{v \in T^\prime} w(v) d(r(D^*),l_v) + \sum_{v \in T^{\prime\prime}} w(v) \bigl(d(r(D^*),l_v) - 1 \bigr)
+ \sum_{v \in T - T_x} w(v) \bigl(d(r(D^*),l_v) - k\bigr)
\\
= & \ w(T) + cost(D^*) - w(T^{\prime\prime}) - k w(T - T_x) \ .
\end{align*}
As $T^\prime = T - ((T - T_x) \cup T^{\prime\prime})$, we have $w(T^\prime) = w(T) - w(T - T_x) - w(T^{\prime\prime})$, so
\[
cost(D^\prime) \leq cost(D^*) + w(T^\prime) - (k-1) w(T - T_x) \ .
\]
We have argued above that $w(T_{x^i}) \leq w(T - T_x)$ for $i = 1,\ldots,k$. Therefore, $w(T^\prime) = w(\bigcup_{i=1}^k T_{x^i}) \leq \sum_{i=1}^k w(T_{x^i}) \leq k w(T - T_x)$, and
\[
	cost(D^\prime) \leq cost(D^*) - (k-1) w(T - T_x) + k w(T - T_x) = cost(D^*) + w(T - T_x) \leq cost(D^*) + w(T)/2 \ .
\]
\end{proof}
}
It follows that the cost of $D$ can be bounded from above by 
$$cost(D) = w(T) + cost(D_0) + cost(D_1) \leq w(T) +2 cost(D^*_0) + 2cost(D^*_1) = 
2cost(D') - w(T) \leq 2cost(D^*).$$
The first inequality follows from the induction hypothesis and the second one is due to Lemma \ref{lemma:1_ESA}.

We have proven the following result.

\begin{theorem}
The greedy strategy is a polynomial  $2$-approximation algorithm for the tree search problem. 
\end{theorem}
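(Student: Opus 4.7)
The plan is to prove by induction on $n = |V(T)|$ that for every input tree $T$, the greedy search tree $D$ produced by the algorithm satisfies $\C{D} \le 2 \cdot \C{D^*}$, where $D^*$ is any optimal search tree for $T$. The base case $n=1$ is immediate since both trees contain a single leaf of depth $0$. The polynomial running time is essentially a side remark: at each recursive call one needs only a post-order traversal to compute $w(T_v)$ for every $v$ in the current subforest and then pick the node $v$ minimizing $|w(T_v) - w(T\setminus T_v)|$, which together with the recursion amortizes to $O(n^2)$ (and can be sharpened with minor bookkeeping).

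For the inductive step, I would let $x$ be the root-query chosen by the greedy algorithm and decompose the greedy output as $D = (x; D_1, D_0)$, where $D_0$ and $D_1$ are the greedy trees for $T_x$ and $T\setminus T_x$ respectively. From $D^*$, I apply Lemma~\ref{subSS} with $T' := T_x$ and $T' := T\setminus T_x$ to extract search trees $D^*_0$ and $D^*_1$ for these two subinstances. I then assemble the intermediate tree $D' := (x; D^*_1, D^*_0)$, which is a legal search tree for $T$ because its subtrees are legal search trees for the two subforests $T\setminus T_x$ and $T_x$.

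The crux now is simply to combine the costs. Since $D'$ queries $x$ at the root and contains each leaf exactly one level deeper than in the corresponding subtree, $\C{D'} = w(T) + \C{D^*_0} + \C{D^*_1}$, and the same decomposition gives $\C{D} = w(T) + \C{D_0} + \C{D_1}$. By the inductive hypothesis applied to $T_x$ and $T\setminus T_x$ (both strictly smaller than $T$), $\C{D_0} \le 2\,\C{D^*_0}$ and $\C{D_1} \le 2\,\C{D^*_1}$. Stringing these together,
\begin{equation*}
\C{D} \;=\; w(T) + \C{D_0} + \C{D_1} \;\le\; w(T) + 2\,\C{D^*_0} + 2\,\C{D^*_1} \;=\; 2\,\C{D'} - w(T).
\end{equation*}
Finally, invoking Lemma~\ref{lemma:1_ESA}, which guarantees $\C{D'} \le \C{D^*} + w(T)/2$, gives $\C{D} \le 2\,\C{D^*} + w(T) - w(T) = 2\,\C{D^*}$, completing the induction.

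The main obstacle for this theorem in isolation is really absorbed by Lemma~\ref{lemma:1_ESA}: everything else is bookkeeping that lines up once $\C{D'} - \C{D^*} \le w(T)/2$ is available. Within the theorem's proof, the only subtlety to watch is that $D'$ is indeed a legal search tree (so the search property is preserved when we hang $D^*_0$ and $D^*_1$ under a fresh query to $x$); this follows because $D^*_0$ and $D^*_1$ were extracted from $D^*$ via Lemma~\ref{subSS} and therefore respect the subtree structure of $T_x$ and $T\setminus T_x$. With that in hand, the induction closes cleanly.
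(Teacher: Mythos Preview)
Your proof is correct and follows essentially the same approach as the paper: the same induction on $|V(T)|$, the same construction of $D'$ by extracting $D^*_0$ and $D^*_1$ from $D^*$ via Lemma~\ref{subSS} and hanging them under a root querying $x$, and the same chain $\C{D} = w(T) + \C{D_0} + \C{D_1} \le w(T) + 2\C{D^*_0} + 2\C{D^*_1} = 2\C{D'} - w(T) \le 2\C{D^*}$ using Lemma~\ref{lemma:1_ESA}. The only implicit step you might make explicit is that the inductive hypothesis gives $\C{D_0} \le 2\,\OPT{T_x} \le 2\,\C{D^*_0}$ (since $D^*_0$ is merely feasible, not optimal, for $T_x$), but this is immediate.
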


\subsection{An FPTAS for Searching in Bounded-Degree Trees}
\label{sec:search}
	We now present an FPTAS for searching in trees with bounded degree. First, we devise a dynamic programming algorithm whose running time is exponential in the height of optimal search trees. Then we essentially argue that the height of optimal search trees  is $O(\Delta(T) \cdot (\log w(T) + \log n))$, thus the previous algorithm has a pseudo-polynomial running time. Finally, we employ a standard scaling technique to obtain an FPTAS.

\remove{
 We also extend the set difference operation to trees: given trees $T^1 = (V^1, E^1)$ and $T^2 = (V^2, E^2)$, $T^1 - T^2$ is the forest of $T^1$ induced by the nodes $V^1 - V^2$.}



We often construct a search trees starting with its `left part'. In order to formally describe such constructions, we define a \emph{left path} as an ordered path where every node has only a left child. In addition, the \emph{left path} of an ordered tree $T$ is defined as the ordered path we obtain when we traverse $T$ by only going to the left child, until we reach a node which does not have a left child.

\medskip \noindent
{\bf A dynamic programming algorithm.} 
In order to find an optimal search tree in an efficient way, we need to define a family of auxiliary problems denoted by $\mathcal{P}^B(F,P)$. In the following paragraphs we describe the essential structures needed in these subproblems and then we show how to use
the subproblems to find an optimal search tree.

\remove{	In order to find an optimal search tree in an efficient way, we need to define a family of auxiliary problems denoted by $\mathcal{P}^B(F,P)$. In the following paragraphs we describe the essential structures needed in these subproblems and then we show that we can use a problem $\mathcal{P}^B$ to find an optimal search tree. }

	First we  introduce the concept of an \emph{extended search tree}, which is basically a search
tree with some extra nodes that have not been associated with a query yet (unassigned nodes) and some other
nodes that cannot be associated with a query (blocked nodes). 


	\begin{definition}
	An \emph{extended search tree} (EST) for a forest $F=(V,E)$ is a triple $D = (N, E', A)$, where $N$ and $E'$ are the nodes and edges of an ordered binary tree and the assignment $A :  N \rightarrow V \cup \{\bl,\un\}$ simultaneously satisfy the following properties:
	\begin{enumerate}
		\item[(a)] For every node $v$ of $F$, $D$ contains both a leaf $\ell$ and an internal node $u$ such that 
		$A(\ell) = A(u) = v$; 
		\item[(b)] $\forall u,v \in D$, with $A(u),A(v) \in F$, the following holds:
		 If  $v$ is in the right subtree of $u$ then $A(v) \in F_{A(u)}$. 
		 If $v$ is in left subtree of $u$ then $A(v) \notin F_{A(u)}$;
		 \item[(c)] If $u$ is a node in $D$ with $A(u) \in \{\bl,\un\}$, then $u$ does not have a right child.
	\end{enumerate}
	\end{definition}

	If we drop (c) and also the requirement regarding internal nodes in (a) we have the definition of a search tree for $F$. The cost of an EST $D$ for $F$ is analogous to the cost of a search tree and is given by $cost(D) = \sum d(r(D), u) w(A(u))$, where the summation is taken over all leaves $u \in D$ for which $A(u) \in F$.
	
	At this point we establish a correspondence between optimal EST's and optimal search trees. Given an EST $D$ for a tree $T$, we can apply a 
	left deletion to the internal node of $D$ assigned to $r(T)$ and right deletions to all nodes of $D$ that are blocked or unassigned, getting a search tree $D'$ of cost $cost(D') \le cost(D) - w(r(T))$. Conversely, we can add a node assigned to $r(T)$ to a search tree $D'$ and get an EST $D$ such that $cost(D) \le cost(D') + w(r(T))$. Employing these observations we can prove the following lemma:
	
	\begin{lemma} \label{ESTtoST}
	Any  optimal EST for a tree $T$ can be converted into an optimal search tree for $T$ (in linear time). In addition, the existence of an optimal search tree of height $h$ implies the existence of an optimal EST of height $h + 1$.
	\end{lemma}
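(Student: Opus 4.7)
The plan is to exploit the two conversion procedures already described in the paragraph preceding the lemma. Let me call them, for brevity, the \emph{shrink} map (EST $\to$ search tree) and the \emph{grow} map (search tree $\to$ EST), and recall the established cost guarantees: for every EST $D$ of a tree $T$, shrinking yields a search tree $D'$ with $\C{D'} \le \C{D} - w(r(T))$; for every search tree $D'$, growing yields an EST $D$ with $\C{D} \le \C{D'} + w(r(T))$. A key observation is that the grow map adds a single new internal node (assigned to $r(T)$) on top of $D'$, so it increases the height by exactly one, while the shrink map only removes nodes and therefore never increases the height.

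For the first part, let $D^*$ be an optimal EST for $T$ and let $D := \mathrm{shrink}(D^*)$. Since the shrink operation is a constant number of deletions per node of $D^*$, the conversion runs in linear time, and the output is a legal search tree by construction. I would then show optimality of $D$ by contradiction: suppose some search tree $D^\dagger$ for $T$ satisfies $\C{D^\dagger} < \C{D}$. Let $\widetilde D := \mathrm{grow}(D^\dagger)$; then
\[
\C{\widetilde D} \le \C{D^\dagger} + w(r(T)) < \C{D} + w(r(T)) \le \C{D^*},
\]
contradicting optimality of $D^*$ among ESTs. Hence $D$ is optimal.

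For the second part, let $D'$ be an optimal search tree for $T$ of height $h$, and set $D := \mathrm{grow}(D')$. By the remark on heights above, $D$ is an EST of height exactly $h+1$, and $\C{D} \le \C{D'} + w(r(T))$. To see that $D$ is optimal among all ESTs, take any EST $D''$ and let $D''' := \mathrm{shrink}(D'')$. Then $\C{D'''} \le \C{D''} - w(r(T))$, and since $D'$ is optimal among search trees, $\C{D'''} \ge \C{D'}$; combining,
\[
\C{D''} \;\ge\; \C{D'''} + w(r(T)) \;\ge\; \C{D'} + w(r(T)) \;\ge\; \C{D}.
\]
Thus $D$ is an optimal EST, and it has height $h+1$, proving the second claim.

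The proof is essentially bookkeeping on top of the two conversions. The main step that requires care, and which I would handle carefully when filling in details, is verifying that the specific constructions used for shrink and grow produce objects that respect all three EST axioms (in particular condition (c) regarding right children of blocked/unassigned nodes) and that the claimed cost inequalities hold with exactly the $\pm w(r(T))$ slack stated. Once those are pinned down, both parts of the lemma follow immediately from the two short contradiction arguments above.
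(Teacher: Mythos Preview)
Your proposal is correct and follows exactly the approach the paper sketches (the paper gives no proof beyond the paragraph preceding the lemma, and your two contradiction arguments are the intended completion). One minor wording issue: the grow map does not place the new internal node ``on top of $D'$'' (that would add $w(T)$ to the cost, not $w(r(T))$) but rather inserts it directly above the leaf $l_{r(T)}$, so the height increases by at most one --- still sufficient for the second claim, and consistent with your closing caveat about verifying the constructions.
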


	So we can focus on obtaining optimal EST's. First, we introduce concepts which serve as a building blocks for EST's. A \emph{partial left path} (PLP) is a left path where every node is assigned (via a function $A$) to either $blocked$ or $unassigned$. Now consider an EST $D$ and let $L = \{l_1, \ldots, l_{|L|}\}$ be its left path. We say that $D$ is \emph{compatible} with a PLP $P = \{p_1, \ldots, p_{|P|}\}$ if $|P|=|L|$ and $A(p_i) = \textrm{\emph{blocked}}$ implies $A(l_i) = \textrm{\emph{blocked}}$.
The tree in Figure \ref{fig:case1}.(c) is compatible with the path of 	Figure \ref{fig:case1}.(b).

	This definition of compatibility implies a natural one to one correspondence between nodes of $L$ and $P$. Therefore, without ambiguity, we can use $p_i$ when referring to node $l_i$ and vice versa.
	
	Now we can introduce our subproblem $\mathcal{P}^B$. First, fix a tree $T$ with $n$ nodes and a weight function $w$.  Given a forest $F = \{T_{c_1(u)}, T_{c_2(u)} \ldots, T_{c_f(u)}\}$, a PLP $P$ and an integer $B$, 
the problem $\mathcal{P}^B(F,P)$ consists of  finding 
an EST for $F$ with minimum cost among those EST's for $F$ that are compatible with $P$ and have height at most $B$.
We  shall note  that $F$ is not a general subforest of $T$, but one consisting of  subtrees rooted
at  the first $f$ children of some node $u \in T$, for some $ 1 \leq f \leq \delta(u)$. 	

	Notice that if $P$ is a PLP where all nodes are unassigned and $P$ and $B$ are sufficiently large, then $\mathcal{P}^B(T, P)$ gives an optimal EST for $T$.

\medskip \noindent
{\bf Algorithm for $\mathcal{P}^B(F,P)$.}  We have a base case and also two other cases depending on the structure of $F$. In all these cases, although not explicitly stated, if $P$ does not contain unassigned nodes then the algorithm returns `not feasible'. If during its execution the  algorithm encounters a `not feasible' subproblem it ignores this choice in the enumeration. 

\medskip \noindent {\bf \em Base case:} {\em $F$ has only one node $u$.} In this case, the optimal solution for $\mathcal{P}^B(F,P)$ is obtained from $P$ by assigning its first unassigned node, say $p_i$, to $u$ and then adding a leaf assigned to $u$ as a right child of $p_i$.
Its cost is $i \cdot w(u)$.	

\medskip \noindent {\bf \em Case 1:} {\em $F$ is a forest $\{T_{c_1(u)}, \ldots, T_{c_f(u)}\}$.} The idea of the algorithm is to decompose the problem into subproblems for the forests $T_{c_f(u)}$ and $F \setminus T_{c_f(u)}$. For that, it needs to select which nodes of $P$ will be assigned to each of these forests. 

	The algorithm considers all possible bipartitions  of the unassigned nodes of $P$ and for each 
	bipartition $\mathcal{U} = (U^f, U^o)$ it computes an EST $D^{\mathcal{U}}$ for $F$ compatible with $P$. At the end, the algorithm returns the tree $D^{\mathcal{U}}$ with smallest cost. The EST $D^{\mathcal{U}}$ is constructed as follows:
	
	\begin{enumerate}
		\item Let $P^f$ be the PLP constructed by starting with $P$ and then setting all the nodes in $U^o$ as blocked (Figure \ref{fig:case2}.b). Similarly, let $P^o$ be the PLP constructed by starting with $P$ and setting all nodes in $U^f$ as blocked. 
		 Let $D^f$ and $D^o$ be optimal solutions for $\mathcal{P}^B(T_{c_f(u)},P^f)$ and $\mathcal{P}^B(F \setminus T_{c_f(u)}, P^o)$, respectively (Figure \ref{fig:case2}.c).
		
		\item The EST $D^{\mathcal{U}}$ is computed by taking the `union' of $D^f$ and $D^o$ (Figure \ref{fig:case2}.d). More formally, the `union' operation consists of starting with the path $P$ and then replacing: (i) every  node in $P \cap U^f$  by the corresponding node in the left path of $D^f$ and its right subtree; (ii) every  node in $P \cap U^o$  by the corresponding node in the left path of $D^o$ and its right subtree.
	\end{enumerate}

   	Notice that the height of every EST $D^{\mathcal{U}}$ is at most $B$; this implies that the algorithm returns a feasible solution for $\mathcal{P}^B(F, P)$. Also, the cost of $D^{\mathcal{U}}$ is given by $OPT(\mathcal{P}^B(T_{c_f(u)},P^f))+OPT(\mathcal{P}^B(F \setminus T_{c_f(u)},P^o))$.

		The optimality of the above procedure relies on the fact we can build an EST $\bar{D}^f$ for $T_{c_f(u)}$ by starting from an optimal solution $D^*$ for $\mathcal{P}^B(F,P)$ and performing the following operation at each node $v$ of its left path: (i) if $v$ is unassigned we assign it as blocked; (ii) if $v$ is assigned to a node in $F \setminus T_{c_f(u)}$ we assign it as blocked and remove its right subtree. We can construct an EST $\bar{D}^o$ for $F \setminus T_{c_f(u)}$ analogously. Notice that $cost(\bar{D}^f) + cost(\bar{D}^o) = cost(D^*)$. The proof is then completed by noticing that, for a particular choice of $\mathcal{U}$, $\bar{D}^f$ and $\bar{D}^o$ are feasible for $\mathcal{P}^B(T_{c_f(u)},P^f)$ and $\mathcal{P}^B(F \setminus T_{c_f(u)}, P^o)$, so the solution
returned by the above algorithm costs at most 
$OPT(\mathcal{P}^B(T_{c_f(u)},P^f))+OPT(\mathcal{P}^B(F \setminus T_{c_f(u)},P^o)) \le cost(D^*)$.

\medskip \noindent {\bf \em Case 2:} {\em $F$ is a tree $T_v$.} Let $p_i$ be an unassigned node of $P$ and let $t$ be an integer in the interval $[i+1, B]$. The algorithm considers all possibilities for $p_i$ and $t$ and computes an EST $D^{i,t}$ for $T_v$ of smallest cost satisfying the following: (i) $D^{i,t}$ is compatible with $P$; (ii) its height is at most $B$; (iii) the  node of the left path of $D^{i,t}$ corresponding to $p_i$ is assigned to $v$; (iv) the leaf of $D^{i,t}$ assigned to $v$ is located at level $t$. The algorithm then returns the  tree $D^{i,t}$ with minimum cost. 

In order to  compute $D^{i,t}$ the algorithm executes the following steps: 
\begin{enumerate}
		\item Let $P^i$ be the subpath of $P$ that starts at the first node of $P$ and ends at $p_i$. Let $P^{i,t}$ be a left path obtained by appending $t - i$ unassigned nodes to $P^i$ and assigning $p_i$ as blocked (Figure \ref{fig:case1}.b).
 Compute an optimal solution $D'$  for $\mathcal{P}^B(\{T_{c_1(v)}, T_{c_2(v)}, \ldots, T_{c_{\delta(v)}(v)}\}, P^{i,t})$. 

		\item Let $p'_i$ be the node of $D'$ corresponding to $p_i$ and let $y'$ be the last node of the left path of $D'$ (Figure \ref{fig:case1}.c). The tree $D^{i,t}$ is constructed by modifying $D'$ as follows (Figure \ref{fig:case1}.d): make the left subtree of $p'_i$ becomes its right subtree; assign $p'_i$ to $v$; add a leaf assigned to $v$ as the left child of $y'$; finally, as a technical detail, add some blocked nodes to extend the left path of this structure until the left path has the same size of $P$. \label{step3}
\end{enumerate}
	

		It follows from properties (i) and (ii) of the trees $D^{i,t}$'s that the above procedure returns a feasible solution for $\mathcal{P}^B(T_v, P)$. The proof of the optimality of this solution uses the same type of arguments as in Case 1 and is deferred to the appendix.
		
	
	
\remove{
	
	Now we argue that $D$ is optimal. For that, consider an optimal solution $D^*$ for $\mathcal{P}^B(T_v, P)$. Let $\bar{x}'$ and $\bar{z}'$ be respectively the internal node and the leaf of $D^*$ assigned to $v$; notice that $\bar{x}'$ lies in the left path of $D^*$ and that $\bar{z}'$ lies in the left path of the right subtree of $\bar{x}'$. We construct $\bar{D}'$ from $D^*$ by essentially applying the inverse of Step \ref{step3} of the algorithm: remove from $D^*$ the left subtree of $\bar{x}'$, make the right subtree of $\bar{x}'$ its left subtree, assign $\bar{x}'$ as blocked and remove $\bar{z}'$. (One can use Figures \ref{fig:case1}.c and \ref{fig:case1}.b to better visualize this construction.)

	The tree $\bar{D}'$ is actually an EST for the forest $\{T_{c_1(v)}, \ldots, T_{c_{\delta(v)}(v)}\}$ and has height at most $B$. Now construct $\bar{P}'$ by taking the left path of $\bar{D}'$, setting all the non-blocked nodes as unassigned and also setting every node after $\bar{x}'$ as unassigned. Clearly $\bar{D}'$ is compatible with $\bar{P}'$ and thus feasible for $\mathcal{P}(\{T_{c_1(v)}, \ldots, T_{c_{\delta(v)}(v)}\}, \bar{P}', B)$. 
	
	Notice, however, that $\bar{P}'$ starts with the prefix of $P$ until $\bar{x}'$, then it has a blocked node corresponding to $\bar{x}'$ and then some unassigned nodes. Moreover, the last node of $\bar{P}'$ comes from the parent of $\bar{z}'$ in $D^*$, and since $D^*$ has height at most $B$, we have that $|V(\bar{P}')| \le B$. Therefore, $\bar{P}' = P^{\bar{x}',\bar{t}}$, where $\bar{t} = |V(\bar{P}')|$. Since the algorithm considered the tree $D^{\bar{x}', \bar{t}}$, it follows from previous discussion that the returned solution $D$ has cost at most $OPT(\{T_{c_1(v)}, \ldots, T_{c_{\delta(v)}(v)}\}, P^{\bar{x}',\bar{t}}, B) + \bar{t} \cdot w(v)$, which is at most $cost(\bar{D}') + (|\bar{P}'| + 1) w(v)$. Finally, noting that this last quantity is actually the cost of $D^*$ we have that $cost(D) \le cost(D^*)$ and hence $D$ is optimal.
	}

\remove{

\medskip	
	
	******************************* WE COULD SAY HERE THAT THE PROOF IS SIMILAR TO THAT EMPLOYED FOR CASE 1 AND THAT 
	THE DETAILS CAN BE FOUND IN APPENDIX ***************************************************************************

\medskip

	We now argue that $D$ is optimal. For that, consider an optimal solution $D^*$ for $\mathcal{P}(F, P, B)$. Let $\bar{U}^f$ be the nodes of the left path of $D^*$ which are assigned to nodes in $T_{c_f(u)}$ and let $\bar{U}^o$ be the nodes in this path which are assigned to nodes in $F - T_{c_f(u)}$; if $\bar{U}^f \cup  \bar{U}^o$ does not contain all nodes of this left path which correspond to unassigned nodes of $P$, extend $\bar{U}^f$ and $\bar{U}^o$ arbitrarily so they partition these nodes. Construct $\bar{P}^f$ by taking $P$ and setting the nodes in $\bar{U}^o$ as blocked, and construct $\bar{P}^o$ by taking $P$ and setting the nodes in $\bar{U}^f$ as blocked. Finally, define $\bar{D}^f$ by taking $D^*$, setting all nodes in $\bar{U}^o$ as blocked and deleting all right children of the nodes in $\bar{U}^o$. Similarly define $\bar{D}^o$ by taking $D^*$ and blocking the nodes in $\bar{U}^f$ and removing their right children.
	
	Notice that $\bar{D}^f$ is an ESS for $T_{c_f(u)}$ compatible with $\bar{P}^f$ and $\bar{D}^o$ is an ESS for $F - T_{c_f(u)}$ compatible with $\bar{P}^o$. Moreover, both $\bar{D}^f$ and $\bar{D}^o$ have height at most $B$, hence they are feasible for $\mathcal{P}(T_{c_f(u)}, \bar{P}^f, B)$ and $\mathcal{P}(F - T_{c_f(u)}, \bar{P}^o, B)$ respectively. Therefore, it is not difficult to see that $cost(D^*) = cost(D^f) + cost(D^o) \ge OPT(T_{c_f(u)},\bar{P}^f, B) + OPT(F - T_{c_f(u)},\bar{P}^o, B)$.
	
	Consider the run of our algorithm that considered the `correct' partition $U^f = \bar{U}^f$ and $U^o = \bar{U}^o$. For this particular choice, the algorithm defines the PLPS's such that $P^f = \bar{P}^f$ and $P^o = \bar{P}^o$. From a previous observation we get that the algorithm returns a solution of cost at most $OPT(T_{c_f(u)}, \bar{P}^f, B) + OPT(F - T_{c_f(u)}, \bar{P}^o, B) \le cost(D^*) = OPT$.
	
****************************************************************************************************************************************
}

\medskip \noindent
{\bf Computational complexity.} Notice that it suffices to consider problems $\mathcal{P}^B(F, P)$'s where $|P| \le B$, since all others are infeasible. We claim that, by employing a Dynamic Programming strategy, we can compute all these problems in $O(n^2  2^{2B})$ time. First, there are $O(n 2^B)$ such problems; this follows from the fact that for each node $u$ in $T$ there are two possible forests $F$ considered in subproblems ($F = T_u$ or $F = \{T_{c_1(u')}, T_{c_2(u')}, \ldots, T_{c_f(u')} = T_u\}$, where $u$ is the $f$-th child of $u'$) and the fact there are $O(2^B)$ PLP's of size at most $B$. It is not difficult to see that each of these problems can be solved in $O( n + 2^B)$ time, so the claim holds.
 

\medskip \noindent
{\bf An upper bound on the height of optimal search trees.}
	We now argue that there is an optimal search tree for $(T, w)$ whose height is $O(\Delta(T) \cdot(\log w(T) + \log n))$.

	The following lemma is the core of our `geometric decrease' argument. 
It essentially states that we can cut a constant factor of the total weight of an optimal search tree  by going down a number of levels that only depends on  the maximum degree of $T$.

	\begin{lemma} \label{lemma:geoDec}
		Consider an instance $(T,w)$ for our search problem and let $D^*$ be an optimal search tree for it. 
		Fix  $0 \le \alpha < 1$ and an integer  $c > 3 (\Delta(T) + 1)/\alpha$. Then, for every node $v^* \in D^*$ with $\dist{r(D^*)}{v^*}{D^*} \ge c$ we have that $w(D^*_{v^*}) \le \alpha \cdot w(D^*)$.
	\end{lemma}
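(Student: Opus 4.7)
My plan is to argue by contradiction: suppose there is $v^* \in D^*$ at depth at least $c > 3(\Delta(T)+1)/\alpha$ with $w(D^*_{v^*}) > \alpha\, w(D^*)$, and I will construct a search tree $\tilde D$ for $(T,w)$ with $\mathrm{cost}(\tilde D) < \mathrm{cost}(D^*)$, contradicting the optimality of $D^*$.

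First I would fix notation. Let $u_0 = r(D^*), u_1, \ldots, u_c = v^*$ denote the root-to-$v^*$ path in $D^*$, and write $V_{v^*} \subseteq V(T)$ for the set of nodes of $T$ associated with leaves of $D^*_{v^*}$. Because each edge query partitions the current candidate subtree of $T$ into two subtrees, $V_{v^*}$ is itself a subtree of $T$ with $w(V_{v^*}) = w(D^*_{v^*}) > \alpha\, w(T)$, and the $c$ queries on the path correspond to $c$ distinct edges of $T$ that together separate $V_{v^*}$ from $T \setminus V_{v^*}$. For each $i=0,\ldots,c-1$ let $R_i$ be the sibling subtree of $u_{i+1}$ in $D^*$; these subtrees, together with the path itself, account for all of $D^* \setminus D^*_{v^*}$.

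The central idea is to ``lift'' $D^*_{v^*}$ close to the root of $\tilde D$. The observation making this feasible is that every internal query of $D^*_{v^*}$ is an edge query on an edge lying inside $V_{v^*}$, which remains a valid query against the full tree $T$. I would place a copy of $D^*_{v^*}$ at the top of $\tilde D$ and then, at each of its leaves, attach a small ``completion'' subtree that resolves the nodes of $T \setminus V_{v^*}$ still consistent with the corresponding root-to-leaf responses. These completion subtrees can be constructed by recycling the sibling subtrees $R_0, \ldots, R_{c-1}$ from $D^*$, since each $R_i$ was already responsible for searching a portion of $T \setminus V_{v^*}$ in the original tree.

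The cost comparison then splits into two parts. Each node of $V_{v^*}$ has its depth in $\tilde D$ reduced by exactly $c$ compared to $D^*$, yielding savings at least $c\cdot w(V_{v^*}) > c\alpha\, w(T)$. Each node of $T \setminus V_{v^*}$ has its depth increased by at most $3(\Delta(T)+1)$ in $\tilde D$: intuitively, the boundary directions along which $T \setminus V_{v^*}$ hangs off $V_{v^*}$ can be disentangled by only $O(\Delta(T))$ extra queries per external node, thanks to the bounded-degree assumption. Summing over external nodes yields extra cost at most $3(\Delta(T)+1)\cdot w(T \setminus V_{v^*}) < 3(\Delta(T)+1)\, w(T)$, so $\mathrm{cost}(\tilde D) - \mathrm{cost}(D^*) < -c\alpha\, w(T) + 3(\Delta(T)+1)\, w(T)$, which is strictly negative whenever $c > 3(\Delta(T)+1)/\alpha$, giving the desired contradiction. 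The main obstacle is carrying out the completion construction rigorously and proving the tight $3(\Delta(T)+1)$ bound on the depth increase per external node; without the max-degree hypothesis the boundary of $V_{v^*}$ could have up to $c$ distinct edges and the completion overhead could grow linearly in $c$, cancelling out the savings entirely.
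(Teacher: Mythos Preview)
Your contradiction setup is right, and so is the qualitative idea of ``lifting'' the heavy part near the root and using Lemma~\ref{subSS} to recycle pieces of $D^*$. But the core accounting step is broken: the claim that every node of $T\setminus V_{v^*}$ has its depth increased by at most $3(\Delta(T)+1)$ is not justified, and in fact fails for your construction. An external component $P_j$ of $T\setminus V_{v^*}$ attaches to $V_{v^*}$ at a single boundary vertex $v_j$; in your $\tilde D$, every node of $P_j$ sits below the leaf $l_{v_j}$ of the lifted copy of $D^*_{v^*}$. The depth of that leaf inside $D^*_{v^*}$ is completely uncontrolled --- it can be as large as $|V_{v^*}|$ --- so a node $z\in P_j$ that sat at depth $O(1)$ in $D^*$ (say $z\in R_0$) can be pushed to depth $d(r(D^*_{v^*}),l_{v_j})+O(\Delta(T))$ in $\tilde D$. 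Bounded degree only limits how many external components attach at a \emph{given} vertex of $V_{v^*}$; it does nothing to bound the depth of that vertex's leaf in $D^*_{v^*}$. (A secondary issue: your ``reduced by exactly $c$'' for internal nodes also ignores the completion queries you append at each leaf, though that error is only $O(\Delta(T))$ and would be survivable.)

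The paper's argument avoids this by lifting far less: it places only a \emph{single} node $y\in T_x$ at the root of the new tree, followed by queries to the children $c_1(y),\ldots,c_{\delta(y)}(y)$ --- at most $\Delta(T)+1$ queries total --- and hangs below them the search trees for $T\setminus T_y$ and the $T_{c_i(y)}$ obtained from $D^*$ via Lemma~\ref{subSS}. Thus no node's depth increases by more than $\Delta(T)+1$, which is where the $(\Delta(T)+1)\,w(T)$ overhead term comes from. The savings do not come from moving all of $\tilde T$ up by $c$; they come from Lemma~\ref{subSS} deleting, for each $u\in\tilde T$, the many queries on $r(D^*)\leadsto v^*$ that lie outside $u$'s piece $T^i$. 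To make this deletion count large enough, the paper splits into two cases according to how many of the $c$ path-queries lie inside $T_x$ (the set $S$): if $|S|<2c/3$, taking $y=x$ already removes $c-|S|>c/3$ queries from every path into $\tilde T$; if $|S|\ge 2c/3$, one chooses $y$ as a centroid of $S$ inside $T_x$ so that each piece $T^i$ loses at least $|S|/2\ge c/3$ of the $S$-queries. Either way the savings are at least $(c/3)\,w(\tilde T)>(c\alpha/3)\,w(T)$, which beats the $(\Delta(T)+1)\,w(T)$ overhead precisely when $c>3(\Delta(T)+1)/\alpha$. Your sketch has no analogue of this centroid choice or the $|S|$ case split, and without it the overhead term cannot be controlled.
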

	
\begin{proof}[Proof sketch] (The full proof is deferred to the
appendix.) By means of contradiction assume the lemma does not hold
for some  $v^*$ satisfying its conditions. Let $\tilde{T}$ be the
tree associated with $v^*$, rooted at node $\tilde{r}$. Since by
hypothesis $\tilde{T}$ contains a large portion of the total weight
(greater than $\alpha \cdot w(D^*)$), we create the following search
tree $D'$ which makes sure parts of $\tilde{T}$ are queried closer to
$r(D')$: the root of $D'$ is assigned to $\tilde{r}$; the left tree of
$r(D')$ is a search tree for $T - T_{\tilde{r}}$ obtained via Lemma
\ref{subSS}; in the right tree of $r(D')$ we build a left path
containing nodes corresponding to queries for $c_1(\tilde{r}),
c_2(\tilde{r}), \ldots, c_{\delta(\tilde{r})}(\tilde{r})$, each
having as right subtree a search tree for the corresponding
$T_{c_i(\tilde{r})}$ obtained via Lemma \ref{subSS}. If $\bar{s}$ is
the number of nodes of $T - T_{\tilde{r}}$ queried in $r(D^*)
\leadsto v^*$, then Lemma \ref{subSS} implies that $D'$ saves at
least $\bar{s} - (\Delta(T) + 1)$ queries for each node in
$\tilde{T}$ when compared to $D^*$; this gives the expression
$cost(D') \le  cost(D^*) -  \bar{s} \cdot w(\tilde{T}) + (\Delta(T) +
1) w(T)$. Using the hypothesis on $c$ and $w(\tilde{T})$, this is
enough to reach the contradiction $cost(D') < cost(D^*)$ when
$\bar{s} \ge c/3$. The case when $\bar{s} < c/3$ is a little more
involved but uses a similar construction, only now the role of
$\tilde{r}$ is taken by a node inside $T_{\tilde{r}}$ in order to obtain
a more `balanced' search tree.
  \end{proof}

\remove{
	\begin{proof}
		By means of contradiction suppose $v^* \in D^*$ with $\dist{r(D^*)}{v^*}{D^*} \ge c$ but $w(D^*_{v^*}) > \alpha \cdot w(D^*)$. Let $\tilde{T}$ be the subtree of $T$ associated with $v^*$ and let $x$ be the root of $\tilde{T}$.

Let $y$ be a node in $T_x$ to be specified later. Let $T^0 = T - T_y$ and $T^i = T_{c_i(y)},$ for $i=1, \dots, \delta(y).$ Moreover, 
let $D^i$ be the search tree for $T^i$  obtained from $D^*$ via Lemma \ref{subSS}.  
We shall construct a new search tree $D'$ for $T$ as follows:  
the root of $D'$ is assigned to $y$; the left tree of $r(D')$ is the search tree $D^0$; in the right tree of $r(D')$ we build a left path containing nodes corresponding to queries for $c_1(y), c_2(y), \ldots, c_{\delta(y)}(y)$ and we  make
$D^i$ becomes the right subtree of  node querying $c_i(y).$

\remove{		Now we select a node $y$ in $T_x$ (to be specified later) and construct a new search tree $D'$ for $T$ as follows: the root of $D'$ is assigned to $y$; the left tree of $r(D')$ is a search tree $D^0$ for $T - T_y$ obtained from $D^*$ via Lemma \ref{subSS}; in the right tree of $r(D')$ we build a left path containing nodes corresponding to queries for $c_1(y), c_2(y), \ldots, c_{\delta(y)}(y)$ and append to the right of these nodes the search trees $D^i$ for $T_{c_i(y)}$ also obtained from $D^*$ via Lemma \ref{subSS} (Figure \ref{fig:img}). In order to simplify the notation let $T^0 = T - T_y$ and $T^i = T_{c_i(y)}$; notice that for every $i$, $D^i$ is a search tree for $T^i$. }
	
		It is easy to see that the cost of $D'$ is at most $\sum_{i = 0}^{\delta(y)} cost(D^i) + (\Delta(T) + 1) \cdot w(T)$. We claim  that,  for a suitable choice of $y$, $D'$ improves over $D^*$. 
For this, let  $S$ be the set of nodes of $T_x$ which are queried in the path $r(D^*) \leadsto v^*$. We distinguish the following cases. 	

\medskip \noindent
{\bf Case 1:} $|S| \ge \frac{2 c}{3}$. Set $y$ as a node in $T_x$ such that $|T_y \cap S| \ge \frac{|S|}{2}$ and $|T_{c_i(y)} \cap S| \le \frac{|S|}{2}$ for every child $c_i(y)$ of $y$ and construct $D'$ as described previously. To find such a node $y,$ traverse  $T_x$ starting at its root and proceeding as follows: if $u$ is the current node then move to the child $v$ of $u$ with largest $|T_v \cap S|$; the traversal ends when $|T_u \cap S| \le \frac{|S|}{2}$. The parent of the node where the traversal ends is the desired $y.$
	
	To bound the cost of $D'$ we first consider the cost of a particular tree $D^i$. From its construction we have that $\dist{r(D^i)}{l_u}{D^i} \le \dist{r(D^*)}{l_u}{D^*}$ for any node $u \in T^i$. Moreover, for any node $u \in T^i \cap \tilde{T}$ the path $r(D^*) \leadsto l_u$ contains $v^*$ and therefore it contains $|S \setminus T^i|$ queries to nodes in $T_x \setminus T^i$. Since these nodes were removed in the construction of $D^i$, we have that for every $u \in T^i \cap \tilde{T}$

	%
	\begin{equation*}
		\dist{r(D^i)}{l_u}{D^i} \le \dist{r(D^*)}{l_u}{D^*} - |S \setminus T^i| \le \dist{r(D^*)}{l_u}{D^*} - \frac{|S|}{2} \,,
	\end{equation*}
	where the last inequality follows from the definition of $y$. It follows that 
$$cost(D^i) \le \sum_{u \in T_i}  \dist{r(D^*)}{l_u}{D^*} \cdot w(u) - \frac{|S| \cdot w(T^i \cap \tilde{T})}{2}.$$
Combining  this bound with our upper bound on the cost of $D'$ we get that 
\begin{equation*}
	 cost(D') \le  cost(D^*) -  \dist{r(D^*)}{l_y}{D^*} \cdot w(y) - \frac{|S| w(\tilde{T} - y)}{2} + (\Delta(T) + 1) \cdot w(T) \,.
	\end{equation*}
	We claim that actually $cost(D') \le cost(D^*) - \frac{|S| w(\tilde{T})}{2} + (\Delta(T) + 1) \cdot w(T)$. To see this, first suppose $y \in \tilde{T}$; then $\dist{r(D^*)}{l_y}{D^*} \cdot w(y) \ge |S| \cdot w(y)$ and the claim holds. In the other case where $y \notin \tilde{T}$, the claim follows from the fact $w(\tilde{T} - y) = w(\tilde{T})$. 
	
	By making use of this claim, the hypothesis on $|S|$ and the facts that $w(\tilde{T}) = w(D^*_{v^*}) > \alpha \cdot w(D^*)$ and $c \cdot \alpha > 3 (\Delta(T) + 1)$, we conclude that $D'$ improves over $D^*$, which is a contradiction.

\medskip \noindent
{\bf Case 2:} $|S| < \frac{2 c}{3}$. We set $y = x$ and construct $D'$ as described at the beginning of the proof.
		The proof that $cost(D') < cost(D^*) $ is similar to the previous case and can be found in the appendix.
	\end{proof}		
}
\remove{	previously. The bounding on the cost of $D'$ is similar to the previous case. First notice that $cost(D^0) \le cost_{T^0}(D^*)$. Now consider some $i \neq 0$; since the queries to nodes in $T^0$ were removed during the construction of $D^i$, and there are $c - |S|$ of them, we have that $cost(D^i) \le cost_{T^i}(D^*) - (c - |S|) \cdot w(T^i \cap \tilde{T})$. Carrying the analysis exactly as in the previous case and recalling that $y  = x \in \tilde{T}$, we can conclude the proof of the lemma.}

	Assume that the weight function $w$ is strictly positive (see Appendix \ref{app:FPTAS} for the general case). 
	Since $w$ is integral, employing Lemma  \ref{lemma:geoDec} repeatedly shows that $D^*$ 
	has height at most 
	$O(\Delta(T) \cdot (\log w(T) + \log n))$. 

\medskip \noindent
{\bf From the DP  algorithm to an FPTAS.} 
By Lemmas \ref{ESTtoST} and \ref{lemma:geoDec},  we can obtain an optimal search tree for $(T,w)$ by finding an optimal EST of height $B = O(\Delta(T) \cdot (\log w(T) + \log n))$ (via $\mathcal{P}^B$) and then converting it into an optimal search tree. Since we can employ the algorithm presented in the previous section to achieve this in $O\left( (n \cdot w(T))^{O(\Delta(T))}\right)$ time, we obtain a  pseudo-polynomial time algorithm for trees with bounded degree.
Furthermore, such an algorithm can  be transformed into an FPTAS by scaling and rounding the weights $w$, just as in the well-known FPTAS for the knapsack problem \cite{IbaKim75a} (see the appendix for details):	

 	\newcounter{thmFPTASc}
 	\setcounter{thmFPTASc}{\value{theorem}}	
	\begin{theorem} \label{FPTAS}
		Consider an instance $(T,w)$ to our search problem where $\Delta(T) = O(1)$. Then there is a 
		$poly(n \cdot w(T))$-time algorithm for computing an optimal search tree for $(T,w).$ 
In addition, there is a $poly(n/\epsilon)$-time algorithm for computing an $(1 + \epsilon)$-approximate search tree for $(T,w)$.
	\end{theorem}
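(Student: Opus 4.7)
The plan is to assemble the pseudo-polynomial algorithm from three earlier pieces and then apply a standard weight-rounding trick. First I would invoke Lemma \ref{lemma:geoDec} iteratively (with $\alpha = 1/2$) to conclude that some optimal search tree for $(T,w)$ has height at most $H = O(\Delta(T)(\log n + \log w(T)))$; Lemma \ref{ESTtoST} then guarantees an optimal EST of height at most $H + 1$. Running the DP for $\mathcal{P}^{H+1}(T, P_0)$, where $P_0$ is a fully-unassigned PLP of length $H+1$, produces such an optimal EST, which I then convert back into an optimal search tree via Lemma \ref{ESTtoST}. Since $\Delta(T) = O(1)$ gives $2^{2(H+1)} = \mathrm{poly}(n \cdot w(T))$, the DP runs in $O(n^2 \cdot 2^{2(H+1)}) = \mathrm{poly}(n \cdot w(T))$ time, proving the first claim.

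For the FPTAS I would use the scaling trick familiar from the knapsack FPTAS. Let $W = \max_v w(v)$ and observe the crucial lower bound $\mathrm{OPT}(T,w) \geq W$, since the heaviest node contributes at least $W$ to the cost of any search tree. Choose the scale $K := \epsilon W/(n H^\star)$, where $H^\star = O(\log n + \log w(T) + \log(1/\epsilon))$ is a uniform height bound (via Lemma \ref{lemma:geoDec}) that applies to both the original and the rounded instances. Define $\tilde{w}(v) := \lceil w(v)/K \rceil$ and run the pseudo-polynomial algorithm on $(T, \tilde w)$ to obtain a search tree $\tilde D$, optimal for the rounded instance.

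To analyze the approximation guarantee, note that $0 \leq K\tilde w(v) - w(v) \leq K$, so for any search tree $D$ of height $h_D$ one has $\mathrm{cost}_w(D) \leq K\,\mathrm{cost}_{\tilde w}(D) \leq \mathrm{cost}_w(D) + K n h_D$. Chaining these with $\mathrm{cost}_{\tilde w}(\tilde D) \leq \mathrm{cost}_{\tilde w}(D^*)$ yields $\mathrm{cost}_w(\tilde D) \leq K\,\mathrm{cost}_{\tilde w}(D^*) \leq \mathrm{cost}_w(D^*) + K n h_{D^*} \leq \mathrm{OPT} + \epsilon W \leq (1 + \epsilon)\mathrm{OPT}$. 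For the running time, $\tilde w(T) \leq n + w(T)/K = O(n^2 H^\star/\epsilon)$, so $\log \tilde w(T) = O(\log(n/\epsilon))$ and the pseudo-polynomial algorithm on $(T,\tilde w)$ runs in $\mathrm{poly}(n \cdot \tilde w(T)) = \mathrm{poly}(n/\epsilon)$ time. The main subtlety is picking $H^\star$ to be a uniform polylogarithmic bound that simultaneously controls the rounding error (through $K n H^\star$) and the size of the rounded instance (through $\tilde w(T)$); fortunately Lemma \ref{lemma:geoDec} only produces a logarithmic dependence on the weights, so any mild dependence of $H^\star$ on $w(T)$ or $1/\epsilon$ is harmless. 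Zero-weight nodes, if present in $w$, can be handled by a trivial post-processing step that appends them as leaves in the appropriate subtrees, which has no effect on cost.
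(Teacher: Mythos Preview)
Your proof follows the same two-stage plan as the paper (DP plus height bound gives a pseudo-polynomial algorithm; weight scaling turns it into an FPTAS), and the approximation analysis is correct. However, your choice of scale $K = \epsilon W/(nH^\star)$ with $H^\star = \Theta(\log n + \log w(T) + \log(1/\epsilon))$ is both unnecessarily complicated and, as written, fails to deliver the stated $\mathrm{poly}(n/\epsilon)$ running time: since $\tilde w(T) = O(n^2 H^\star/\epsilon)$ retains a $\log w(T)$ factor, the pseudo-polynomial algorithm on $(T,\tilde w)$ runs in time $(n\,\tilde w(T))^{O(\Delta(T))} = \mathrm{poly}(n/\epsilon)\cdot(\log w(T))^{O(1)}$, which is polynomial in the input size but not in $n/\epsilon$ alone. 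Your sentence ``$\log\tilde w(T)=O(\log(n/\epsilon))$'' is therefore not justified.

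The paper sidesteps this entirely by using the trivial bound $d(r(D^*),l_u)\le n$ (any search tree for $T$ has height at most $n$) rather than the logarithmic height bound in the error analysis. With $K=\epsilon W/n^2$ one gets $K\cdot\mathrm{cost}_{\tilde w}(D^*)\le \mathrm{cost}_w(D^*)+n^2K = \mathrm{OPT}+\epsilon W$ directly, and $\tilde w(T)\le n\lceil W/K\rceil = O(n^3/\epsilon)$ with no dependence on $w(T)$ at all. The logarithmic height bound from Lemma~\ref{lemma:geoDec} is then needed only once, for the rounded instance, and there it yields a pure $\mathrm{poly}(n/\epsilon)$ bound. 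Your instinct to reuse Lemma~\ref{lemma:geoDec} in the error term is natural but counterproductive here. As a secondary remark, your treatment of zero-weight nodes (``append them as leaves'') is too informal: such nodes need not be leaves of $T$, and the paper's Lemma~\ref{height} handles them by substituting worst-case-optimal search trees below the level where all remaining weight has vanished.
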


\pagebreak


{\small
\bibliographystyle{abbrv}
\bibliography{bibliography}

\begin{thebibliography}{10}

\bibitem{AdlerEtAl06a}
M.~Adler, E.~Demaine, N.~Harvey, and M.~Patrascu.
\newblock Lower bounds for asymmetric communication channels and distributed
  source coding.
\newblock In {\em SODA}, pages 251--260, 2006.

\bibitem{conf/approx/AdlerH08}
M.~Adler and B.~Heeringa.
\newblock Approximating optimal binary decision trees.
\newblock In {\em APPROX-RANDOM}, pages 1--9, 2008.

\bibitem{AdlMag01}
M.~Adler and B.~Maggs.
\newblock Protocols for asymmetric communication channels.
\newblock {\em Journal of Computer and System Sciences}, 63(4):573--596, 2001.

\bibitem{Arkin:1998:DTG}
E.~Arkin, H.~Meijer, J.~Mitchell, D.~Rappaport, and S.~Skiena.
\newblock Decision trees for geometric models.
\newblock {\em International Journal of Computational Geometry and
  Applications}, 8(3):343--364, 1998.

\bibitem{BenFarNew99}
Y.~Ben-Asher, E.~Farchi, and I.~Newman.
\newblock Optimal search in trees.
\newblock {\em SIAM Journal on Computing}, 28(6):2090--2102, 1999.

\bibitem{searchRandom}
R.~Carmo, J.~Donadelli, Y.~Kohayakawa, and E.~Laber.
\newblock Searching in random partially ordered sets.
\newblock {\em Theoretical Computer Science}, 321(1):41--57, 2004.

\bibitem{DTentity}
V.~Chakaravarthy, V.~Pandit, S.~Roy, P.~Awasthi, and M.~Mohania.
\newblock Decision trees for entity identification: Approximation algorithms
  and hardness results.
\newblock In {\em PODS}, pages 53--62, 2007.

\bibitem{conf/soda/DaskalakisKMRV09}
C.~Daskalakis, R.~Karp, E.~Mossel, S.~Riesenfeld, and E.~Verbin.
\newblock Sorting and selection in posets.
\newblock In {\em SODA}, pages 392--401, 2009.

\bibitem{laTGreSch95}
P.~de~la Torre, R.~Greenlaw, and A.~Sch{\"a}ffer.
\newblock Optimal edge ranking of trees in polynomial time.
\newblock {\em Algorithmica}, 13(6):592--618, 1995.

\bibitem{journals/dam/Dereniowski08}
D.~Dereniowski.
\newblock Edge ranking and searching in partial orders.
\newblock {\em Discrete Applied Mathematics}, 156(13):2493--2500, 2008.

\bibitem{FaigleEtAl86}
U.~Faigle, L.~Lov\'asz, R.~Schrader, and {\relax Gy}.~Tur\'an.
\newblock Searching in trees, series-parallel and interval orders.
\newblock {\em SICOMP: SIAM Journal on Computing}, 15, 1986.

\bibitem{Garey:1972:OBI}
M.~Garey.
\newblock Optimal binary identification procedures.
\newblock {\em SIAM Journal on Applied Mathematics}, 23(2):173--186, 1972.

\bibitem{Gar79a}
M.~Garey and D.~Johnson.
\newblock {\em Computers and Intractability: {A} Guide to the Theory of
  {NP}--Completeness}.
\newblock Freeman, New York, NY, 1979.

\bibitem{GarWac77}
A.~Garsia and M.~Wachs.
\newblock A new algorithm for minimum cost binary trees.
\newblock {\em SIAM Journal on Computing}, 6(4):622--642, 1977.

\bibitem{saberi}
S.~Ghazizadeh, M.~Ghodsi, and A.~Saberi.
\newblock A new protocol for asymmetric communication channels: Reaching the
  lower bounds.
\newblock {\em Scientia Iranica}, 8(4), 2001.

\bibitem{HuTuc71}
T.~Hu and A.~Tucker.
\newblock Optimal computer search trees and variable-length alphabetic codes.
\newblock {\em SIAM Journal on Applied Mathematics}, 21(4), 1971.

\bibitem{HR76}
L.~Hyafil and R.~Rivest.
\newblock Constructing optimal binary decision trees is {NP}-complete.
\newblock {\em Information Processing Letters}, 5(1):15--17, 1976.

\bibitem{IbaKim75a}
O.~Ibarra and C.~Kim.
\newblock Fast approximation algorithms for the knapsack and sum of subset
  problems.
\newblock {\em Journal of the ACM}, 22(4):463--468, 1975.

\bibitem{IyeRatVij91}
A.~Iyer, H.~Ratliff, and G.~Vijayan.
\newblock On an edge ranking problem of trees and graphs.
\newblock {\em Discrete Applied Mathematics}, 30(1):43--52, 1991.

\bibitem{Knuth73}
D.~Knuth.
\newblock {\em The Art of Computer Programming, Vol. 3: Sorting and Searching}.
\newblock Addison-Wesley, Reading, Massachusetts, 1973.

\bibitem{KosPrzBor99}
R.~Kosaraju, T.~Przytycka, and R.~Borgstrom.
\newblock On an optimal split tree problem.
\newblock In {\em WADS}, pages 157--168, 1999.

\bibitem{LabHol02}
E.~Laber and L.~Holanda.
\newblock Improved bounds for asymmetric communication protocols.
\newblock {\em Information Processing Letters}, 83(4):205--209, 2002.

\bibitem{conf/icalp/LaberM08}
E.~Laber and M.~Molinaro.
\newblock An approximation algorithm for binary searching in trees.
\newblock In {\em ICALP}, pages 459--471, 2008.

\bibitem{journals/dam/LaberN04}
E.~Laber and L.~Nogueira.
\newblock On the hardness of the minimum height decision tree problem.
\newblock {\em Discrete Applied Mathematics}, 144(1-2):209--212, 2004.

\bibitem{Lam98optimaledge}
T.~Lam and F.~Yue.
\newblock Optimal edge ranking of trees in linear time.
\newblock In {\em SODA}, pages 436--445, 1998.

\bibitem{LinSak85}
N.~Linial and M.~Saks.
\newblock Searching ordered structures.
\newblock {\em Journal of Algorithms}, 6, 1985.

\bibitem{journals/tit/LipmanA95}
M.~Lipman and J.~Abrahams.
\newblock Minimum average cost testing for partially ordered components.
\newblock {\em IEEE Transactions on Information Theory}, 41(1):287--291, 1995.

\bibitem{Mozes_SODAO8}
S.~Mozes, K.~Onak, and O.~Weimann.
\newblock Finding an optimal tree searching strategy in linear time.
\newblock In {\em SODA}, pages 1096--1105, 2008.

\bibitem{conf/focs/OnakP06}
K.~Onak and P.~Parys.
\newblock Generalization of binary search: Searching in trees and forest-like
  partial orders.
\newblock In {\em FOCS}, pages 379--388, 2006.

\bibitem{Schaffer89}
A.~Sch{\"a}ffer.
\newblock Optimal node ranking of trees in linear time.
\newblock {\em Information Processing Letters}, 33(2):91--96, 1989.

\bibitem{watkinson}
J.~Watkinson, M.~Adler, and F.~Fich.
\newblock New protocols for asymmetric communication channels.
\newblock In {\em SIROCCO}, pages 337--350, 2001.

\end{thebibliography}
}

\newpage

\appendix

\noindent{\normalfont\Large\bfseries Appendix}

\remove{ \section{The proof of the claim in remark \ref{remark:weight_encoding}: The magnitude of the weights.}
We shall show that $w(\pi_i) = O(|T|^{3 i}),$ for each $i=1, \dots, n+m.$

\begin{proof}
The claim is true for $i \leq 3,$ since 
in this case we have $\pi_i = u_i,$ and the  
definition gives us $w(u_i) = |T|^3w(u_{i-1}).$  

By definition, we have  $W_{u_j} \leq  w(u_j),$ for each $j=1,\dots,n.$ In addition, it follows that,  
$w(t_i) = O(|T| W_{u_{i\, 3}} + |T| w(u_{i\, 3}))= O(|T| w(u_{i\, 3}))$
for each $i = 1, \dots, m,$ whence also 
$w(\tilde{X}_i) =  O(|T| w(u_{i\, 3}))$.

Fix $i > 3$ and assume that the claim is true for each $i' < i.$

Let us first consider that case  $\pi_i = X_j,$ for some $1 \leq j \leq m.$ In this case 
$w(\pi_i) = w(\tilde{X}_j) =  O(|T| w(u_{j\, 3})).$ Since $u_{j \, 3} \prec X_j$ it follows, by induction hypothesis,  that
$w(u_{j\, 3}) \leq w(\pi_{i-1}) = O(|T|^{3(i-1)}),$ from which the desired result follows.

Alternatively, if, $\pi_i = u_j,$ for some $1 \leq j \leq n,$ then either we  have 
$w(u_j) = |T|^3 w(u_{j-1})$ or we have 
$w(u_j) = \sum_{X \prec u_j} w(\tilde{X}) \leq |T| w(\pi_{i-1}).$  In either case the desired bound follows by induction 
hypothesis.
\end{proof}

}

\section{The proof of lemma \ref{lemma:optimal_structure}}
We need two inequalities regarding the weights.

\noindent
{\em Fact 1} 
For each $1 \leq i' < i \leq m$ it holds that     
\begin{equation} \label{eq:weights}
w(t_i) > w(a_{i1}) > w(t_{i'}) + w(u_{i'\, 1}) + w(u_{i'\, 2}) + w(u_{i'\, 3})
\end{equation} 
{\em Proof of the fact.}
The first inequality follows by definition. In order to prove the second inequality let us consider the difference
$$Diff = w(a_{i1}) -  \left(w(t_{i'}) + w(u_{i'\, 1}) + w(u_{i'\, 2}) + w(u_{i'\, 3})\right) .$$
By definition we have
$$Diff = \sum_{j=1}^3 \left ( W_{u_{i j}} + \gamma(i,j) w(u_{i j}) \right) -
    \sum_{j=1}^3 \left ( W_{u_{i' j}} + (\gamma(i',j) + 3/2) w(u_{i' j}) \right).$$
    
    \medskip

{\em Case 1.} $u_{i 3} = u_{i' 3}.$ 
Note that   $\gamma(i,3) \geq 5 + \gamma(i',3)$,
Since  $W_{u_{ij}},W_{u_{i'j}}  \geq 0$ and
$  0 < \gamma(i,j) , \gamma(i',j) \leq |T|$ we get that

$$Diff \geq 5 w(u_{i 3}) - 3W_{u_{i' 3}} -(2|T|+3)w(u_{i' 2})  .$$

Let $\kappa$ be such that  $u_{i 3}=u_\kappa$. It follows from the definition of
the  function $w()$ that 
$$w(u_{i3}) = w(u_\kappa) = 1+  6 \max \{ W_{u_{\kappa}},|T|^3 w(u_{\kappa-1}) \} 
>  3W_{u_{i' 3}} + (2|T|+3)w(u_{i' 2}). $$
Thus, $Diff>0$.

\remove{
{\em Case 1.} $u_{i 3} = u_{i' 3}.$ Then, since $X_{i'} \prec X_{i}$ we must have $u_{i 2} \not \prec u_{i' 2},$ whence
$w(u_{i' 2}) \leq  w(u_{i 2})$ and $W_{u_{i' 2}} \leq  W_{u_{i 2}}.$ Therefore, using the non-negativity  of  $W_{u_{i 1}}$ and  $w(u_{i1}),$  
 we have

$$Diff > \left( \gamma(i,3) - \gamma(i',3) -3/2 \right) w(u_{i 3}) + 
\left( \gamma(i,2) - \gamma(i',2) -3/2 \right) w(u_{i 2}) -  
W_{u_{i' 1}} - \left(\gamma(i',1) + 3/2\right) w(u_{i' 1}).$$

Finally, we observe that under the standing hypothesis we have $\gamma(i,3) \geq \gamma(i',3) + 5.$ Moreover, 
$W_{u_{i' 1}}  \leq w(u_{i'1}) \leq w(u_{i'2}) \leq w(u_{i2})$ and 
$\gamma(i',1),  \gamma(i',2) \leq |T|,$ and $\gamma(i, 2) > 0.$ Therefore we have
$$Diff > (7/2) w(u_{i 3}) - (|T| + 3/2) w(u_{i 2}) - w(u_{i 2}) - (|T| + 3/2) w(u_{i2}) = 
(7/2) w(u_{i 3}) - 2(|T| + 2)  w(u_{i 2}) > 0$$ where the last inequality follows 
because of $w(u_{i 3}) \geq |T|^3 w(u_{i 2}).$
}

\medskip

{\em Case 2.} $u_{i' 3 } \prec u_{i 3}.$ Then, it must also hold that $X_{i'} \prec u_{i 3}.$ Therefore we have 
$$w(a_i) \geq W_{u_{i 3}} \geq w(\tilde{X}_{i'}) >  w(t_{i'}) + w(u_{i'\, 1}) + w(u_{i'\, 2}) + w(u_{i'\, 3}).$$  

\medskip

\noindent {\em Fact 2} 
For each $1\leq i \leq m$ and $\kappa = 1,\dots, 4,$ it holds that     
\begin{equation} \label{eq:weights2}
w(a_{i\kappa}) \geq  3 ( w(u_{i3}) +  w(u_{i2}) +   w(u_{i1}))  + W_{u_{i3}}
\end{equation} 
It follows directly from the definition of $w(a_{i\kappa})$ and the  the fact that $\gamma(i,j) \geq 3$ ($j=1,2,3$).

\remove{ it follows that 
 $w(a_{i1}) \geq 3 (\sum_{\kappa = 1}^3 3 w(u_{ik}) +W_{u_{i3}}$.
The inequality follows from the fact that $w(u_{i\kappa}) > 0.$
}


\remove{
\begin{definition}
Given a binary tree $T$ and a node $\nu \in D,$ such that $\nu$ is the left child of its parent $p(\nu),$ and $p(\nu)$ is the left child of its own parent,
  an upper swap on $\nu$ is the operation that
transforms $D$ into a new tree  as follows: (i) $\nu$ becomes the left child of the parent of $p(\nu)$; (ii) the node  $p(\nu)$ becomes the left child of $\nu;$
(iii) the left subtree of $\nu$ becomes the left subtree of $p(\nu)$ (which is now $\nu$'s left child); (iv) the right subtree of $\nu$ and the right subtree of 
 $p(\nu)$ are not changed.
\end{definition}
}

\bigskip
\noindent
{\bf Proof of Lemma \ref{lemma:optimal_structure}.}
Let $D$ be an optimal search tree for $(T,w)$.  

Let $\ell$ be the deepest  node in the left path of $D$ such that 
$D - D_{\ell}$ is the realization of $\pi_{i+1} \dots \pi_{n+m}$ for some $i=0,\dots, n+m.$ In particular, we take $i=n+m$ if 
$\ell$ is the root of $D,$ i.e., no upper part of $D$ looks like a realization of suffix of $\Pi.$

By contradiction, assume that $D$ is not a realization of $\Pi,$ in particular  $i > 0.$
We shall prove that by modifying $D_{\ell}$ in such a way that its top part becomes a realization of $\pi_{i}$ we obtain a 
new search tree with cost smaller than the cost of $D.$ The desired result will follow by  contradiction.
We consider the following cases:

\remove{
For otherwise, we can obtain a new search tree of smaller average cost, violating the optimality of $D.$ To see why, let us first consider the case
when the node $q_{t_j}$ is the right child of $q_{r_j}.$ }

\noindent
{\em Case 1.} $\pi_i = X_j,$ for some $j=1,2,\dots, m.$ First we argue  that $\ell \in \{ q_{t_j}, q_{r_j}\}.$ 
 Let $q_{\nu}$ (for some $\nu \in T$) be the parent of $q_{r_j}.$ 
If $\nu \in T_j$ we swap $q_{t_j}$ with $q_{\nu}$ otherwise we   swap $q_{r_j}$ with $q_{\nu}.$\footnote{When swapping we imply that the two nodes are 
exchanging position and they are carrying along also their right subtrees. This is possible because $q_{r_j}$ is the left child of $q_{\nu}.$}
Let $D'$ be the new tree so obtained.

If $\nu$ is a leaf in $T,$ then we have $cost(D') \leq cost(D) - w(t_j) + w(\nu) < cost(D)$ since $t_j$ is the leaf of largest weight in $D_{\ell}.$
Otherwise, it must be that $\nu = r_{j'}$ for some $j' < j.$ In this case, by (\ref{eq:weights}), we have 
$cost(D') \leq cost(D) - w(t_j) + w(t_{j'}) + w(u_{j' \, 3}) + w(u_{j' \, 2}) + w(u_{j' \, 1}) < cost(D).$ 
In either case we obtain a tree of average weight smaller than $D$, violating the optimality of $D.$ 

Alternatively, if $q_{t_j}$ is not the right child of $q_{r_j},$ then we swap  $q_{t_j}$ with its parent. 
Note that $q_{t_j}$ must be the left child of its parent.  By proceeding as above, 
we can prove that the resulting tree has cost smaller than $D,$ again a violation to the optimality of $D.$
\commento{I added this missing case,--- whether it is clear enough.}
Therefore, it must be $\ell \in \{ q_{t_j}, q_{r_j}\}.$ We now split the analysis according to this two possible cases.

\medskip
\noindent
{\em Subcase 1.1.} $\ell = q_{r_j}.$ Then, because of the assumption on $D-D_{\ell}$ and the search property, it follows that the right subtree of 
$q_{r_j}$ contains the nodes $q_{t_j}, q_{s_{j\, 3}}, q_{s_{j\, 2}}, q_{s_{j\, 1}}.$ Also, it is not hard to see that they must appear in this order, for otherwise 
by reordering them we would decrease the average cost of $D,$ since $w(t_j)  > w(s_{j\, 3}) > w(s_{j\, 2}) > w(s_{j\, 1}).$ Therefore the right subtree of $\ell$
coincides with the right subtree of $D_j^A.$ 

Suppose now w.l.o.g. that for each $\kappa = 2, 3,4,$ it holds that 
$q_{a_{j\kappa-1}}$ is closer  to the root of $D$ than $q_{a_{j\kappa}}$ 
For the sake of contradiction, assume that  $q_{a_{j1}}$ is not a child of $q_{r_j}.$  
Let $q_{\nu}$ be the parent of $q_{a_{j1}}.$ Note that $q_{a_{j1}}$ can only be the left child of $q_{\nu}.$  
By swapping  $q_{a_{j1}}$ with $q_{\nu}$    the resulting tree has smaller expected cost than $D,$ 
again in contradiction with the assumed optimality of $D.$ 
 In fact, if $\nu$  is a leaf in $T$  then it follows from inequality (\ref{eq:weights2}) that  $w(a_{j1}) > w(u_{j3}) \geq w(\nu)$.
 Otherwise, if 
 $\nu = r_{j'}$ for some  $j' < j,$ and then, by (\ref{eq:weights}) 
 we have that $w(a_{j1})$ is greater than the weight of the right subtree  of $q_{\nu}.$ 
The same arguments show that $q_{a_{j\kappa}}$ is the left child of $q_{a_{j\kappa-1}},$
for each $\kappa = 2,3,4.$  
   
 We can conclude that 
in the left path of $D,$ the nodes following $\ell$ are exactly $q_{a_{j1}}, \dots,q_{a_{j4}}$. Let $\ell'$ be the
left child of $q_{a_{j4}}.$ We have showed that in this subcase
$D_{\ell} - D_{\ell'}$ coincides with $D_{j}^A.$

\medskip
\noindent
{\em Subcase 1.2.} $\ell = q_{t_j}.$ There is nothing to prove about the right 
subtree of $\ell.$ 
 In order to prove that in the left path of $D,$ the node $\ell$ is followed by 
 $q_{a_{j1}}, \dots,q_{a_{j4}}$\footnote{We are again assuming, w.l.o.g., that 
 for each $\kappa = 2, 3,4,$ it holds that 
$q_{a_{j\kappa-1}}$ is closer  to the root of $D$ than $q_{a_{j\kappa}}.$}
we proceed as before. Assume (by contradiction) that $q_{a_{j1}}$ is not a child
of $q_{r_j}.$ Let $q_{\nu}$ be the parent of  $q_{a_{j1}}.$ Note that $q_{a_{j1}}$ can only be the left child of $q_{\nu}.$ 
We swap $q_{a_{j1}}$ with  $q_{\nu}.$ Let $D'$ be the resulting search tree. If $\nu$ is $r_j$ or a leaf in $T_j \setminus \{t_j\},$
we have that $cost(D') = cost(D) - w(a_{j1}) + w(X_j) < 0,$ where $w(X_j)$ accounts for the weight of the right subtree of
$q_{\nu}$ and the last inequality follows by  (\ref{eq:weights2}).
On the other hand, if $\nu$ is either a leaf in $T$ or is equal to $r_{j'}$ for some $j' < j,$ then we can apply the same argument as in 
Subcase 1.1, to reach the same conclusion, i.e., we violate the optimality of $D.$ 

Therefore, we conclude that  $q_{a_{j1}}$ is the left child of $q_{\ell}.$ Repeating the same argument we can also show that
$q_{a_{j\kappa}}$ is the left child of $q_{a_{j\kappa-1}},$ for each $\kappa = 2,3,4.$  
 Let $\ell'$ be the
left child of $q_{a_{j4}}.$ We have showed that in this subcase, 
$D_{\ell} - D_{\ell'}$ coincides with $D_{j}^B.$ 

We can conclude that in both subcases of Case 1, the tree $D - D_{\ell'}$  is realization of $\pi_i, \dots, \pi_{n+m}$ against the assumption  
that $\ell$ is the deepest node for which such a condition holds.
 
\medskip

\noindent
{\em Case 2.} $\pi_i = u_j,$ for some $j=1,2,\dots, n.$


Let us consider the set of leaves $L$ of  $T^b$ which  are associated with $u_j$ and such that they are not queried 
 in $D - D_{\ell}.$ Since  $D - D_{\ell}$ is a realization of $\pi_{i+1} \dots \pi_{n+m},$ the leaves of $T^b$ which are not in $L$ and
 are queried in $D_{\ell}$ are either in  $\bigcup_{X \prec u_j} \tilde{X}$  or are associated to $u_{j'}$ for some $j' < j$.
For the sake of contradiction we assume that one of the  first $|L|$ nodes in the left path 
of $D_{\ell}$ does not correspond to a leaf in $L$.

Let us construct a tree $D'$ from $D_{\ell}$  as follows:
first we construct an auxiliary tree by removing from $D_{\ell}$ all the nodes corresponding to the leaves in $L$.
 Then, we add a left path with these nodes to the top of this auxiliary tree.
Our assumption that one of the  first $|L|$ nodes in the left path 
of $D_{\ell}$ does not correspond to a leaf in $L$ implies that 

$$cost(D') \leq cost(D_{\ell}) -w(u_j) + |L| \sum_{X \prec u_j} w(\tilde{X}) +  3\cdot |L| \cdot  \sum_{u \prec u_j} w(u)  
$$.

The negative term in the equation above is because  the sum of the levels of  
the nodes associated with $u_j$ in $D_{\ell}$ is at least 7
while  this sum  is exactly 6 in $D'$. 
The other terms are due to the fact that the level
of a node can increase  by at most $|L|$ units 
in our construction. 
The definitions of $W_{u_j}$ and $w(u_j)$ imply that

$$ cost(D') \leq cost(D_{\ell}) -w(u_j) + |L| W_{u_j} + |L| \cdot |T| \cdot  w(u_{j-1}) $$

Since $|L|\leq 3$ and  $w(u_j) > 6 \max \{ W_{u_j},|T|^3 w(u_{j-1}) \}$  we get that
$ cost(D') < cost(D_{\ell})$. This implies, however, that $D$ can be improved, a contradiction.

Thus, the   $D_{\ell}$'s $|L|$ top levels coincide with a 
sequential search tree for $L.$ Let $\ell'$ the left most query of such sequential  search. 
Therefore, $D-D_{\ell'}$ is  realization of $\pi_i \dots, \pi_{n+1},$ which contradicts also in this Case 2 the hypothesis that  $\ell$ is the
deepest node for which such a condition holds.

\remove{
Let us consider the leaves $\nu_1, \dots, \nu_{\kappa}$ of  $T$ which  are associated with $u_j$
 and such that  for each $h=1, \dots, \kappa,$ the node 
$q_{\nu_{h}}$ is  in $D_{\ell}.$ W.l.o.g, we can assume that $d(r(D), q_{\nu_i}) \leq  d(r(D), q_{\nu_{i+1}}).$ 
We claim that for each $h=1, \dots, \kappa,$ the node $q_{\nu_h}$ is either $\ell$ or it is the left child of  $q_{\nu_{h-1}}.$ 
Assume that this is not the case and let $h \in \{1, \dots, \kappa\}$ be such that $\nu_h$ contradicts our claim.
Let $q_{\overline{\nu}}$ be the parent of $q_{\nu_{h}}$ in $D_{\ell}.$ Let $D'$ be the search tree obtained by swapping  
 $q_{\nu_h}$ with $q_{\overline{\nu}}.$ 
 Since $D - D_{\ell}$ is a realization of $\pi_{i+1} \dots \pi_{n+m},$ we have that  only the two following  cases are possible. 

\noindent
{\em Subcase 2.1.} $\overline{\nu}$ is a leaf of $T$ associated with some $u_{j'}$ with $j' < j.$ Then 
we have $w(\overline{\nu}) < w(\nu_h).$ Therefore we have $cost(D') = cost(D) - w(\nu_h) + w(\overline{\nu}) <  cost(D).$

\noindent
{\em Subcase 2.2.}  $\overline{\nu} \in \{r_{j'}, t_{j'}, a_{j'1},\dots, a_{j'4}\}$ for some $j'$ such that $X_{j'} \prec u_j.$  
Hence $w(\nu_h) = w(u_j) > w(\tilde{X}_{j'}) \geq w(D_{q_{\overline{\nu}}}^R),$ where
 $D_{q_{\overline{\nu}}}^R$ denotes the right subtree of $D_{q_{\overline{\nu}}}.$ 
Then,  we have $cost(D') = cost(D) - w(\nu_h) + w(D_{q_{\overline{\nu}}}^R) <  cost(D).$

\medskip
Since in both the above subcases we reach a contradiction our claim must holds.
As a consequence, by denoting with $\ell'$ the left child of $q_{\nu_{\kappa}},$ we have that 
$D-D_{\ell'}$ is  realization of $\pi_i \dots, \pi_{n+1},$ which contradicts also in this Case 2 the hypothesis that  $\ell$ is the
deepest node for which such a condition holds. }

The proof is complete. \hfill{\qed}

\section{The proof of Lemma \ref{lemma:key_2}}

\noindent
{\bf Lemma 2.} 
{\em Let $D^*$ be an optimal binary search tree for $(T, w).$
Let ${\cal Y} \subseteq {\cal X}$ be such that $D^*$ is a realization of $\Pi$ w.r.t. $\cal Y.$
We have that $cost(D^*) \leq  cost(D^A) - \frac{1}{2}\sum_{u \in  U} w(u)$  if and only if  ${\cal Y}$ is a solution for the 
X3C instance $\mathbb{I} = (U, {\cal X}).$}

\medskip

\begin{proof}
We start proving the {\em only if} part. Assume that $cost(D^A) - cost(D^*) \geq \frac{1}{2}\sum_{u \in  U} w(u).$
We shall use induction on $j$ to prove that for each $j=n, \dots, 1$ there exists exactly one $X \in {\cal Y},$ such that 
$u_j \in X.$

Fix $j^* \leq n$ and assume that for every $j>j^*$ it holds that there exists exactly one $X \in {\cal Y}$ such that 
$u_{j} \in X.$

Suppose  that  there is no  $i \in \{1,\dots, m\}$ such that 
$u_j^* \in X_i \in {\cal Y}.$
We can rewrite (\ref{eq:Delta_cost}) as follows:
$$ cost(D^A)-cost(D^*) = \sum_{j=1}^n \mathop{\sum_{X_i \in {\cal Y}}}_{u_j \in X_i} \left( \frac{w(u_j)}{2} + 
\Gamma(i,j) w(u_j) \right),$$
where $\Gamma(i,j) = \gamma(i,\kappa) - d^A_B(q_{s_{i \, \kappa}}),$ and $\kappa  \in \{1,2,3\}$ such that $s_{i\, \kappa} = u_j.$

Now, since we are assuming that for all $j > j^*$ there exists only one $i$ such that $X_i \in {\cal Y}$ and $u_{j} \in X_i,$
by the definition of $d^A_B(\cdot)$ and $\gamma(i,\kappa),$ we have  $\Gamma(i,j) = 0.$ So we obtain
$$ cost(D^A)-cost(D^*) = \sum_{j > j^*} \frac{w(u_{j})}{2} + \sum_{j < j^*} \mathop{\sum_{X_i \in {\cal Y}}}_{u_{j}\in X_i} \left( \frac{w(u_{j})}{2} + 
\Gamma(i,j) w(u_{j}) \right),$$
where we also used the assumption that no $X \in {\cal Y}$ contains $u_{j^*}$ and therefore 
$u_{j^*}$ does not contribute to the sum.
   
Now we can observe that, for each $j < j^*,$ there are at most $3$ set in ${\cal X}$ containing $u_{j}.$ 
Moreover, $\Gamma(i, j)$ being a difference of levels in $D^*$ can be bounded by $|T|.$ Also 
$w(u_{j}) \leq w(u_{j^*})/6|T|^3,$ for each $j < j^*$. Therefore, we have the desired contradiction:
$$ cost(D^A)-cost(D^*) \leq \sum_{j > j^*} \frac{w(u_{j})}{2} + 3(j^*-1)(|T| + 1/2) \frac{w(u_{j^*})}{6|T|^3} < \sum_{j  > j^*} \frac{w(u_{j})}{2} 
+ \frac{w(u_{j^*})}{2}\leq 
\frac{1}{2}\sum_{u \in  U} w(u).$$

\smallskip

Suppose now that  there are  $\kappa > 1$ subsets  in $ {\cal Y}$ 
that  contain  $u_{j^*}.$  Rewriting (\ref{eq:Delta_cost}) as before, we obtain:

$$ cost(D_A)-cost(D^*) \leq \sum_{j > j^*} \frac{w(j)}{2}  +
\mathop{\sum_{X_i \in {\cal Y}}}_{u_{j^*}\in X_i} \left( \frac{w(u_{j^*})}{2} + 
\Gamma(i,j^*) w(u_{j^*}) \right) + 
\sum_{j < j^*} \mathop{\sum_{X_i \in {\cal Y}}}_{u_{j }\in X_i} \left( \frac{w(u_{j})}{2} + 
\Gamma(i,j) w(u_{j})\right) .$$

Let us observe that among the $\kappa$ sets $X_i \in {\cal Y}$ such that $u_{j^*} \in X_i$ only one contributes with a
positive weight $w(u_{j^*})/2$ since $\Gamma(i,j^*)=0.$ 
For the others, we have a negative contribution of at least $w(u_{j^*})/2,$ since $\Gamma(i,j^*)$ becomes negative.
Moreover, for the $j < j^*$ we can repeat the argument we used in the previous case. Therefore we obtain the desired contradiction:

$$ cost(D_A)-cost(D^*) \leq \sum_{j>j^*} \frac{w(j)}{2}  
-\frac{\kappa-2}{2} w(u_{j^*}) + 3({j^*}-1)(|T| + 1/2) \frac{w(u_{j^*})}{6|T|^3} < 
\sum_{j  \geq j^*} \frac{w(u_{j})}{2} < 
\frac{1}{2}\sum_{u \in  U} w(u).$$
This concludes the inductive argument and the  proof of the {\em only if} part.

\medskip

In order to prove the   {\em if} part of the statement we notice that if ${\cal Y}$ is a solution for $\mathbb{I}$ then
for each $j=1,\dots, n$ there exists exactly one index $i$ such that  $X_i \in {\cal Y}$ and $u_j \in X_i.$ 
Then,  the desired result follows directly by equation  (\ref{eq:Delta_cost}), and by the fact that in this case the  
 definition of $d^A_B(\cdot)$ and  $\gamma(\cdot,\cdot),$ yields $\Gamma(i,j) = 0.$ 
\end{proof}


\section{The proof of Lemma \ref{lemma:bounded_optimal_structure}}

{\bf Proof of Lemma \ref{lemma:bounded_optimal_structure}.}
Let $D$ be an optimal search tree for $(T^b,w)$. 

Let $\ell$ be the deepest  node in the left path of $D$ such that 
$D - D_{\ell}$ is the realization of $\pi_{i+1} \dots \pi_{n+m}$ for some $i=0,\dots, n+m.$ In particular, we take $i=n+m$ if 
$\ell$ is the root of $D,$ i.e., no upper part of $D$ looks like a realization of some suffix of $\Pi.$

By contradiction, assume that $D$ is not a realization of $\Pi$, whence $i > 0.$
We shall prove that by modifying $D_{\ell}$ in such a way that its top part becomes a realization of $\pi_{i}$ we obtain a 
new search tree with cost smaller than the cost of $D.$ The desired result will follows by  contradiction.
We consider the following cases:

\noindent
{\em Case 1.} $\pi_i = X_j,$ for some $j=1,2,\dots, m.$

In this case, our assumption regarding  
$\ell$ implies that if a node $\nu  \in  D_{\ell}$
is associated with a leaf $\ell'$ in $T^b$ then 
$\ell'$  either corresponds to an element
$u \in U$ such that 
$ u \prec X_j$ or $\ell' \in \tilde{X}_{j'}$ such that
$X_{j'} \preceq X_j$.
 Let $\kappa$ be such that $X_j \in H_{\kappa}.$ 
We need to prove the following claim

\newtheorem{claim}{Claim}

\noindent 
\begin{claim}   $\ell \in \{ q_{t_j}, q_{r_j}\}.$ 
\end{claim}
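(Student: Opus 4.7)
The plan is to mirror Case~1 of the proof of Lemma~\ref{lemma:optimal_structure}, adapting the swap arguments to the bounded-degree tree $T^b$. I would proceed by contradiction: assume $\ell = q_\nu$ for some $\nu \notin \{t_j, r_j\}$ and construct a modification of $D$ of strictly smaller cost, contradicting the optimality of $D$.

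First I would confirm the weight inequalities that drive the argument. Since $D - D_\ell$ is the realization of $\pi_{i+1}\ldots\pi_{n+m}$, every leaf of $T^b$ queried inside $D_\ell$ lies in $\tilde{X}_{j'}$ for some $X_{j'} \preceq X_j$ or is associated with some $u_{j'} \prec X_j$. Fact~1 of the appendix then gives $w(t_j) > w(a_{j1}) > w(t_{j'}) + w(u_{j'1}) + w(u_{j'2}) + w(u_{j'3})$ for every $j' < j$, which combined with the inductive definition of $w$ is precisely what is needed to make the swaps strictly cost-decreasing.

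Next I would carry out the swap. Let $q_\nu$ be the query at $\ell$. Two subcases arise. If $\nu \notin T^b_{r_j}$ -- which covers the cases where $\nu$ is some $h_{\kappa'}$, some $z_{\kappa'}$, an $r_{j'}$ with $j' \neq j$, some $a_{j'k}$ with $X_{j'} \preceq X_j$, or a leaf outside $T^b_{r_j}$ -- I swap $q_{r_j}$ with $q_\nu$. The search property is preserved because $r_j \notin T^b_\nu$, and the cost change equals $-w(R(q_{r_j})) + w(R(q_\nu))$, where $R(\cdot)$ denotes the right subtree. Since $R(q_{r_j})$ contains $t_j, s_{j1}, s_{j2}, s_{j3}$ while $R(q_\nu)$ is bounded by $w(\tilde{X}_{j'})$ for some $j' < j$ (or by a single leaf weight if $\nu$ is itself a leaf), Fact~1 yields a strictly negative change. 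If instead $\nu \in T^b_{r_j}$, then $\nu \in \{s_{j1}, s_{j2}, s_{j3}\}$, and I swap $q_{t_j}$ with $q_\nu$; the cost strictly decreases by $w(t_j) - w(u_{jk}) > 0$ where $\nu = s_{jk}$.

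The main obstacle I anticipate is two-fold. First, the swap argument requires $q_{r_j}$ to appear as an internal node of $D$, which is not guaranteed since $r_j$ has weight zero; in that scenario I would instead promote $q_{t_j}$ directly up to $\ell$ by a sequence of swaps across its ancestors in $D_\ell$, using the same weight inequalities. Second, the bounded-degree tree $T^b$ introduces the weight-zero intermediate nodes $h_\kappa$ and $z_{\kappa'}$, which may themselves be queried in $D$; precisely because their weight is zero, the net cost change from a swap across them is dominated by the relocation of $t_j$ alone, so these nodes only make the argument more favorable. Once both points are handled, the proof proceeds exactly as in Lemma~\ref{lemma:optimal_structure}, and the remaining bookkeeping is essentially mechanical.
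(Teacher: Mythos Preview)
Your plan follows the spirit of the paper's argument, but it has a genuine gap precisely where the bounded-degree tree $T^b$ differs from $T$. You write that when $\nu \notin T^b_{r_j}$ (and you list $h_{\kappa'}$ and $z_{\kappa'}$ among such $\nu$) you will swap $q_{r_j}$ with $q_\nu$, and that ``the search property is preserved because $r_j \notin T^b_\nu$''. This last assertion is false: if $\nu$ is $h_\kappa$ (the $H$-node containing $X_j$) or any $z_{\kappa'}$ with $\kappa' \ge \kappa$, then $\nu$ is an \emph{ancestor} of $r_j$ in $T^b$, so $r_j \in T^b_\nu$. In that situation $q_{r_j}$ sits as a \emph{right} child along the path in $D$, and a simple swap of the kind you describe (which implicitly assumes $q_{r_j}$ is a left child carrying its right subtree upward) would violate the search property. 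Your remark that the weight-zero nodes ``only make the argument more favorable'' addresses the cost accounting but not this structural obstruction.

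The paper handles exactly this case in Subcase~I.a by performing a \emph{left rotation} at the parent $q_\nu$ of $q_{r_j}$ rather than a swap. The rotation pushes $q_\nu$'s left subtree $\alpha$ down one level and pulls $q_{r_j}$'s right subtree (which contains $t_j$) up one level; the key new computation is the bound $w(\alpha) \le 2.5\, w(u_{j3}) + W_{u_{j3}} < w(t_j)$, obtained by observing that every leaf in $\alpha$ lies outside $H_\kappa$ and hence is either associated with some $u \prec u_{j3}$, is one of at most two copies of $u_{j3}$ outside $H_\kappa$, or belongs to some $\tilde X_{j'}$ with $X_{j'} \prec u_{j3}$. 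Your proposal is missing both the rotation and this bound. A secondary issue is that your ``swap $q_{r_j}$ with $q_\nu$'' is stated for $q_\nu = \ell$, which need not be adjacent to $q_{r_j}$; the paper instead works one step at a time with the immediate parent of $q_{r_j}$ (or of $q_{t_j}$), which keeps each operation well-defined. Finally, your first anticipated obstacle is a non-issue: every non-root node of $T^b$, including $r_j$, appears as an internal node of any search tree for $T^b$.
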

\begin{proof}
We shall show it by contradiction. We split the proof into cases I  and II.

\noindent
\medskip
{\em Case I.} Suppose that the node $q_{t_j}$ is the right child of $q_{r_j}.$    Let $q_{\nu}$ (for some $\nu \in T$) be the parent of $q_{r_j}.$ 
We have two cases according as $q_{r_j}$ is a right or a left child of $q_{\nu}.$

\medskip

{\em Subcase I.a}~ $q_{r_j}$ is a right child of $q_{\nu}.$ 
 Note that because of the search tree property 
$\nu$ must be an ancestor of $h_{\kappa}$ in $T^b.$ 
We perform a left rotation 
 on $q_{\nu}.$  
  Let $D'$ be the new tree obtained. We have that 
$ cost(D') \leq cost(D) - w(t_j) +w(\alpha)$, where $\alpha$ is the left subtree of $q_{\nu}.$
We observe if a node in $\alpha$ corresponds to a  leaf $\ell'$ then 
$\ell'$ must be in $T^b \setminus H_{\kappa}$.

Thus, the nodes of $\alpha$ can take care of:

(a) leaves that are associated to some $u \in U,$ such that $u \prec u_{j 3}.$
The sum of the weights of these leaves is at most $|T| \cdot w(u_{j 3})/6|T|^3 < w(u_{j 3})/2$;

(b) at most two leaves  associated with 
$u \in U$ such that $u=u_{j3}$. The fact that every $u \in U$ appears
in at most three sets of ${\cal X}$ together with the fact that $s_{j3} \in H_{\kappa}$ 
explain that we have at most two leaves; 

(c)  leaves in $\tilde{X}_{j'}$  such that
$X_{j'} \prec u_{j 3}$. The sum of the weights of these leaves sum at most $W_{u_{j3}}$.

\smallskip

Thus, we can conclude that $ w(\alpha) \leq 2.5 w(u_{j 3})+W_{u_{j3}}$.
Since $w(t_j)> 2.5 w(u_{j 3})+W_{u_{j3}}$ we
conclude that $  cost(D') < cost(D),$ contradicting the optimality of $D.$

\remove{ 
In particular, by induction hypothesis, the leaves of $T^b$ that  $\alpha$ takes care of, are either from the sets $X_{j'},$ with $j' < j$ or 
are leaves associated to some $u \in U,$ such that $u \prec u_{j 3}.$ The number of leaves associated to $u \prec u_{j 3}$ 
is clearly upper bounded by $3 |T|,$ hence their total 
weight is at most $3 |T| w(u_{j 3})/|T|^3 < 1/2 w(u_{j 3}).$ 
Also notice that, because of the way $H_{\kappa}$ is defined, we have that
for each $j' < j,$ it holds $X_{j'} \prec u_{j 3}.$  
Therefore, we have $w(t_j) > W_{u_{j 3}} + \frac{1}{2} w(u_{j 3}) = \sum_{X \prec u_{j 3}} w(\tilde{X}) + \frac{1}{2} w_{u_{j 3}} \geq 
w(\alpha).$  Therefore, $cost(D) - cost(D') > 0,$ contradicting the optimality of $D.$
}

\medskip
{\em Subcase I.b}~ $q_{r_j}$ is a left child of $q_{\nu}.$ 
This implies that $\nu$ is not an ancestor of $r_j$ in $T^b$.
Let $D'$ be a  tree obtained as follows: 
we swap  $q_{r_j}$ with $q_{\nu}$ if $\nu$ is not in $T_j$; otherwise,
we swap $q_{t_j}$ with $q_{\nu}$. 
Let $\alpha$ be the right subtree of $q_{\nu}.$ 
Again, we have 
$ cost(D') \leq cost(D) - w(t_j) + w(\alpha)$.

If $\nu \notin H_{\kappa}$ then the analysis is identical to the one employed in Subcase I.a
because $\alpha$ can take care of the same leaves considered in that case.

If   $\nu$  is a leaf in  $H_{\kappa}$ then
$w(t_j) > w(\nu) = w(\alpha)$ because $t_j$ is the heaviest leaf 
among the leaves in $H_{\kappa}$ that corresponds to a node in $D_{\ell}$.
Finally, if $\nu$ is an internal node in $ H_{\kappa}  \setminus \{h_{\kappa}\}$
then $\nu =r_{j'}$ for some $j'<j$ and it follows from inequality (\ref{eq:weights})
that $w(t_j) > w(t_j') + w(u_{j'3})+w(u_{j'2})+w(u_{j'1}) = w(\alpha)$.

\medskip
In either Subcase we obtain a tree of cost smaller than $D$ violating the optimality of $D.$ 

\medskip
\noindent
{\em Case II.}
Alternatively, if $q_{t_j}$ is not the right child of $q_{r_j},$ then we can proceed as before.
We consider the case where $q_{t_j}$ is the right child of its parent and also
the case where it is the left child. In the former case we apply a left rotation and
in the latter a simple swap. 
Again  we can prove that the resulting tree has cost smaller than $D,$ a violation to the optimality of $D.$

\smallskip
The proof of the claim is complete.
\end{proof}
Therefore, it must be $\ell \in \{ q_{t_j}, q_{r_j}\}.$ We now split the analysis according to this two cases.

\noindent
{\em Subcase 1.1.} $\ell = q_{r_j}.$ Then, because of the assumption on $D-D_{\ell}$ and the search property, it follows that the right subtree of 
$q_{r_j}$ contains the nodes $q_{t_j}, q_{s_{j\, 3}}, q_{s_{j\, 2}}, q_{s_{j\, 1}}.$ Also, it is not hard to see that they must appear in this order, for otherwise, 
by reordering them we would decrease the average cost of $D,$ since $w(t_j)  > w(s_{j\, 3}) > w(s_{j\, 2}) > w(s_{j\, 1}).$ Therefore the right subtree of $\ell$
coincides with the right subtree of $D_j^A.$ 

Let us assume w.l.o.g that the level of $q_{a_{jk}}$ is smaller than or equal to the level
$q_{a_{jk'}}$ in $D$,  for $k < k'$.
First, we argue that the left child  of $\ell$ must be $q_{a_{j1}}$.
Assume that  $q_{a_{j1}}$ is not the left child of $\ell$
and let $\nu$ be the parent of $q_{a_{j1}}$. We have two cases:

\medskip
{\em A}. $q_{a_{j1}}$ is a right child of $\nu$. 

We perform a left rotation  on $q_{\nu}.$  Let $D'$ be the new tree obtained. We have that 
$ cost(D') \leq  cost(D) - w(q_{a_{j1}}) +w(\alpha)$ where $\alpha$ is the left subtree of $\nu.$
Note that the search property assures that  $\nu$ is an ancestor of $h_{\kappa}$.
Thus, the analysis of Subcase I.a in the above Claim 1, shows that the the sum of the weights
of the leaves that $\alpha$ can takes care is upper bounded by $2.5 w(u_{j3}) + W_{u_{j3}}$.
Since $w(q_{a_{j1}}) > 3w(u_{j3}) + W_{u_{j3}}$ we conclude that
$cost(D') < cost(D)$.

\medskip
{\em B}.
$q_{a_{j1}}$ is a left child of $\nu$. 
In this case, we swap $q_{a_{j1}}$ and $\nu$.
Let $D'$ be the new tree obtained. We have that 
$ cost(D') \leq cost(D) - w(q_{a_{j1}}) +w(\alpha)$ where $\alpha$ is the right subtree of $\nu.$
Note that $\nu$ is not an ancestor of $a_{j1}$ in $T^b$.

If $\nu \notin H_{\kappa}$ the arguments 
employed in subcase I.A shows that 
$w(\alpha) \leq 2.5 w(u_{j3}) + W_{u_{j3}}$.
Since $w(q_{a_{j1}}) > 3w(u_{j3}) + W_{u_{j3}}$ we conclude that
$cost(D') < cost(D)$.

If $\nu \in T_{j'}$, with $T_{j'} \in H_{\kappa}$ and $j' < j$,
it follows from inequality (\ref{eq:weights}) that $w(a_{j1}) > w(\alpha) $.
If $\nu \in T_{j}$ it follows from inequality (\ref{eq:weights2}) that $w(a_{j1}) > w(\alpha) $.
Finally, if $\nu = a_{j'k}$ with $j'<j$ we have that $w(a_{j1}) > w(a_{j'k}) =w(\alpha) $

\medskip

We can conclude that   $q_{a_{j1}}$is the left child of $\ell$. 
Since $w(a_{j1})=w(a_{j2})=w(a_{j3})=w(a_{j4})$, the same arguments show that the nodes 
following $a_{j1}$ 
in the left path are $q_{a_{j2}}$, $q_{a_{j3}}$ and $q_{a_{j4}}$.
Let $\ell'$ be the
left child of $q_{a_{j4}}.$ We have showed that in this subcase $D_{\ell} - D_{\ell'}$ coincides with $D_{j}^A.$

\medskip

\noindent
{\em Subcase 1.2.} $\ell = q_{t_j}.$ There is nothing to prove about the right subtree of $\ell.$ 
On the other hand, in order to  prove that the nodes
following $\ell$ 
in the left path of $D$ are exactly $q_{a_{j1}}, q_{a_{j2}}$, $q_{a_{j3}}$ and $q_{a_{j4}},$ we can proceed as in Subcase 1.1.
 The only additional case to be taken care of, in the argument by contradiction used there, 
 is  when the parent of $q_{a_{j1}}$ is $q_{r_j}.$ However, in this case we can employ the same argument we used for the analogous situation 
in Subcase 1.2. of the proof of Lemma  \ref{lemma:optimal_structure}.
 Let $\ell'$ be the
left child of $q_{a_{j4}}.$ We have showed that in this subcase, 
$D_{\ell} - D_{\ell'}$ coincides with $D_{j}^B.$ 

We can conclude that in both Subcase  1.1 and 1.2, the tree $D - D_{\ell'}$  is a realization of $\pi_i, \dots, \pi_{n+m}$ against the assumption  
that $\ell$ is the deepest node for which such a condition holds.
 
\medskip

\noindent
{\em Case 2.} $\pi_i = u_j,$ for some $j=1,2,\dots, n.$ 

The proof is identical to that employed for  Case 2 of Lemma 
\ref{lemma:optimal_structure}
 \hfill{\qed}

\remove{
\section{The proof of Lemma \ref{lemma:1_ESA}}
Let $x$ and $x^*$ be the nodes queried at the root of $D^\prime$ and $D^*$, respectively. W. l. o. g. we assume $x \neq x^*$, as otherwise the lemma trivially holds. We can also assume that $x^*$ is a node from $T_x$, because the opposite case is analyzed analogously.

\emph{Case 1:} $w(T_x) \leq w(T - T_x)$. In other words, $w(T_x) \leq w(T)/2$. As any path from $r(D^*)$ to a leaf in $D^*$ contains $r(D^*)$ and $T - T_x$ does not contain $x^*$, Lemma~\ref{subSS} states that the depth of any leaf in $D^*_1$ is at least by one smaller than it is in $D^*$. The lemma also implies that the depth of any leaf in $D^*_0$ is not greater than it is in $D^*$. So we have
\begin{align*}
cost(D^\prime) = & \ w(T) + cost(D_0^*) + cost(D_1^*) 
\\
	 \leq & \ w(T) + \sum_{v \in T_x} w(v) d(r(D^*),l_v) +  \sum_{v \in T - T_x} w(v) \bigl( d(r(D^*),l_v) - 1 \bigr)
	\\
	= & \ w(T) + cost(D^*) - w(T - T_x) \leq cost(D^*) + w(T)/{2} \ .
\end{align*}

\emph{Case 2:} $w(T_x) > w(T - T_x)$. Let $x^1,\ldots,x^n$ be the nodes successively queried when the path $r(D^*) \leadsto r(D^\prime)$ is traversed in $D^*$. In particular, $x^1 = x^*$ and $x^n = x$. Let $k < n$ be such that $x^i$ is a node from $T^x - \{x\}$ for $i = 1,\ldots,k$ and $x^{j+1} \notin T^x - \{x\}$. 

In this extended abstract we assume that $w(T_{x} - T_{x^i}) > 0$ for $i = 1,\ldots,k$. The case of $w(T_{x} - T_{x^i}) = 0$ can only occur when there is tie regarding the choice of node $x$ in step (1) of the algorithm, and then the above scenario can be avoided by employing a suitable tie breaking rule. In the full paper we will show by a more intricate case analysis that the approximation factor holds regardless of the tie breaking rule.

For $i = 1,\ldots,k$ we know that $w(T_{x^i}) < w(T - T_{x^i})$, because otherwise $w(T_x) - w(T - T_x) = w(T_{x^i}) + w(T_x - T_{x^i}) - w(T - T_x) > w(T_{x^i}) - w(T_x - T_{x^i}) - w(T - T_x) = w(T_{x^i}) - w(T - T_{x^i}) \geq 0$, so $x^i$ would have been chosen instead of $x$ in step (1) of the algorithm.

From this fact, it follows that $w(T_{x_i}) \leq w(T - T_x)$ for $i = 1,\ldots,k$. This is because otherwise $w(T_x) - w(T - T_x) = w(T_{x^i}) + w(T_x - T_{x^i}) - w(T - T_x) >  w(T - T_x) +  w(T_x - T_{x^i}) - w(T_{x^i}) = w(T - T_{x^i}) - w(T_{x^i}) \geq 0$, so $x^i$ would have been chosen instead of $x$ in step (1).

Let $T^\prime := \bigcup_{i = 1}^k T_{x^i}$ and let $T^{\prime\prime} := T_x - T^\prime$. Note that $T^\prime$ is a forest in general and $T^\prime \cup T^{\prime\prime} = T_x$. We are going to reason about the search tree depths of the nodes in $T - T_x$, $T^\prime$, and $T^{\prime\prime}$ separately.

$D^*_0$ queries all nodes from $T^\prime$, and Lemma~\ref{subSS} states that the depth of those nodes is not greater in $D^*_0$ than it is in $D^*$. 

The nodes from $T^{\prime\prime}$ are as well all queried in $D^*_0$. For these nodes we know that in $D^*$ the node $x^n = x$ is queried before them. As $x$ is not queried by $D^*_0$, the depth of each node from $T^{\prime\prime}$ in $D^*_0$ is by at least one shorter than it is in $D^*$.

Finally, the leaves in $D^*$ corresponding to the nodes from $T - T_x$ are descendants of the nodes in $D^*$ querying $x^1,\ldots,x^k$. These $k$ nodes are not contained in $D^*_1$, so the depth of each leaf in $D^*_1$ is at least by $k$ smaller than it is in $D^*$. Combining the findings, we obtain
\begin{align*}
cost(D^\prime) = & \ w(T) + \sum_{v \in T^\prime} w(v) d(r(D_0^*),l_v) + \sum_{v \in T^{\prime\prime}} w(v) d(r(D_0^*),l_v)
+ \sum_{v \in T - T_x} w(v) d(r(D^*_1),l_v)
\\ 
\leq & \ w(T) + \sum_{v \in T^\prime} w(v) d(r(D^*),l_v) + \sum_{v \in T^{\prime\prime}} w(v) \bigl(d(r(D^*),l_v) - 1 \bigr)
+ \sum_{v \in T - T_x} w(v) \bigl(d(r(D^*),l_v) - k\bigr)
\\
= & \ w(T) + cost(D^*) - w(T^{\prime\prime}) - k w(T - T_x) \ .
\end{align*}
As $T^\prime = T - ((T - T_x) \cup T^{\prime\prime})$, we have $w(T^\prime) = w(T) - w(T - T_x) - w(T^{\prime\prime})$, so
\[
cost(D^\prime) \leq cost(D^*) + w(T^\prime) - (k-1) w(T - T_x) \ .
\]
We have argued above that $w(T_{x^i}) \leq w(T - T_x)$ for $i = 1,\ldots,k$. Therefore, $w(T^\prime) = w(\bigcup_{i=1}^k T_{x^i}) \leq \sum_{i=1}^k w(T_{x^i}) \leq k w(T - T_x)$, and
\[
	cost(D^\prime) \leq cost(D^*) + k w(T - T_x) - (k-1) w(T - T_x) = cost(D^*) + w(T - T_x) \leq cost(D^*) + w(T)/2 \ .
\]
\qed
}


\section{The proof of Lemma \ref{lemma:1_ESA}}
Let $x$ and $x^*$ be the nodes queried at the root of $D^\prime$ and $D^*$, respectively. W. l. o. g. we assume $x \neq x^*$, as otherwise the lemma trivially holds. We can also assume that $x^*$ is a node from $T_x$, because the opposite case is analyzed analogously.

\emph{Case 1:} $w(T_x) \leq w(T - T_x)$. In other words, $w(T_x) \leq w(T)/2$. As any path from $r(D^*)$ to a leaf in $D^*$ contains $r(D^*)$ and $T - T_x$ does not contain $x^*$, Lemma~\ref{subSS} states that the depth of any leaf in $D^*_1$ is at least by one smaller than it is in $D^*$. The lemma also implies that the depth of any leaf in $D^*_0$ is not greater than it is in $D^*$. So we have
\begin{align*}
cost(D^\prime) = & \ w(T) + cost(D_0^*) + cost(D_1^*) 
\\
	 \leq & \ w(T) + \sum_{v \in T_x} w(v) d(r(D^*),l_v) +  \sum_{v \in T - T_x} w(v) \bigl( d(r(D^*),l_v) - 1 \bigr)
	\\
	= & \ w(T) + cost(D^*) - w(T - T_x) \leq cost(D^*) + w(T)/{2} \ .
\end{align*}

\emph{Case 2:} $w(T_x) > w(T - T_x)$. Let $x_1,\ldots,x_n$ be the nodes successively queried when the path $r(D^*) \leadsto r(D^\prime)$ is traversed in $D^*$. In particular, $x_1 = x^*$ and $x_n = x$. Let $k < n$ be such that $x_i$ is a node from $T_x - \{x\}$ for $i = 1,\ldots,k$ and $x_{k+1} \notin T_x - \{x\}$. 

In this extended abstract we assume that $w(T_{x} - T_{x_i}) > 0$ for $i = 1,\ldots,k$. The case of $w(T_{x} - T_{x_i}) = 0$ can only occur when there is tie regarding the choice of node $x$ in step (1) of the algorithm, and then the above scenario can be avoided by employing a suitable tie breaking rule. In the full paper we will show by a more intricate case analysis that the approximation factor holds regardless of the tie breaking rule.

For $i = 1,\ldots,k$ we know that $w(T_{x_i}) < w(T - T_{x_i})$, because otherwise, using the assumption that  $w(T_{x} - T_{x_i}) > 0,$ we would have
$w(T_{x_i}) - w(T - T_{x_i}) = w(T_{x_i}) - w(T_x - T_{x_i}) - w(T - T_x) = w(T_x) - w(T- T_x) - 2 w(T_x - T_{x_i}) < w(T_x) - w(T-T_x),$ and so
$x_i$ would have been chosen instead of $x$ in step~(1) of the algorithm.
\remove{
$w(T_x) - w(T - T_x) = w(T_{x^i}) + w(T_x - T_{x^i}) - w(T - T_x) > w(T_{x^i}) - w(T_x - T_{x^i}) - w(T - T_x) = w(T_{x^i}) - w(T - T_{x^i}) \geq 0$, so $x^i$ would have been chosen instead of $x$ in step (1) of the algorithm.
}

From this fact, it follows that $w(T_{x_i}) \leq w(T - T_x)$ for $i = 1,\ldots,k$. This is because otherwise $w(T_x) - w(T - T_x) = w(T_{x_i}) + w(T_x - T_{x_i}) - w(T - T_x) >  w(T - T_x) +  w(T_x - T_{x_i}) - w(T_{x_i}) = w(T - T_{x_i}) - w(T_{x_i}) \geq 0$, so $x_i$ would have been chosen instead of $x$ in step (1).

Let $T^\prime := \bigcup_{i = 1}^k T_{x_i}$ and let $T^{\prime\prime} := T_x - T^\prime$. Note that $T^\prime$ is a forest in general and $T^\prime \cup T^{\prime\prime} = T_x$. We are going to reason about the search tree depths of the nodes in $T - T_x$, $T^\prime$, and $T^{\prime\prime}$ separately.

$D^*_0$ queries all nodes from $T^\prime$, and Lemma~\ref{subSS} states that the depth of those nodes is not greater in $D^*_0$ than it is in $D^*$. 

The nodes from $T^{\prime\prime}$ are as well all queried in $D^*_0$. For these nodes we know that in $D^*$ the node $x_{k+1}$ is queried before them. As $x_{k+1}$ is not queried by $D^*_0$, 
the depth of each node from $T^{\prime\prime}$ in $D^*_0$ is by at least by one smaller than it is in $D^*$.

Finally, the leaves in $D^*$ corresponding to the nodes from $T - T_x$ are descendants of the nodes in $D^*$ querying $x_1,\ldots,x_k$. 
These $k$ nodes are not contained in $D^*_1$, so the depth of each leaf in $D^*_1$ is at least by $k$ smaller than it is in $D^*$. Combining the findings, we obtain
\begin{align*}
cost(D^\prime) = & \ w(T) + \sum_{v \in T^\prime} w(v) d(r(D_0^*),l_v) + \sum_{v \in T^{\prime\prime}} w(v) d(r(D_0^*),l_v)
+ \sum_{v \in T - T_x} w(v) d(r(D^*_1),l_v)
\\ 
\leq & \ w(T) + \sum_{v \in T^\prime} w(v) d(r(D^*),l_v) + \sum_{v \in T^{\prime\prime}} w(v) \bigl(d(r(D^*),l_v) - 1 \bigr)
+ \sum_{v \in T - T_x} w(v) \bigl(d(r(D^*),l_v) - k\bigr)
\\
= & \ w(T) + cost(D^*) - w(T^{\prime\prime}) - k w(T - T_x) \ .
\end{align*}
As $T^\prime = T - ((T - T_x) \cup T^{\prime\prime})$, we have $w(T^\prime) = w(T) - w(T - T_x) - w(T^{\prime\prime})$, so
\[
cost(D^\prime) \leq cost(D^*) + w(T^\prime) - (k-1) w(T - T_x) \ .
\]
We have argued above that $w(T_{x_i}) \leq w(T - T_x)$ for $i = 1,\ldots,k$. Therefore, $w(T^\prime) = w(\bigcup_{i=1}^k T_{x_i}) \leq \sum_{i=1}^k w(T_{x_i}) \leq k w(T - T_x)$, and
\[
	cost(D^\prime) \leq cost(D^*) + k w(T - T_x) - (k-1) w(T - T_x) = cost(D^*) + w(T - T_x) \leq cost(D^*) + w(T)/2 \ .
\]
\qed


	\section{An FPTAS for Searching in Bounded-Degree Trees}

	\subsection{Algorithm for $\mathcal{P}^B(F,P)$}
	
	In this section we complete the correctness proof of the proposed algorithm for solving $\mathcal{P}^B(F,P)$. It has already been argued in Section \ref{sec:search} that the algorithm always returns a feasible solution. In addition, in Case 1 of the algorithm, the returned solution is also optimal. Here we prove the optimality for the second case:

\remove{
	\paragraph{Case 1:} {\em $F$ is a forest $\{T_{c_1(u)}, \ldots, T_{c_f(u)}\}$.} Let $D^*$ be an optimal solution for $\mathcal{P}^B(F,P)$. Let $\bar{U}^f$ be the nodes of the left path of $D^*$ which are assigned to nodes in $T_{c_f(u)}$ and let $\bar{U}^o$ be the nodes in this path which are assigned to nodes in $F \setminus T_{c_f(u)}$; if needed, extend $\bar{U}^f$ and $\bar{U}^o$ arbitrarily so they partition the nodes in this left path. Define $\bar{D}^f$ by taking $D^*$, setting all nodes in $\bar{U}^o$ as blocked and deleting all right children of the nodes in $\bar{U}^o$. Similarly define $\bar{D}^o$ by taking $D^*$ and blocking the nodes in $\bar{U}^f$ and removing their right children.

	Consider the run of the algorithm that chooses the `correct' partition $\mathcal{U} = (\bar{U}^f, \bar{U}^o)$. Notice that $\bar{D}^f$ is an EST for $T_{c_f(u)}$ compatible with $P^f$ and $\bar{D}^o$ is an EST for $F \setminus T_{c_f(u)}$ compatible with $P^o$. Moreover, both $\bar{D}^f$ and $\bar{D}^o$ have height at most $B$, hence they are feasible for $\mathcal{P}^B(T_{c_f(u)}, P^f)$ and $\mathcal{P}^B(F \setminus T_{c_f(u)}, P^o)$ respectively. Therefore, it is not difficult to see that $cost(D^*) = cost(\bar{D}^f) + cost(\bar{D}^o) \ge OPT(\mathcal{P}^B(T_{c_f(u)},P^f)) + OPT(\mathcal{P}^B(F \setminus T_{c_f(u)},P^o))$. 
	
	However, recall that the cost of $D^{\mathcal{U}}$ is given by $OPT(\mathcal{P}^B(T_{c_f(u)}, P^f)) + OPT(\mathcal{P}^B(F \setminus T_{c_f(u)}, P^o))$; the previous inequality then implies that $D^{\mathcal{U}}$ is optimal. Since the solution returned by the algorithm is at least as good as this one, its optimality follows.
}
		\paragraph{Case 2:} {\em $F$ is a tree $T_v$.} Let $D^*$ be an optimal solution for $\mathcal{P}^B(T_v, P)$. Consider the internal node of $D^*$ assigned to $v$; since $D^*$ is compatible with $P$ and since this node belongs to the left path of $D^*$, it corresponds to a node $p_i$ of $P$. Thus, we denote this internal node of $D^*$ assigned to $v$ by $\bar{p}'_i$. Let $\bar{z}'$ be the leaf of $D^*$ assigned to $v$ and notice that $\bar{z}'$ lies in the left path of the right subtree of $\bar{p}'_i$. We construct $\bar{D}'$ from $D^*$ by essentially applying the inverse of Step \ref{step3} of the algorithm: remove from $D^*$ the right subtree of $\bar{p}'_i$; this removed subtree becomes the subtree of $\bar{p}'_i$; assign $\bar{p}'_i$ as blocked and remove $\bar{z}'$. (One can use Figures \ref{fig:case1}.d and \ref{fig:case1}.c to better visualize this construction.)

	The tree $\bar{D}'$ is actually an EST for the forest $\{T_{c_1(v)}, \ldots, T_{c_{\delta(v)}(v)}\}$ and has height at most $B$. Now construct $\bar{P}'$ by taking the left path of $\bar{D}'$, setting all the non-blocked nodes as unassigned and also setting every node after $\bar{p}'_i$ as unassigned. Clearly $\bar{D}'$ is compatible with $\bar{P}'$ and thus feasible for $\mathcal{P}^B(\{T_{c_1(v)}, \ldots, T_{c_{\delta(v)}(v)}\}, \bar{P}')$. 
	
	Notice, however, that $\bar{P}'$ starts with the prefix of $P$ until $p_i$ (in terms of its assignment), then it has a blocked node corresponding to $p_i$ and then some unassigned nodes. Let $\bar{t}$ be the number of nodes in $\bar{P'}$.
Since the last node of $\bar{P}'$ comes from the parent of $\bar{z}'$ in $D^*$ and $D^*$ has height at most $B$, 
we have that $ \bar{t} \le B$. Thus,  the path $\bar{P}'$ coincides with the path $P^{i,t}$ constructed by the 
algorithm  when $t=\bar{t}$. 

	It is easy to see that the tree $D^{i, \bar{t}}$, as defined in the algorithm, has cost  
	$$	OPT(\mathcal{P}^B(\{T_{c_1(v)}, \ldots, T_{c_{\delta(v)}(v)}\}, P^{i,\bar{t}}) + \bar{t} \cdot w(v)=
	 OPT(\mathcal{P}^B(\{T_{c_1(v)}, \ldots, T_{c_{\delta(v)}(v)}\}, \bar{P}') + \bar{t} \cdot w(v),$$ 
		which is at most $cost(\bar{D}') + \bar{t} w(v)$ due to the feasibility of $\bar{D}'$. Finally, notice that this last quantity is actually the cost	 of $D^*$, so $cost(D^{i, \bar{t}}) \le cost(D^*)$. Since the procedure returns a solution which is at least as good as $D^{i, \bar{t}}$, its optimality follows.

		\subsection{Proof of Lemma \ref{lemma:geoDec}}

				By means of contradiction suppose $v^* \in D^*$ with $\dist{r(D^*)}{v^*}{D^*} \ge c$ but $w(D^*_{v^*}) > \alpha \cdot w(D^*)$. Let $\tilde{T}$ be the subtree of $T$ associated with $v^*$ and let $x$ be the root of $\tilde{T}$.

Let $y$ be a node in $T_x$ to be specified later. Let $T^0 = T - T_y$ and $T^i = T_{c_i(y)},$ for $i=1, \dots, \delta(y).$ Moreover, 
let $D^i$ be the search tree for $T^i$  obtained from $D^*$ via Lemma \ref{subSS}.  
We shall construct a new search tree $D'$ for $T$ as follows:  
the root of $D'$ is assigned to $y$; the left tree of $r(D')$ is the search tree $D^0$; in the right tree of $r(D')$ we build a left path containing nodes corresponding to queries for $c_1(y), c_2(y), \ldots, c_{\delta(y)}(y)$ and we  make
$D^i$ becomes the right subtree of  node querying $c_i(y).$

\remove{		Now we select a node $y$ in $T_x$ (to be specified later) and construct a new search tree $D'$ for $T$ as follows: the root of $D'$ is assigned to $y$; the left tree of $r(D')$ is a search tree $D^0$ for $T - T_y$ obtained from $D^*$ via Lemma \ref{subSS}; in the right tree of $r(D')$ we build a left path containing nodes corresponding to queries for $c_1(y), c_2(y), \ldots, c_{\delta(y)}(y)$ and append to the right of these nodes the search trees $D^i$ for $T_{c_i(y)}$ also obtained from $D^*$ via Lemma \ref{subSS} (Figure \ref{fig:img}). In order to simplify the notation let $T^0 = T - T_y$ and $T^i = T_{c_i(y)}$; notice that for every $i$, $D^i$ is a search tree for $T^i$. }
	
		It is easy to see that the cost of $D'$ is at most $\sum_{i = 0}^{\delta(y)} cost(D^i) + (\Delta(T) + 1) \cdot w(T)$. We claim  that,  for a suitable choice of $y$, $D'$ improves over $D^*$. 
For this, let  $S$ be the set of nodes of $T_x$ which are queried in the path $r(D^*) \leadsto v^*$. We distinguish the following cases. 	

\medskip \noindent
{\bf Case 1:} $|S| \ge \frac{2 c}{3}$. Set $y$ as a node in $T_x$ such that $|T_y \cap S| \ge \frac{|S|}{2}$ and $|T_{c_i(y)} \cap S| \le \frac{|S|}{2}$ for every child $c_i(y)$ of $y$ and construct $D'$ as described previously. To find such a node $y,$ traverse  $T_x$ starting at its root and proceeding as follows: if $u$ is the current node then move to the child $v$ of $u$ with largest $|T_v \cap S|$; the traversal ends when $|T_u \cap S| \le \frac{|S|}{2}$. The parent of the node where the traversal ends is the desired $y.$
	
	To bound the cost of $D'$ we first consider the cost of a particular tree $D^i$. From its construction we have that $\dist{r(D^i)}{l_u}{D^i} \le \dist{r(D^*)}{l_u}{D^*}$ for any node $u \in T^i$. Moreover, for any node $u \in T^i \cap \tilde{T}$ the path $r(D^*) \leadsto l_u$ contains $v^*$ and therefore it contains $|S \setminus T^i|$ queries to nodes in $T_x \setminus T^i$. Since these nodes were removed in the construction of $D^i$, we have that for every $u \in T^i \cap \tilde{T}$

	%
	\begin{equation*}
		\dist{r(D^i)}{l_u}{D^i} \le \dist{r(D^*)}{l_u}{D^*} - |S \setminus T^i| \le \dist{r(D^*)}{l_u}{D^*} - \frac{|S|}{2} \,,
	\end{equation*}
	where the last inequality follows from the definition of $y$. It follows that 
$$cost(D^i) \le \sum_{u \in T_i}  \dist{r(D^*)}{l_u}{D^*} \cdot w(u) - \frac{|S| \cdot w(T^i \cap \tilde{T})}{2}.$$
Combining  this bound with our upper bound on the cost of $D'$ we get that 
\begin{equation*}
	 cost(D') \le  cost(D^*) -  \dist{r(D^*)}{l_y}{D^*} \cdot w(y) - \frac{|S| w(\tilde{T} - y)}{2} + (\Delta(T) + 1) \cdot w(T) \,.
	\end{equation*}
	We claim that actually $cost(D') \le cost(D^*) - \frac{|S| w(\tilde{T})}{2} + (\Delta(T) + 1) \cdot w(T)$. To see this, first suppose $y \in \tilde{T}$; then $\dist{r(D^*)}{l_y}{D^*} \cdot w(y) \ge |S| \cdot w(y)$ and the claim holds. In the other case where $y \notin \tilde{T}$, the claim follows from the fact $w(\tilde{T} - y) = w(\tilde{T})$. 
	
	By making use of this claim, the hypothesis on $|S|$ and the facts that $w(\tilde{T}) = w(D^*_{v^*}) > \alpha \cdot w(D^*)$ and $c \cdot \alpha > 3 (\Delta(T) + 1)$, we conclude that $D'$ improves over $D^*$, which is a contradiction.

\medskip \noindent
{\bf Case 2:} $|S| < \frac{2 c}{3}$. We set $y = x$ and construct $D'$ as described at the beginning of the proof.
	
		
		Again, we are trying to reach the contradiction $cost(D') < cost(D^*)$. Recall 
		that $cost(D') \le \sum_{i = 0}^{\delta(y)} cost(D^i) + (\Delta(T) + 1) \cdot w(T)$, so we bound the cost of the trees $D^i$'s.
		
		 By construction we have that $cost(D^0) \le \sum_{u \in T^0} d(r(D^*), l_u) w(u)$. 
		 Now consider some tree $D^i$ for $i \neq 0$. From its construction we have that $\dist{r(D^i)}{l_u}{D^i} \le \dist{r(D^*)}{l_u}{D^*}$ for any node $u \in T^i$. Moreover, for any node $u \in T^i \cap \tilde{T}$ the path $r(D^*) \leadsto l_u$ contains $v^*$ and 
		 therefore it contains at least $c - |S|$ queries to nodes in $T - T_x = T^0$. Then Lemma \ref{subSS} guarantees 
		 that for every $u \in T^i \cap \tilde{T}$ we have $\dist{r(D^i)}{l_u}{D^i} \le \dist{r(D^*)}{l_u}{D^*} - (c - |S|)$.
		
		Weighting these bounds over all nodes in $T$ we have:
		\begin{eqnarray*}
					\sum_{i = 0}^{\delta(y)} cost(D^i) &\le& \sum_{i = 0}^{\delta(y)} \sum_{u \in T^i} d(r(D^*),l_u) w(u) - 
			\sum_{i = 1}^{\delta(y)} \sum_{u \in T^i \cap \tilde{T}} (c - |S|) \cdot w(u)\\ &=&
cost(D^*) - d(r(D^*),l_x) w(x) -   (c - |S|) \cdot (  w(\tilde{T}) - w(x)) \\ &\leq&
		cost(D^*) -   (c - |S|) \cdot w(\tilde{T}),
		\end{eqnarray*}	
where the last inequality is valid because $l_x$ is a descendant of $v^*$ in $D^*$ so that $ d(r(D^*),l_x) \geq c$.
Thus, by combining the upper bound on $cost(D')$ with the previous equation in the  display  we get  that
 $cost(D') \le cost(D^*) - (c - |S|) \cdot w(\tilde{T}) + (\Delta(T) + 1) \cdot w(T)$. 
	By making use of the hypothesis $|S| < \frac{2 c}{3}$ and the facts that 
	$w(\tilde{T}) = w(D^*_{v^*}) > \alpha \cdot w(D^*) = \alpha \cdot w(T)$ and $c \cdot \alpha > 3 (\Delta(T) + 1)$, 
	we conclude that $D'$ improves over $D^*$, which gives the desired contradiction.
		
	\subsection{Proof of Theorem \ref{FPTAS}} \label{app:FPTAS}
The following lemma shows that that the bound on the height of the shortest optimal tree holds even when the weight function is not strictly positive. 
	\begin{lemma} \label{height}
		There is an optimal search tree for $(T,w)$ of height at most $O(\Delta(T) \cdot (\log w(T) + \log n))$.  
	\end{lemma}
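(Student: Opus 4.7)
}
The plan is to combine the geometric weight-decrease provided by Lemma \ref{lemma:geoDec} with a cost-free rebalancing of the zero-weight remainder. First I would record a \emph{principle of optimality}: if $D^*$ is optimal for $(T,w)$ and $v^* \in D^*$ is associated with subtree $T'$ of $T$, then $D^*_{v^*}$ is itself optimal for $(T',w|_{T'})$. Indeed,
\[ cost(D^*) = \dist{r(D^*)}{v^*}{D^*}\cdot w(T') + cost(D^*_{v^*}) + R, \]
where the residual $R$ does not depend on the internal structure of $D^*_{v^*}$, so any improvement to $D^*_{v^*}$ would strictly lower $cost(D^*)$.

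Next, I would apply Lemma \ref{lemma:geoDec} with $\alpha = 1/2$ to fix a constant $c = O(\Delta(T))$ such that in any optimal search tree every node at depth $\geq c$ is associated with a subtree of weight at most $w(T)/2$. Iterating this bound by reapplying the lemma inside each $D^*_{v^*}$ (which is optimal by the previous step), after depth $kc$ the associated subtree has weight at most $w(T)/2^k$. Since $w$ is integer-valued, the choice $k = \lceil \log_2(w(T)+1) \rceil = O(\log w(T))$ drives this bound strictly below $1$, hence to $0$. Therefore every node of $D^*$ at depth $H_1 := ck = O(\Delta(T)\log w(T))$ is associated with a zero-weight subtree of $T$.

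Finally, because any search tree for a zero-weight subtree contributes $0$ to the total cost, I may replace each depth-$H_1$ subtree of $D^*$ by a minimum-height search tree for the corresponding subtree of $T$ without altering $cost(D^*)$. A standard centroid argument produces such a tree of height $O(\Delta(T)\log n)$: in any tree $T'$ of $n' \geq 2$ nodes with maximum degree at most $\Delta(T)$, descend from the root to the child with the largest subtree as long as that subtree has size $> n'/2$; at the stopping node $v$ one has $|T'_v| \leq n'/2$ while its parent $u$ satisfies $|T'_u| > n'/2$, so $|T'_v| \geq (|T'_u|-1)/\Delta(T) \geq n'/(4\Delta(T))$ (for $n'$ large enough), whence querying $v$ splits $T'$ into two parts of size at most $n'(1 - 1/(4\Delta(T)))$, and unrolling the recursion gives height $O(\Delta(T)\log n')$. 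Combining the two parts, the restructured $D^*$ remains optimal and has height $H_1 + O(\Delta(T)\log n) = O(\Delta(T)(\log w(T) + \log n))$. The main technical obstacle is precisely the iteration step: once the principle of optimality licenses repeatedly reapplying Lemma \ref{lemma:geoDec} inside optimal subtrees, the integrality of $w$ forces termination within $O(\Delta(T)\log w(T))$ levels and the zero-weight tail is trimmed for free by the centroid-style rebalancing.
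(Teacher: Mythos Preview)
Your proposal is correct and follows essentially the same approach as the paper: invoke the principle of optimality to iterate Lemma \ref{lemma:geoDec} until the remaining weight is zero, then replace each zero-weight subtree by a minimum-height search tree. The only difference is that the paper obtains the $O(\Delta(T)\log n)$ worst-case height bound by citing \cite{BenFarNew99}, whereas you supply your own centroid-style argument; otherwise the structure and the steps are the same.
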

	
	\begin{proof}
		Consider an optimal search tree $D^*$ for $(T,w)$. Notice that for any $v \in D^*$, $D^*_{v^*}$ is an optimal search tree for the subtree of $T$ associated with $v$. So we can employ the Lemma \ref{lemma:geoDec} repeatedly and get that for every node $v$ of $D^*$ at a level $l = O(\Delta(T) \cdot \log w(T))$, $w(D^*_v) = 0$. 
			
		Now let $L$ be all the nodes of $D^*$ at level $l$. For each $v \in L$ let $D^v$ be the shortest search tree for the subtree of $T$ associated with node $v$. 
It was proved in \cite{BenFarNew99} that the height of $D^v$ can be upper bounded by $(\Delta(T) + 1) \cdot \log n$. Then we can construct the search tree $D'$ for $T$ as follows: start with $D^*$ and for each $v \in L$ replace $D^*_v$ by $D^v$. Clearly $D'$ has height at most $O(\Delta(T) \cdot (\log w(T) + \log n))$. Moreover, since $w(D^v) = w(D^*_v) = 0$ for all $v \in L$, it follows that $D'$ has the same cost as $D^*$ and hence is optimal.
	\end{proof}

 	\newtheorem{thmFPTAS}[thmFPTASc]{Theorem}
	\begin{thmFPTAS}
		Consider an instance $(T,w)$ to our search problem where $\Delta(T) = O(1)$. Then there is an algorithm for computing an optimal search tree for $(T,w)$ that runs in $poly(n \cdot w(T))$ time. In addition, there is an algorithm for computing an $(1 + \epsilon)$-approximate search tree for $(T,w)$ that runs in $poly(n/\epsilon)$ time.
	\end{thmFPTAS}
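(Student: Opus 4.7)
The plan combines three ingredients established in the previous subsections: the dynamic program for $\mathcal{P}^B(F,P)$, the height bound of Lemma \ref{height}, and a standard scaling-and-rounding step. For the first (pseudo-polynomial exact) statement, Lemma \ref{height} guarantees an optimal search tree of height $h = O(\Delta(T)(\log w(T) + \log n))$; by Lemma \ref{ESTtoST} this implies an optimal EST of height $B = h+1$. Running the DP on $\mathcal{P}^B(T, P_0)$, where $P_0$ is a PLP of $B$ unassigned nodes, yields such an optimal EST, which Lemma \ref{ESTtoST} converts to an optimal search tree in linear time. The DP runs in $O(n^2 2^{2B})$ time, which for $\Delta(T) = O(1)$ is $poly(n \cdot w(T))$, giving the first half of the theorem.

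\textbf{FPTAS via scaling.} For the approximation scheme I would mimic the classical knapsack FPTAS: set $K = \epsilon \cdot w(T)/n^3$ and $w'(v) = \lfloor w(v)/K \rfloor$. Then $w'(T) \le w(T)/K = n^3/\epsilon$, so invoking the pseudo-polynomial algorithm on the instance $(T, w')$ produces an optimal search tree $D$ for $(T, w')$ in $poly(n \cdot w'(T)) = poly(n/\epsilon)$ time.

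\textbf{Approximation guarantee.} To certify that $D$ is $(1+\epsilon)$-approximate for $(T, w)$, I would use the elementary bound $0 \le w(v) - K w'(v) < K$ to derive, for every search tree $\hat D$ of height $\hat h$, the sandwich $K \cdot cost_{w'}(\hat D) \le cost_w(\hat D) \le K \cdot cost_{w'}(\hat D) + K n \hat h$, where $cost_w$ and $cost_{w'}$ denote the cost functions under $w$ and $w'$. By Lemma \ref{height} both $D$ and a suitable optimal $D^*$ for $(T, w)$ can be assumed to have height $\hat h = O(\log(n/\epsilon))$ (since $\Delta(T) = O(1)$ and $w'(T) = O(n^3/\epsilon)$). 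Chaining the sandwich with the optimality $cost_{w'}(D) \le cost_{w'}(D^*)$ and the trivial lower bound $cost_w(D^*) \ge w(T)$ (each leaf of a search tree for $n \ge 2$ lies at depth at least one) yields $cost_w(D) \le cost_w(D^*) + O(\epsilon \cdot w(T) \cdot \hat h / n^2) \le (1+\epsilon) \cdot cost_w(D^*)$ for all but a constant number of tiny instances, which are handled by brute force.

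\textbf{Main obstacle.} The most delicate point is selecting a single $K$ that simultaneously keeps $w'(T)$ polynomially bounded in $n/\epsilon$ (so the pseudo-polynomial algorithm remains polynomial) and makes the additive rounding error $K n \hat h$ negligible relative to the lower bound $w(T) \le cost_w(D^*)$. The apparent circularity, in that $\hat h$ itself depends on $K$ through $w'(T)$, is resolved because the height bound is only logarithmic in $w'(T)$, so a modest polynomial denominator in $K$ leaves room for the rounding error to be absorbed into $\epsilon \cdot cost_w(D^*)$.
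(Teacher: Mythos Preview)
Your approach is essentially the paper's: invoke the DP together with the height bound of Lemma~\ref{height} for the pseudo-polynomial algorithm, then scale weights for the FPTAS. The paper chooses $K = \epsilon W/n^2$ with $W = \max_u w(u)$ and $w'(u) = \lceil w(u)/K\rceil$, but your variant with $K = \epsilon\, w(T)/n^3$ and floors is equally valid.

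One wrinkle in your error analysis deserves correction. You bound the additive rounding error by $K n \hat h$ with $\hat h = O(\log(n/\epsilon))$, which injects an $\epsilon$-dependence and forces the ``tiny instances / brute force'' patch. As phrased, that patch is not right: the bound $\log(n/\epsilon) = O(n^2)$ fails for small $\epsilon$, not small $n$, so there is no finite set of instances to excise. The fix is simply not to invoke Lemma~\ref{height} here at all: any search tree on $n$ leaves has height at most $n-1$, so the additive error is at most $K n^2 = \epsilon\, w(T)/n \le \epsilon \cdot cost_w(D^*)$ directly (using your lower bound $cost_w(D^*) \ge w(T)$), with no circularity and no special cases. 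This is exactly what the paper does, bounding $\sum_u d(r(D^*),l_u) \le n^2$ by the trivial depth bound $d(\cdot,\cdot) \le n$ rather than by the logarithmic height bound.
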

	
	\begin{proof}
		The existence of an exact pseudo-polynomial algorithm which runs in $poly(n \cdot w(T))$ time follows from the discussion presented in Section \ref{sec:search} (see {\bf From the DP  algorithm to an FPTAS.}). Thus, we only prove the second claim of the theorem, namely, that our search problem admits an FPTAS.
		
		We claim that the following procedure gives the desired FPTAS:
		
		\begin{enumerate}
			\item Let $W$ be the weight of the heaviest node of $T$, namely $W = \max_{u \in T} \{w(u)\}$. Define $K = \frac{\epsilon \cdot W}{n^2}$ and the weight function $w'$ such that $w'(u) = \lceil w(u) / K \rceil$ for every node $u \in T$. \label{step1FPTAS}
		
			\item Find an optimal search tree $D$ for $(T,w')$ using the pseudo-polynomial algorithm and return $D$. \label{step2FPTAS}
		\end{enumerate}
		
		First we analyze the running time this procedure. Clearly Step \ref{step1FPTAS} takes at most $O(n)$ time. In order to analyze Step \ref{step2FPTAS}, let $W' = \max_{u \in T} \{w'(u)\}$ and notice that $W' = \lceil W / K \rceil \le (n^2) / \epsilon + 1$. Thus, $w'(T) \le n W' \le (n^3) / \epsilon + n$. Then the pseudo-polynomial algorithm employed in Step \ref{step2FPTAS} runs in $poly(n \cdot w'(T)) = poly(n/\epsilon)$. The running time of the whole procedure is then $poly(n/\epsilon)$, as desired.
		
		Now we argue that the solution $D$ returned by the procedure is $(1 + \epsilon)$-approximate for the instance $(T, w)$. Let us make the weights explicit in the cost function, e.g. we denote by $cost(D,w)$ and $cost(D,w')$ the cost of $D$ with respect to the weights $w$ and $w'$. Thus we want to prove that $cost(D, w) \le (1 + \epsilon) cost(D^*, w)$, where $D^*$ is an optimal search tree for $(T, w)$.	
		
		Clearly for each node $u \in T$ we have $K \cdot w'(u) \le w(u) + K$ and hence
	\begin{equation*}
		K \cdot cost(D^*,w') \le cost(D^*, w) + \sum_{u \in T} d(r(D^*), l_u) \cdot K \le cost(D^*, w) + n^2 \cdot K = cost(D^*, w) + \epsilon \cdot W,
	\end{equation*}
	where the last inequality follows from the fact that the distances are trivially upper bounded by $n$. Excluding the trivial case where $T$ is empty, notice that every path in $D^*$ from $r(D^*)$ to a leaf has length at least one. Thus, $cost(D^*,w)$ can be lower bounded by $W$, and the previous displayed inequality gives $K \cdot cost(D^*, w') \le (1 + \epsilon) cost(D^*, w)$. But since $w(u) \le K \cdot w'(u)$ for all $u$, we have that 
	\begin{equation*}
		cost(D,w) \le K \cdot cost(D,w') \le K \cdot cost(D^*, w') \le (1 + \epsilon) cost(D^*, w)\;,
	\end{equation*}
	 where the second inequality follows from the optimality of $D$. Therefore, $D$ is a $(1 + \epsilon)$-approximate search tree for the instance $(T,w)$, which concludes the proof of the theorem.
	\end{proof}

\remove{	
\begin{figure}[!ht]
\begin{center}
\includegraphics[height=10cm]{Realizations5-2}
\caption{The base Realization $D^A$ (left); The reference pseudo-realization $D^A_+$ (center), where 
the place-holder questions are dashed; An (optimal)  Realization w.r.t. 
the exact cover $\{X_1, X_4\}$ (right)---in bold are the questions involved in the configuration changes.
Only the leaves associated to nodes of T with non-zero weights are shown here.} 
\label{fig:realizations}
\end{center}
\end{figure}
}

\section{Polynomiality of the tree search problem for instances of diameter at most 3}

First consider an instance $(T, w)$ of our search problem where $T$ has diameter two, i.e., it is a star. Let us root the star in its center. 
Employing a simple exchange argument it is easy to show that the children of $r(T)$ 
must be queried according to their weights, in decreasing order. 
Thus, an optimal search tree for $(T, w)$ can be built based on any sorting algorithm in $O(n \log n).$
	
Now 
 assume $T$ has diameter 3. 
Notice that the only possible structure for $T$ is the following: there are two nodes $r$ and $r'$ joined by an edge and all other nodes are either adjacent to $r$ or to $r'.$ 
In order to define the questions,  let us take $r$ as the root. 
Let $l$ ($l'$) be the heaviest leaf among the children of $r$ ($r'$). 
It should not be difficult to see that  the root of any optimal search tree must query one of the nodes in the set $\{r',l, l'\}.$ 
This can be proved using a simple exchange argument. 
If $r(D)$ is assigned to $r'$ then its right subtree is  an optimal search tree for $T_{r'}$ and its left subtree is an optimal search tree for $T - T_{r'}.$
If $r(D)$ is assigned to $l$ then its right subtree is a leaf assigned to $l$ and its right subtree is an optimal search tree for $T - l$. 
Analogously, when $r(D)$ is assigned to $l'$ its right subtree is an optimal search tree for $T - l'$. 
Finally, notice that in the first case, both $T_{r'}$ and $T - T_{r'}$ have diameter at most 2.
	
Consider the recursion tree of the above procedure; notice that every subproblem $(T', w)$ has a specific structure: $T'$ is the subtree of $T$ induced by nodes $r$, $r'$, the $i$th heaviest leaf-children of $r$ and the $j$th heaviest children of $r'$ (for some $i,j$). Employing a Dynamic Programming strategy together with an $O(n \log n)$ preprocessing for the two stars centered at $r$ and $r',$   
 it is not difficult to see that each of these $O(n^2)$ problems can be solved in $O(1)$ time. 
This gives an $O(n^2)$ algorithm for finding an optimal search tree for $(T, w)$.

\newpage

\noindent{\Large \bf Figures}

\begin{figure}[!ht]
\begin{center}
\includegraphics[height=4.5cm]{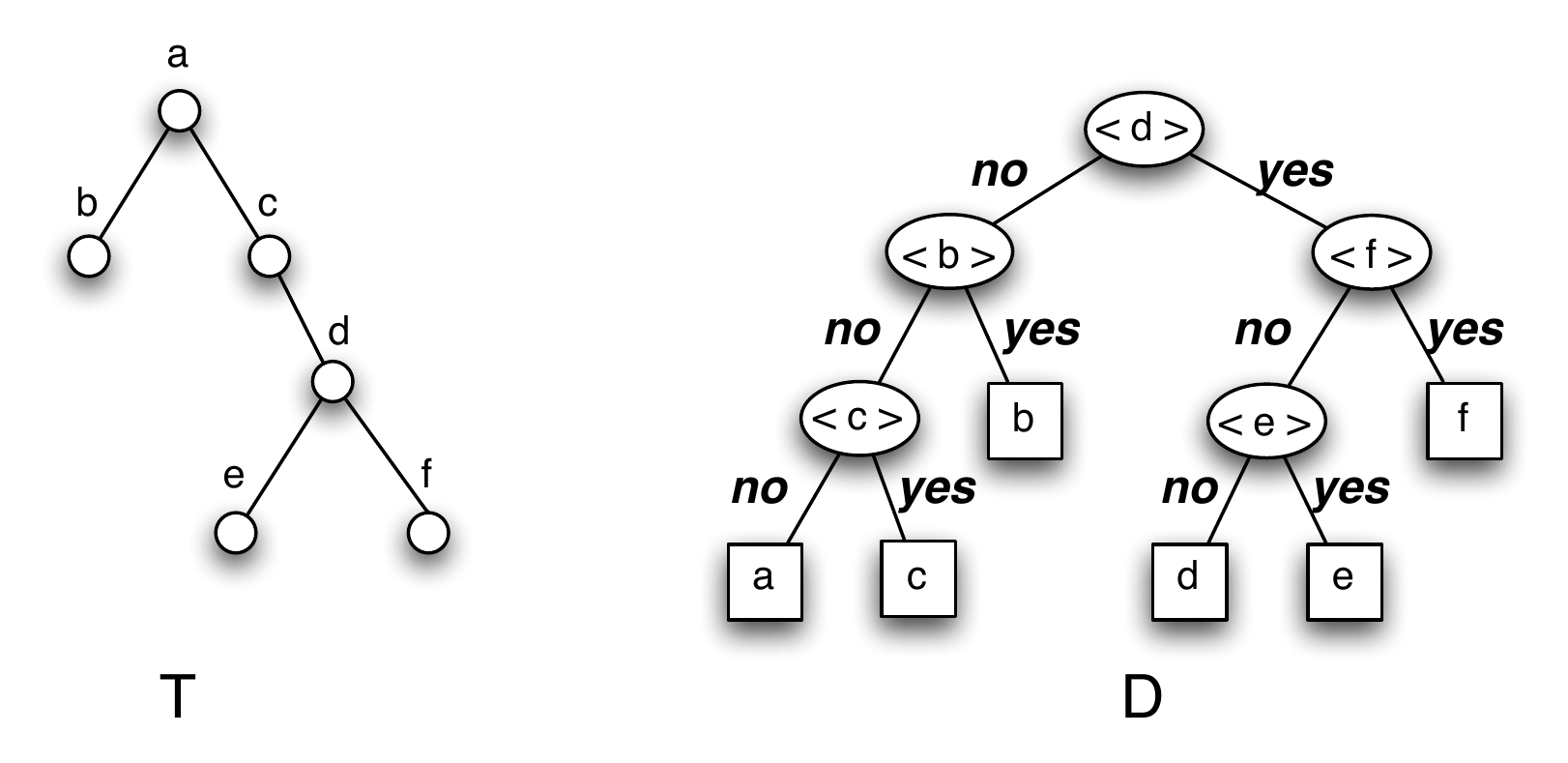}
\caption{(left) The input tree T; (right) a search tree D for  T} 
\label{fig:prob-defi}
\end{center}
\end{figure}

\begin{figure}[!ht]
\begin{center}
\includegraphics[height=6cm]{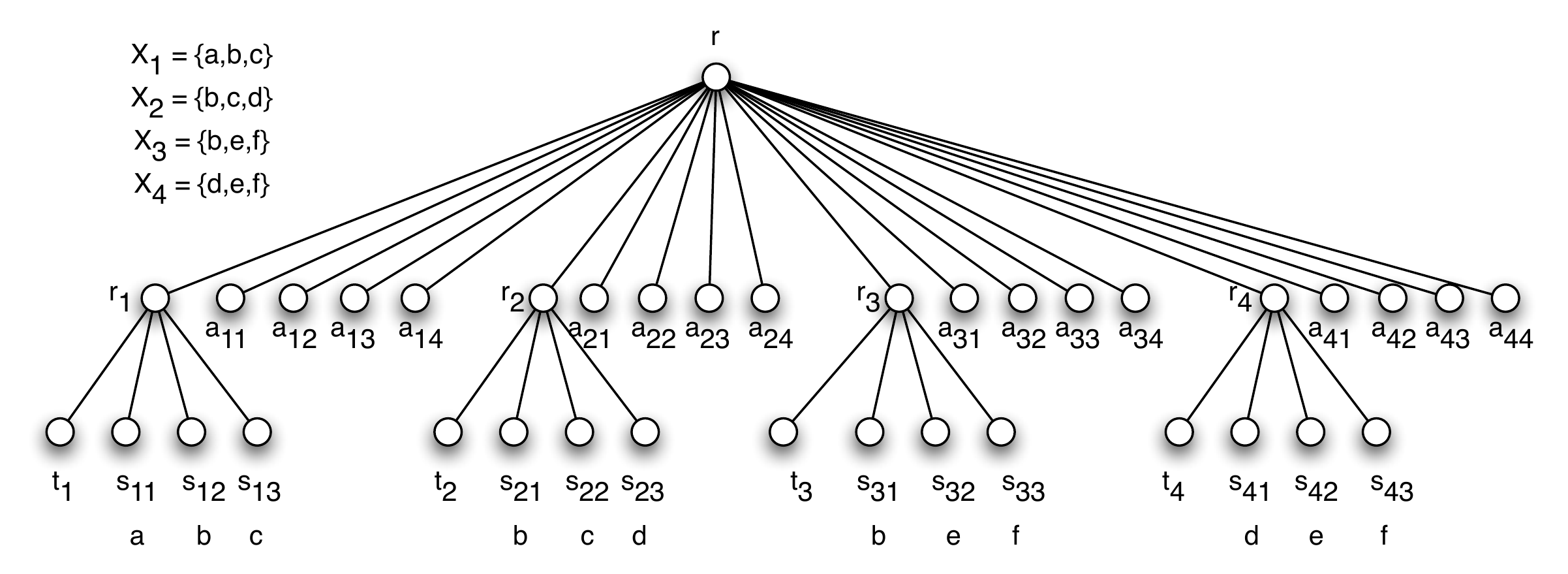}
\caption{The tree obtained from instance ${\mathbb I} = (\{a,b,c,d,e,d,f\}, \{X_1, X_2, X_3, X_4\})$ of 3-bounded
X3C.} 
\label{fig:tree}
\end{center}
\end{figure}

\begin{figure}[!ht]
\begin{center}
\includegraphics[height=6cm]{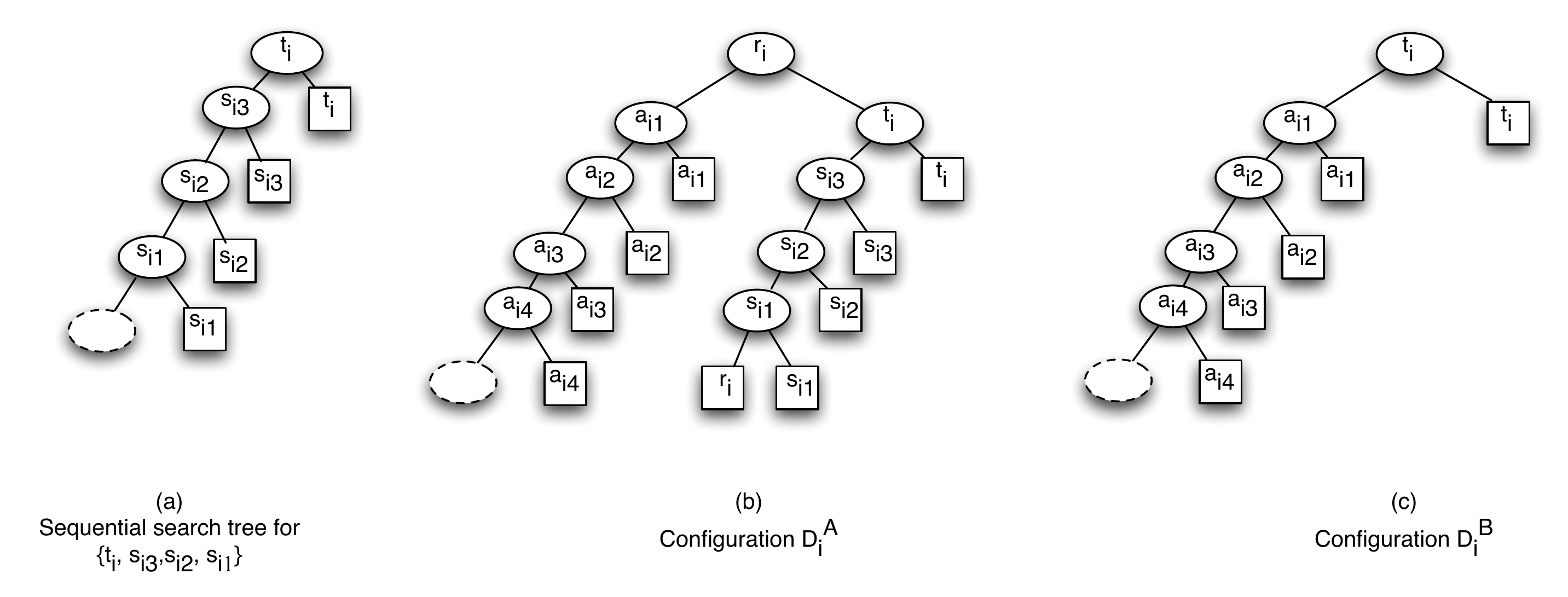}
\caption{The two possible configurations we use for the part of the search tree that 
concerns the  subtree $T_i$ and the leaf $a_{i}$ and a sequential search tree for $T_i.$} 
\label{fig:Config-LinearStrat}
\end{center}
\end{figure}

\begin{figure}[!ht]
\begin{center}
\includegraphics[height=19cm]{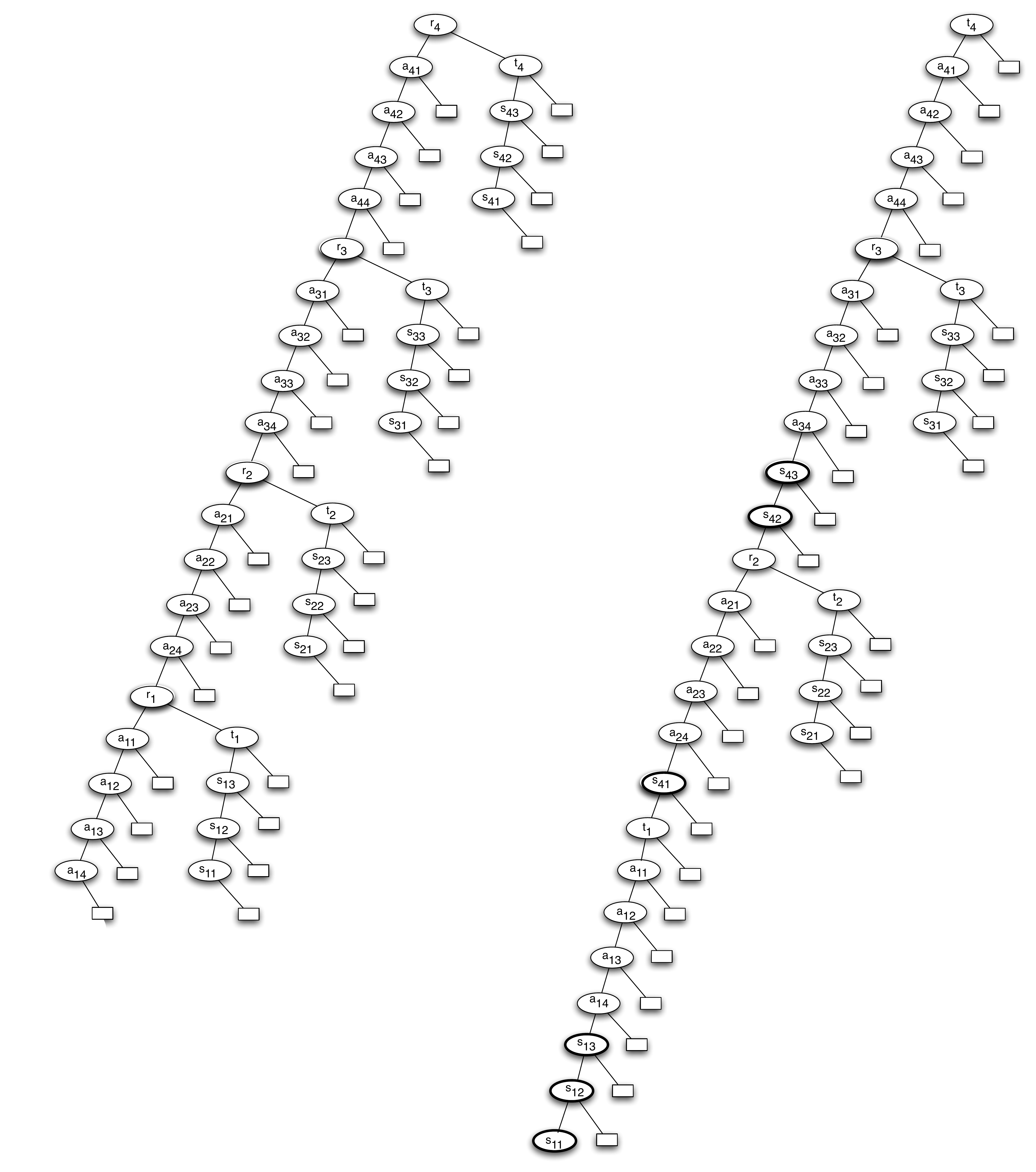}
\caption{Realization $D^A$ (left), and the (optimal)  Realization w.r.t. 
the exact cover $\{X_1, X_4\}$ (right)---in bold are the questions involved in the configuration changes.
Only the leaves associated to nodes of T with non-zero weights are shown here.} 
\label{fig:realizations}
\end{center}
\end{figure}

\noindent
\begin{figure}[!ht]
\includegraphics[height=9.2cm]{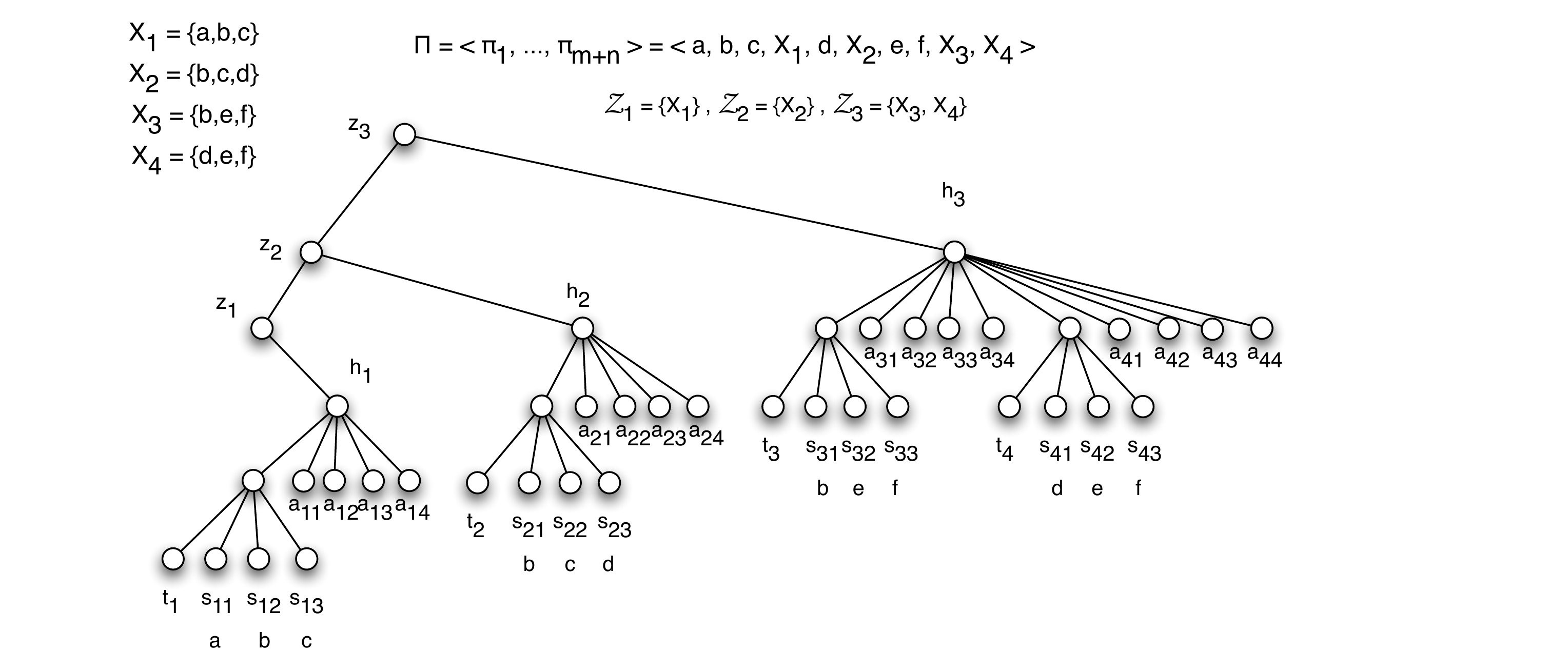}
\caption{The tree $T^b$ obtained from the instance ${\mathbb I} = (\{a,b,c,d,e,d,f\}, \{X_1, X_2, X_3, X_4\})$ of 3-bounded
X3C.} 
\label{fig:tree_b}
\end{figure}

\begin{figure}[!ht]
	\centering
		\includegraphics[scale=0.27]{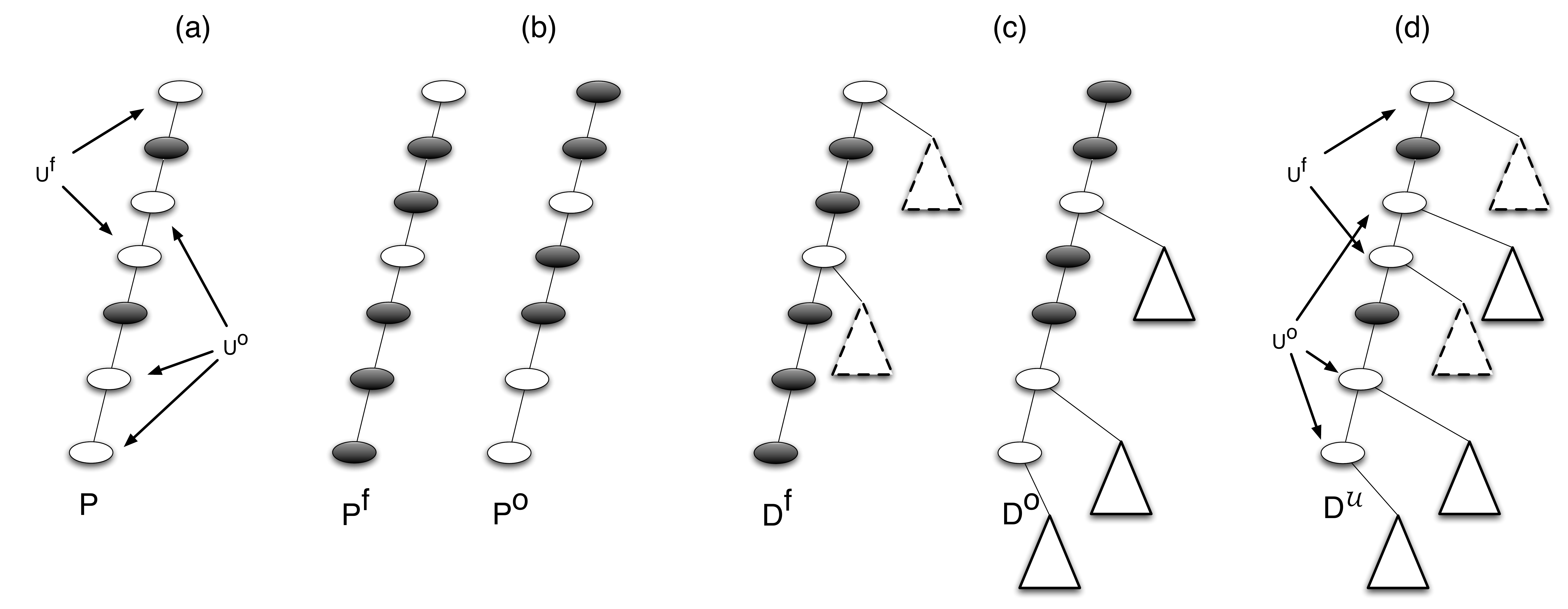}
	\caption{(a) PLP $P$ with partition ${\cal U} = \{U^f, U^o\}$ indicated.
	 The blank nodes are unassigned and the black ones are blocked. 
	 (b) PLP's $P^f$ and $P^o$. (c) The optimal EST's $D^f$ and $D^o$ and (d) the resulting  EST $D^{\cal U}$ constructed by taking the `union' of $D^f$ and $D^o$.}
	\label{fig:case2}
\end{figure}

\begin{figure}[!ht]
	\centering
		\includegraphics[scale=0.39]{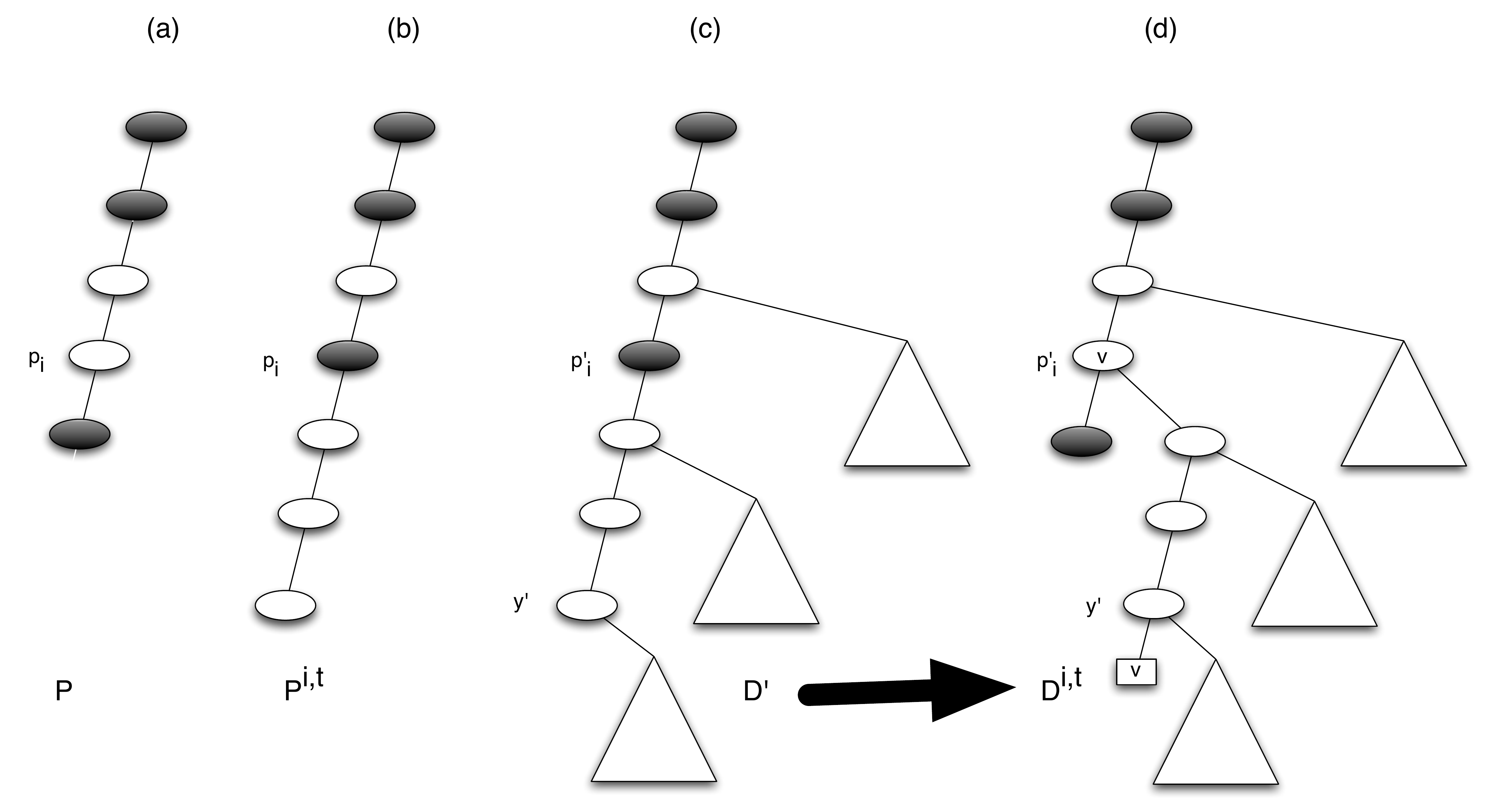}
	\caption{(a) PLP $P$ (b) PLP $P^{i,t}.$  (c)-(d) Construction of $D^{i,t}$---given in picture (d)---starting from an EST $D'$ given in 
	picture (c) ---for ${\cal P}^B(\{T_{c_1(v)}, \dots, T_{c_{\delta(v)}(v)}\}, P^{i,t}).$ 
	}
	\label{fig:case1}
\end{figure}

\end{document}